\DeclareMathOperator{\tr}{tr}
\newcommand{\proj}[1]{\left|#1\middle\rangle\!\middle\langle #1\right|}
\newcommand{\bra}[1]{\left\langle #1\right|}
\newcommand{\ket}[1]{\left|#1\right\rangle}
\newcommand{\eps}{\varepsilon}
\newcommand{\cX}{\mathcal{X}}
\newcommand{\cY}{\mathcal{Y}}
\newcommand{\cM}{\mathcal{M}}
\newcommand{\cS}{\mathcal{S}}
\newcommand{\cE}{\mathcal{E}}
\newcommand{\cSsub}{\mathcal{S}_{\bullet}} 
\newcommand{\cH}{\mathcal{H}}
\newcommand{\cN}{\mathcal{N}}
\newcommand{\rhot}{\tilde{\rho}}
\newcommand{\id}{\mathrm{id}}
\newcommand{\fail}{\varnothing}
\renewcommand{\succ}{\checkmark}
\newcommand{\qkdeb}{\mbox{\textnormal{\texttt{qkd\_eb$_{m,\mathrm{pe},\mathrm{ec},\mathrm{pa}}$}}}}
\newcommand{\qkdideal}{\mbox{\textnormal{\texttt{qkd\_ideal$_{m,\mathrm{pe},\mathrm{ec},\mathrm{pa}}$}}}}
\newcommand{\qkdpm}{\mbox{\textnormal{\texttt{qkd\_pm$_{M,m ,\mathrm{pe},\mathrm{ec},\mathrm{pa}}$}}}}
\newcommand{\qkdmod}{\mbox{\textnormal{\texttt{qkd\_modified$_{m,\mathrm{pe},\mathrm{ec},\mathrm{pa}}$}}}}
\newtheorem{lemma}{Lemma}
\newtheorem{theorem}[lemma]{Theorem}
\newtheorem{proposition}[lemma]{Proposition}
\newtheorem{corollary}[lemma]{Corollary}
\theoremstyle{definition}
\newtheorem{definition}{Definition}
\newcommand{\nc}[2]{#1 & #2\\}
\begin{document}

\title{A largely self-contained and complete security proof for quantum key distribution}

\author{Marco Tomamichel}
\affiliation{Centre for Quantum Software and Information, University of Technology Sydney, Australia}
\orcid{0000-0001-5410-3329}
\email{marco.tomamichel@uts.edu.au}
\homepage{http://www.marcotom.info}

\author{Anthony Leverrier}
\affiliation{Inria Paris, France}
\orcid{0000-0002-6707-1458}
\email{anthony.leverrier@inria.fr}
\homepage{https://who.paris.inria.fr/Anthony.Leverrier}
\maketitle

\begin{abstract}
In this work we present a security analysis for quantum key distribution, establishing a rigorous tradeoff between various protocol and security parameters for a class of entanglement-based and prepare-and-measure protocols. The goal of this paper is twofold: 1) to review and clarify the state-of-the-art security analysis based on entropic uncertainty relations, and 2) to provide an accessible resource for researchers interested in a security analysis of quantum cryptographic protocols that takes into account finite resource effects. For this purpose we collect and clarify several arguments spread in the literature on the subject with the goal of making this treatment largely self-contained.

More precisely, we focus on a class of prepare-and-measure protocols based on the Bennett-Brassard (BB84) protocol as well as a class of entanglement-based protocols similar to the Bennett-Brassard-Mermin (BBM92) protocol. We carefully formalize the different steps in these protocols, including randomization, measurement, parameter estimation, error correction and privacy amplification, allowing us to be mathematically precise throughout the security analysis. We start from an operational definition of what it means for a quantum key distribution protocol to be secure and derive simple conditions that serve as sufficient condition for secrecy and correctness. We then derive and eventually discuss tradeoff relations between the block length of the classical computation, the noise tolerance, the secret key length and the security parameters for our protocols. Our results significantly improve upon previously reported tradeoffs.
\end{abstract}

\section{Introduction}

Quantum key distribution (QKD) is a cryptographic task that allows two distant parties, Alice and Bob, to exchange secret keys and communicate securely over an untrusted quantum channel, provided they have access to an authenticated classical channel. 
The first QKD protocol, BB84, was proposed by \citet{bb84} more than three decades ago and the last 30 years have witnessed staggering experimental advances, making QKD the first quantum information technology. With the advent of quantum information theory, \citet{ekert91} offered a fruitful new perspective on quantum key distribution by casting it in terms of quantum entanglement and Bell nonlocality and it was quickly noted that the original BB84 protocol can be seen in this light as well~\cite{bennett92}. This new perspective was particularly useful for the development of formal security proofs of QKD.

Formalizing the intuitive security arguments accompanying the first protocols has proven to be challenging. Early proofs by \citet{Lo99}, \citet{SP00}, and \citet{mayers01} successfully attacked the problem in the asymptotic limit of infinitely many exchanged quantum signals (and unbounded classical computing power). A later work by \citet{koashi06} first brought to light that security can be certified using an entropic form~\cite{maassen88} of Heisenberg's uncertainty principle~\cite{heisenberg27}. However, these works all lacked a convincing treatment of the security tradeoff in a more realistic regime where the number of exchanged signals and the classical computing power (i.e., the length of the bit strings computations are performed on) are necessarily finite, and the final secret key is thus also of finite length. As absolute security is no longer feasible, the first and most crucial question arising in this finite regime is how to properly define approximate security of a cryptographic protocol. Following developments in classical cryptography, \citet{renner05} extended the concept of composable security to quantum key distribution and established a first security proof for finite key lengths. This security analysis essentially established a tradeoff between different parameters of a quantum key distribution protocol:

\begin{description}
  \item[Block length:] During the run of the protocol the two honest parties, typically called Alice and Bob, prepare, send and measure quantum signals and store the results of these measurements in bit strings that they store on their respective (classical) computers. These bit strings will eventually be used to check for the presence of an interfering eavesdropper and to compute a secret key. As these strings get longer it gets easier to guard against eavesdroppers and secret keys can be extracted more efficiently. On the other hand there are limitations on the length because we would like to start producing a secret key as early as possible\footnote{This is most relevant for implementations that suffer from a low rate of measurement events, e.g.\ entanglement-based implementations.} and because computations on longer strings get more and more difficult. The length of the bit strings used in the protocol is called the \emph{block length}.
  Current experimental and commercial implementations of quantum key distribution typically work with block lengths of the order $10^5$--$10^6$ whereas block lengths of the order $10^7$--$10^8$ can be achieved, but require an extreme stability of the system during the several hours required to collect the data \cite{comandar16}, \cite{jouguet13}.
  
  \item[Key length:] The \emph{secret key length} is the length of the secret key (in bits) that is extracted from a single block of measurement data. An ideal secret key is a uniformly random bit string perfectly correlated between Alice and Bob and independent of any information the eavesdropper might have collected after the run of the protocol. The ratio of the key length to the block length is a key performance indicator of quantum key distribution systems.

  \item[Security parameters:] Modern security definitions for quantum key distribution rely on approximate indistinguishability from an ideal protocol, which ensures that the resulting key can be safely used in any other (secure) application. In this case, the ideal protocol either has the two parties produce an ideal secret key or an abort flag that indicates that no secret key can be extracted, either because an eavesdropper is present, or more mundanely because the quantum channel is too noisy. The \emph{security parameter} is the distinguishing advantage between the real and the ideal protocols, given by the diamond norm distance between the two protocols. Evidently we would like this parameter to be very small and while there is no consensus on what value it should have we will take it to be $10^{-10}$ for our numerical examples.

  \item[Robustness:] According to the notion of security discussed above, a quantum key distribution protocol can be perfectly secure and completely useless because it always aborts. As an additional requirement we thus impose that the protocol succeeds with high probability when the quantum channel is subject to noise below a specific (and realistic) threshold. This describes the \emph{robustness} of the protocol against noise, that is, the probability that the protocol returns a nontrivial key for a given noise level. The noise model used should capture the dynamics of the quantum channel in the expected field operation; however, the exact specification of the noise model\,---\,and whether the noise is caused by an eavesdropper or just by the undisturbed operation of the channel\,---\,is independent of any security considerations and can thus be treated independently. The robustness, and more specifically the values of the channel parameters for which the robustness goes to zero, is an important figure of merit to compare the expected performance of various protocols. 
\end{description}

The tradeoffs between these parameters have been significantly improved since Renner's proof~\cite{renner05}, in particular by \citet{tomamichellim11} and \citet{hayashi11}, so that the proofs are now sufficiently tight to provide security for realistic implementations of quantum key distribution. 
The present analysis will mostly follow the approach in the former paper~\cite{tomamichellim11}.

So what justifies us revisiting this problem here?
Firstly, we believe that presenting a complete and rigorous security proof in a single article will make the topic of finite size security more accessible to researchers in quantum cryptography. Secondly, thanks to some improvements in the technical derivation and a steamlining of the analysis, our proof yields significantly stronger tradeoff relations between security and performance parameters. It is worth noting here that strengthening theoretical tradeoff relations of a QKD protocol has rather direct implications for practical implementations as it allows for the generation of more secure key at the same noise level without any changes to the hardware.
Thirdly, although all the necessary technical ingredients and conceptual insights are present in the literature, we were not able to find a concise security proof for any QKD protocol that satisfies the following two stringent criteria:
\begin{enumerate}
  \item The protocol is able to extract a composably secure key for reasonable parameters (i.e.\ realistic noise levels, security parameters and block lengths that can be handled with state-of-the-art computer hardware).
  \item The protocol and all the assumptions on the physical devices used in the protocol are completely specified and all aspects of the protocol are formalized, including the randomness that is required and all the communication transcripts that are produced. 
\end{enumerate}
The second point may appear trivial\,---\,but we believe the absence of a complete formalization of all aspects of a protocol found in many research papers presents a major obstacle in verifying the proofs and learning about the security of quantum key distribution and quantum cryptography in general. 
It is in fact common in much of the present literature to fully formalize some aspects of a security proof while keeping other aspects vague and informal\,---\,and this has lead to various misconceptions.

Let us illuminate this issue with an example. It is often argued (e.g.\ in~\cite{scarani08}) that collective attacks (where the eavesdropper attacks every quantum signal exchanged between Alice and Bob in the same way) are optimal for the eavesdropper using symmetry and de-Finetti arguments~\cite{renner07, christandl09}. To get such symmetry it is at some point or another used that measurements are performed in a random basis or that a random subset of raw key bits are used for parameter estimation. However, complete security proofs also must allow for the protocol to abort in case certain thresholds are not met, and one is then left to analyze the state of the system conditioned on the fact that the protocol does not abort.
This conditioning will in general introduce correlations between the state held by Alice and Bob and the seeds used to choose the measurement bases and parameter estimation subset, violating the strict symmetry assumptions. Even if these correlations are weak in typical cases, they prove problematic since they allow the eavesdropper to slightly influence Alice and Bob's measurement devices\footnote{An example is the following attack in the prepare-and-measure version of BB84: for each state sent by Alice, Eve randomly chooses to either measure it in the computational basis, or do nothing. Such an attack would only create errors when Bob measures in the Hadamard basis, and the conditioning on not aborting would therefore slightly favor measurement choices for Bob biased toward the computational basis.}, something which is usually explicitly forbidden in security proofs for quantum key distribution with trusted devices. Hence, many simple arguments based on symmetry or independent randomness simply do not go through without modification when the security proof is put under a microscope.

As mentioned above, the early asymptotic proofs fail with Point~1. Moreover, while Renner's analysis~\cite{renner05} gives bounds for finite keys, these are not sufficient to pass Point~1  since the bounds are not strong enough for realistic key lengths.\footnote{Unfortunately, de Finetti reductions often do not provide good bounds in practice and at least $10^5$ or $10^6$ uses of the quantum channel are typically required for the key rate to effectively become nonzero~\cite{scarani08,christandl09,sheridan10}.} 
More recent security proofs by \citet{tomamichellim11} and \citet{hayashi11} satisfy Point~1, but they are not fully formalized and thus do not satisfy Point~2.\footnote{For example, a small flaw in the formalization of the protocol in~\cite{tomamichellim11} has recently been pointed out by Pfister \emph{et al.}~\cite{pfister15}. While this does not suggest that the security guarantees in~\cite{tomamichellim11} are invalid, it is a stark reminder that Point~2 is often not taken seriously enough.} In fact, our requirements in Point~2 are very stringent and we are not aware of any security proof that has met this level of rigor, except arguably Renner's thesis~\cite{renner05}.  A recent security proof for the one-sided device-independent setting~\cite{tomamichelfehr12} satisfies Point 2 but provides a key rate that is not optimal asymptotically.

\paragraph{Results and Outline.}

In the present paper, we give a rigorous and largely self-contained security proof for QKD that satisfies the two conditions above. The proof is based on the security analysis in~\cite{tomamichellim11} and uses an entropic uncertainty relation~\cite{tomamichel11} as its main ingredient. A few additional technical results and modifications of previous results are needed. We hope that our proof is accessible to all researchers interested in the security of quantum cryptography. As such, our treatment does not require the reader to have prior knowledge of various tricks and security reductions in quantum cryptography, but presumes a solid understanding of the mathematical foundations of quantum information theory. 

The remainder of this manuscript is structured as follows.
First we introduce necessary notation and concepts in Section~\ref{sc:notation}.
In Part~\ref{part:eb} we analyze a class of simple entanglement-based protocols reminiscent of the BBM92 protocol~\cite{bennett92}. Section~\ref{sc:eb1} formally describes the class of protocols we are using (see also Table~\ref{tb:simple}). Section~\ref{sc:eb2} formally introduces our security definitions and claims. We discuss our results in Section \ref{sc:eb2.5} and provide a detailed security proof in Section~\ref{sc:eb3}.
In Part~\ref{part:pm}  we move on to a prepare-and-measure protocol that is essentially equivalent to BB84.
Section~\ref{sc:pm1} formally introduces this class of protocols (see also Table~\ref{tb:realistic}). We discuss our results in Section~\ref{sec:results-pm} and provide in Section~\ref{sec:proofpm} the security reduction from the prepare-and-measure protocol to the entanglement-based protocol.


\section{Formalism and notation}
\label{sc:notation}

We will summarize some concepts necessary for our formal security proofs here, assuming that the reader is familiar with the mathematical foundations of quantum information theory.
We refer to~\cite{mybook} for a comprehensive introduction into this mathematical toolkit. Sections~\ref{sec:not1}--\ref{sec:not4} are necessary for understanding our main exposition whereas the concepts introduced in Sections~\ref{sec:not5} will be employed only in the security proof.

\newpage
\subsection{Quantum systems, states and metrics}
\label{sec:not1}

Individual \emph{quantum systems} and the corresponding finite-dimensional Hilbert spaces are denoted by capital letters. The dimension of the system $A$ is denoted by $|A|$. A \emph{joint quantum system} $AB$ is defined via the tensor product of the corresponding Hilbert spaces of $A$ and $B$.  We use $[m]$ to denote the set $\{1, 2, \ldots, m\}$ and use $A_{[m]}$ to denote a joint quantum system comprised of quantum systems $A_1 A_2 \ldots A_m$. Similarly, if the subscript is a subset of $[m]$, we just refer to the subsystems in the subset. Let us also introduce the notation $\Pi_{m,k} := \{ \pi \subset [m] : |\pi| = k \}$, the set of subsets of $[m]$ of size $k$.

We write $\cS(A)$ to denote normalized states on $A$, i.e., positive semi-definite operators acting on the Hilbert space $A$ with unit trace. A state is called pure if it the corresponding operator has rank $1$. We will employ the trace distance between states, which is defined as
\begin{align}
\| \rho - \sigma \|_{\tr} :=  \sup_{P} \tr \big\{ P (\rho - \sigma) \big\} \, ,
\end{align}
where $P$ ranges over projectors, i.e.\ positive semi-definite operators with eigenvalues in $\{0, 1\}$.
In particular, we have $\| \rho - \sigma \|_{\tr} = \frac{1}{2}\| \rho - \sigma \|_{1}$, where $\|\cdot\|_1$ denotes the Schatten 1-norm. The trace distance has an immediate physical interpretation~\cite{helstrom76}: for two states with trace distance $\eps$, the maximum probability of distinguishing them with a single measurement equals $\frac12 (1 + \eps)$.

We also collect positive semi-definite operators with trace norm not exceeding 1 on $A$ in the set $\cSsub(A)$, and call them sub-normalized states.\footnote{The disk in the subscript of $\cSsub$ symbolizes the unit disk in trace norm.} Sub-normalized states will be very convenient for technical reasons as they allow us to represent the state of quantum systems and classical events simultaneously.
The following metric is very useful when dealing with sub-normalized states:
\begin{definition}[Purified distance] \label{df:pd} For $\rho_A, \sigma_A \in \cSsub(A)$, we define the \emph{generalized fidelity},
\begin{align}
  F(\rho_A, \sigma_A) &:= \bigg( \tr \Big\{ \sqrt{ \sqrt{\rho_A} \sigma_A \sqrt{\rho_A} } \Big\} + \sqrt{ 1-\tr\{\rho_A\}} \sqrt{1-\tr\{\sigma_A\}}  \bigg)^2 \,,
\end{align}
and the \emph{purified distance}, $P(\rho_A, \sigma_A) := \sqrt{1-F(\rho_A,\sigma_A)}$.
\end{definition}
The purified distance is a metric on sub-normalized states and satisfies~\cite[Lemma~2]{tomamichel09}
\begin{align}
  P(\rho_A, \sigma_A) \geq P\big(\mathcal{F}(\rho_A), \mathcal{F}(\sigma_A)\big) \label{eq:pd-dpi}
\end{align}
for every completely positive (CP) trace non-increasing map $\mathcal{F}$. 
This means in particular that the distance contracts when we apply a quantum channel to both states. 
An important property of the purified distance~\cite[Corollary~3.1]{mybook} is that for any two states $\rho_A, \sigma_A$ and any extension $\rho_{AB}$ of $\rho_A$, there exists an extension $\sigma_{AB}$ with $P(\rho_{AB}, \sigma_{AB}) = P(\rho_A, \sigma_B)$. (This property is not true in general for the trace distance.)
Moreover, it is related to the trace distance as follows~\cite[Lemma~6]{tomamichel09}:
\begin{align}
  \big\| \rho_A - \sigma_A \big\|_{\tr} + \frac12 \big| \tr\{\rho_A\} - \tr\{\sigma_A\} \big| \leq P(\rho_A, \sigma_A) \,.
\end{align}

\subsection{Classical registers and events}
\label{subsection:2.2}

We model discrete random variables by finite-dimensional quantum systems, called \emph{registers},  with a fixed orthonormal basis. For example, let $X \in \cX$ be a random variable with probability law $x \mapsto P_X(x)$. Then we write the corresponding quantum state as
\begin{align}
  \rho_X = \sum_{x \in \cX} P_X(x) \proj{x}_X , 
\end{align}
where $\{ \ket{x} \}_{x \in \cX}$ is an orthonormal basis of the space $X$. Conversely, we write
$\Pr[X = x]_{\rho} = P_X(x)$. 

More generally, the classical register might be correlated with a quantum system $A$, and this is modeled using \emph{classical-quantum (cq)} states:
\begin{align}
  \rho_{XA} = \sum_{x \in \cX} P_X(x) \proj{x}_X \otimes \rho_{A|X=x} \,,
\end{align}
where we use $\rho_{A|X=x}$ to denote the quantum state on $A$ conditioned on the register $X$ taking the value $x$.
We also write $\Pr[X=x]_{\rho} = \tr\{ \proj{x}_X \rho_{XA} \} = P_X(x)$. This convention is extended to arbitrary events defined on a classical register $X$, i.e.\ if $\Omega: \cX \to \{0, 1\}$ is an \emph{event}, we write
\begin{align}
  \Pr[\Omega]_{\rho} = \sum_{x \in \cX} P_X(x) \Omega(x) \quad \textrm{and} \quad 
  \rho_{XA \land \Omega} = \sum_{x \in \cX} P_X(x) \Omega(x) \proj{x}_X \otimes \rho_{A|X=x} \,,
\end{align}
a state that is generally sub-normalized. We will also write ${\rho}_{XA | \Omega} = \Pr[\Omega]_{\rho}^{-1} \rho_{XA \land \Omega}$ for the conditional state. For any event $\Omega: \mathcal{X} \to \{ 0, 1\}$ we denote its complement on $\cX$ by $\lnot \Omega$.

\subsection{Quantum channels and measurements}

A \emph{quantum channel} $\mathcal{E}: A \to B$ is a completely positive trace-preserving (CPTP) map that maps operators on $A$ to operators on $B$. Prime examples of quantum channels are the trace, denoted $\tr$, and the partial trace over system $A$, denoted $\tr_A$. We will encounter the diamond distance between CPTP maps, which we here define as
\begin{align}
  \| \mathcal{E} - \mathcal{F} \|_{\diamond} := \sup_{\rho_{AC} \in \cS(AC)} \| \mathcal{E}(\rho_{AC}) - \mathcal{F}(\rho_{AC}) \|_{\tr} \,,
\end{align}
where the optimization goes over joint states on $A$ and an auxiliary system $C$, and we can assume without loss of generality that $|C| \leq |A|$. The diamond distance also inherits the physical interpretation of the trace distance: for two quantum channels with diamond distance $\eps$, the maximum probability of distinguishing them by preparing a state on the input system and an ancilla system and then measuring the joint system after applying the channel equals $\frac12 (1 + \eps)$.

A \emph{generalized measurement} on $A$ is a set of linear operators $\{ M_A^x \}_{x \in \cX}$ such that 
\begin{align}
    \sum_{x \in \cX} (M_A^x)^{\dagger} (M_A^x) = 1_A \,,  \label{eq:measex}
\end{align}
where $1_A$ denotes the identity operator on $A$. A measurement on $A$ can be represented as a CPTP map $\mathcal{M}_{A \to X}$ that maps states on a quantum system $A$ to measurement outcomes stored in a register $X$. The measurement in~\eqref{eq:measex} applied to a bipartite state $\rho_{AB}$ yields
\begin{align}
  \mathcal{M}_{A \to X} : \rho_{AB} \mapsto \sigma_{XB} = \sum_{x \in \cX} \proj{x}_X \otimes  \tr_A\Big\{ M_A^x\, \rho_{AB} \big( M_A^{x} \big)^{\dag} \Big\} ,
\end{align}
where $\sigma_{XB}$ is now a (normalized) classical-quantum state.
Finally, let $f: \cX \to \cY$ be a function acting on two sets $\cX$ and $\cY$. We denote by $\mathcal{E}_f: X \to XY$ the corresponding CPTP map
\begin{align}
  \mathcal{E}_f [ \cdot ] = \sum_{x \in \cX} \ket{f(x)}_{Y} \proj{x}_X \cdot \proj{x}_X \bra{f(x)}_{Y} \,
\end{align}
that acts on general quantum states.
Note that we defined the map $\mathcal{E}_f$ such that the input register $X$ is kept intact and the operation is deterministic and invertible.

\subsection{Universal hash functions}
\label{sec:not4}
Universal hashing is used (at least\footnote{Universal hashing is also used to provide authentication of the classical channel, but we will not discuss this issue here.}) twice in the analysis of the quantum key distribution protocol: first in the error correction step to ensure the correctness of the protocol (Theorem~\ref{th:correct}), and then in the privacy amplification procedure to guarantee the secrecy of the final key. 
\begin{definition}[Universal$_2$ Hashing] 
Let $\mathcal{H} =\{h\}$ be a family of functions from $\mathcal{X}$ to $\mathcal{Z}$. The family $\mathcal{H}$ is said to be \emph{universal$_2$} if $\Pr\,[H(x) = H(x') ] = \frac{1}{|\mathcal{Z}|}$ for any pair of distinct elements $x, x' \in \mathcal{X}$, when $H$ is chosen uniformly at random in $\mathcal{H}$.
\end{definition}
In this work we do not need to specify any particular family of hash functions, and it suffices to note that such families of functions always exist if $|\mathcal{X}|$ and $|\mathcal{Z}|$ are powers of $2$. (See, e.g.,~\cite{carter79,wegman81}.)

\subsection{Conditional entropies}
\label{sec:not5}

Conditional entropies measure the amount of uncertainty present in a random variable from the perspective of an observer with access to correlated side information. Here we are particularly interested in observers that have access to a quantum system that serves as side information, for example the eavesdropper's memory after interfering with the quantum communication during the run of a quantum key distribution protocol. The most common measure of entropy is the Shannon or von Neumann entropy, defined as $H(X)_{\rho} := -\sum_{x \in \cX} P_X(x) \log P_X(x)$. However, while this entropy has various operational interpretations in the asymptotic limit of infinite repetitions of an information processing task, it is insufficient to describe finite size effects. On the other hand, smooth min- and max-entropy allow us to capture such finite size effects and share many properties with the von Neumann entropy. We will not need the full generality of the smooth entropy formalism here and instead refer to~\cite{mybook} for a comprehensive introduction. 

Min- and max-entropy are natural generalizations of conditional R\'enyi entropies~\cite{renyi61} to the quantum setting and were first proposed by~\citet{renner05} and~\citet{koenig08}. The conditional min-entropy captures how difficult it is for an observer with access to quantum side information to guess the content of a classical register.
   For a bipartite cq state $\rho_{XB} \in \cS(AB)$, we define
   \begin{align}
     p_{\rm guess}(X|B)_{\rho} = \sup_{ \{ E_B^x \}}\ \sum_{x \in \cX} \Pr[X = x]_{\rho} \tr \Big\{ E_B^x \, \rho_{B|X=x} \big(E_B^x\big)^\dag\Big\}, \label{eq:gp}
   \end{align}
   where the optimization goes over all generalized measurements on $B$.

The conditional min-entropy for a cq state is then defined as $H_{\min}(X|B)_{\rho} := - \log p_{\rm guess}(X|B)_{\rho}$. 
For later convenience we introduce the measure more generally for any bipartite, potentially sub-normalized, state:
\begin{definition}[Min-entropy] For any bipartite sub-normalized state $\rho_{AB} \in \cSsub(AB)$, we define
\begin{align}
  H_{\min}(A|B)_{\rho} &:=  \sup \big\{ \lambda \in \mathbb{R} : \exists\, \sigma_B \in \cS(B) \textrm{ such that } \rho_{AB} \leq 2^{-\lambda} \id_A \otimes \sigma_B ,  \big\} \, .
\end{align}
\end{definition}
Showing equivalence between this definition and the special case of cq states in~\eqref{eq:gp} involves semidefinite-programming duality~\cite{koenig08} and is outside the scope of this work. We will also encounter the max-entropy, which is a natural dual of the min-entropy in the following sense:
\begin{definition}[Max-entropy] For any bipartite sub-normalized state $\rho_{AB} \in \cSsub(AB)$, we define 
\begin{align}
H_{\max}(A|B)_{\rho} := -H_{\min}(A|C)_{\rho} \,,
\end{align}
where $\rho_{ABC}$ is any pure state with $\tr_C \{ \rho_{ABC} \} = \rho_{AB}$.
\end{definition}

The max-entropy is a measure of the size of the support of $X$. In particular, we have the following ordering of unconditional entropies:
\begin{align}
  H_{\min}(X)_{\rho} \leq H(X)_{\rho} \leq H_{\max}(X)_{\rho} \leq \log \big| \{ x \in X : \Pr[X = x] _\rho> 0 \} \big| ,
\end{align}
which is a consequence of the monotonicity of the R\'enyi entropies~\cite{renyi61} in the order parameter. Here and throughout this article $\log$ denotes the binary logarithm.

We will need a slight generalization of the concepts of conditional min- and max-entropy, which takes into account a ball of states close to $\rho_{AB}$ in terms of the purified distance introduced in the previous section.
\begin{definition}[Smooth Entropies]
  For $\rho_{AB} \in \cSsub(AB)$ and $\eps \in \Big[0, \sqrt{\tr(\rho_{AB})} \Big)$, we define
  \begin{align}
    H_{\min}^{\eps}(A|B)_{\rho} := \sup_{ \rhot_{AB} \in \cSsub(AB) , \atop  P(\rhot_{AB}, \rho_{AB}) \leq \eps } H_{\min}(A|B)_{\rhot}, \qquad
    H_{\max}^{\eps}(A|B)_{\rho} := \inf_{ \rhot_{AB} \in \cSsub(AB) , \atop  P(\rhot_{AB}, \rho_{AB}) \leq \eps } H_{\max}(A|B)_{\rhot} \,.
  \end{align}
\end{definition}
In the above definitions we can replace the supremum and infimum with a maximum and minimum, respectively.
Roughly speaking, the smooth conditional min-entropy of $X$ given $B$ approximates how much randomness that is uniform for an observer with access to $B$ can be extracted from $X$. (This will be made formal when discussing the Leftover Hashing Lemma in Section~\ref{sc:hash}.)
The smooth entropies inherit the duality relation~\cite{tomamichel09}. For any pure sub-normalized state $\rho_{ABC} \in \cSsub(ABC)$, we have
\begin{align}
  H_{\min}^{\eps}(A|B)_{\rho} = - H_{\max}^{\eps}(A|C)_{\rho} \,. \label{eq:dual}
\end{align}

The smooth entropies also satisfy a data-processing inequality (DPI)~\cite[Theorem~18]{tomamichel09}.
  For any cq state $\rho_{XB}$ and any completely positive trace-preserving map $\mathcal{E}_{B \to C}$, we have
  \begin{align}
     H_{\min}^{\eps}(X|B)_{\rho} \leq H_{\min}^{\eps}(X|C)_{\mathcal{E}(\rho)}, \quad \textrm{and} \quad
     H_{\max}^{\eps}(X|B)_{\rho} \leq H_{\max}^{\eps}(X|C)_{\mathcal{E}(\rho)} \,. \label{eq:dpi}
  \end{align}
This expresses our intuition that performing any processing of the side information can at most increase our uncertainty about $X$.
Moreover, we need a simple chain rule~\cite[Lemma~11]{winkler11}, which states that
\begin{align}
  H_{\min}^{\eps}(A|BX)_{\rho} \geq H_{\min}^{\eps}(A|B)_{\rho} - \log |X| \label{eq:min-chain-rule}
\end{align}
where $X$ is a (classical) register of dimension $|X|$. This corroborates our intuition that an additional bit of side information on $X$ cannot decrease our uncertainty about $X$ by more than one bit.

We have defined all these quantities for sub-normalized states so that we can easily treat restrictions to events. Let $\rho_{AXBY} \in \cS(ABXY)$ be classical on $X$ and $Y$ and let $\Omega: \cX \times \cY \to \{0, 1\}$ be an event. Then we denote by $H_{\min}(AX \land \Omega |BY)_{\rho}$ the conditional min-entropy evaluated for the state $\rho_{AXBY \land \Omega}$. Similarly, $H_{\min}(AX \land \Omega |B)_{\rho}$ denotes the conditional min-entropy evaluated for the marginal $\rho_{AXB \land \Omega} = \tr_Y \{ \rho_{AXBY \land \Omega} \}$. These states are in general sub-normalized. The same notational convention is used for (smooth) min- and max-entropy.

\part{Entanglement-based protocol}
\label{part:eb}

\begin{table}[h!]
{\footnotesize
\begin{tabular}{ll}
\nc{$\fail$, $\succ$}{Symbols for abort and passing, respectively}
\nc{$M_{A_i}^{\phi, x}$}{Measurement operator acting on register $A_i$ with setting $\phi$ and outcome $x$}
\nc{$c_i$}{Parameter quantifying the quality of the measurement on register $A_i$}
\nc{$\bar{c}$}{Parameter quantifying the overall (average) quality of measurements on $A$}
\hline
\nc{$m$}{Total number of quantum systems shared and measured by Alice and Bob}
\nc{pe}{Parameter estimation scheme: $\mathrm{pe} = \{ k, \delta \}$}
\nc{$k$}{Length (in bits) of the raw key used for parameter estimation}
\nc{$n = m - k$}{Length (in bits) of the raw key used for key distillation}
\nc{$\delta$}{Threshold for the parameter estimation test}
\nc{test}{Test function used in the parameter estimation step}
\nc{ec}{Error correcting scheme: $\mathrm{ec} = \{ t, r, \mathrm{synd}, \mathrm{corr},\cH_{\mathrm{ec}} \}$}
\nc{$r$}{Length (in bits) of the error correction syndrome}
\nc{$t$}{Length (in bits) of the hash used for verification in the error correcting scheme}
\nc{synd}{Function computing the error syndrome}
\nc{corr}{Function that calculate the corrected string}
\nc{$\mathcal{H}_{\mathrm{ec}}$}{Universal$_2$ family of hash functions used in the error correcting scheme}
\nc{pa}{Privacy amplification scheme: $\mathrm{pa} = \{ \ell, \cH_{\mathrm{pa}} \}$}
\nc{$\ell$}{Length (in bits) of the final key}
\nc{$\mathcal{H}_{\mathrm{pa}}$}{Universal$_2$ family of hash functions used in the privacy amplification scheme}

\hline
\nc{$A$, $B$}{Alice's and Bob's initial quantum system}
\nc{$E$}{Eve's quantum memory}
\nc{$\mathcal{M}_{A \to X |S}$}{Measurement map applied on register $A$ with setting $S$ and storing the result in register $X$}
\nc{$V, W$}{Register for Alice's and Bob's classical bits used for parameter estimation}
\nc{$X, Y$}{Register for Alice's and Bob's classical bits used for key distillation}
\nc{$Z$}{Register for Alice's syndrome during error correction}
\nc{$T$}{Register containing the hash of Alice's raw key during error correction}
\nc{$K_A$, $K_B$}{Register for Alice's and Bob's final keys}
\nc{$S^\Phi$}{Seed for the choice of the measurement bases in the idealized protocol}
\nc{$S^\Pi$}{Seed for the choice of the random subset $\pi \in \Pi_{m,k}$ used for parameter estimation}
\nc{$S^\Xi$}{Seed for the choice of the measurement bases for the subsystems used for parameter estimation}
\nc{$S^\Theta$}{Seed for the choice of the measurement bases for the subsystems used for key distillation}
\nc{$S^{H_{\mathrm{ec}}}$}{Seed for the choice of the hash function used in the error correction test}
\nc{$S^{H_{\mathrm{pa}}}$}{Seed for the choice of the hash function used in the privacy amplification step}
\nc{$S$}{Register corresponding to all the seeds, $S = (S^\Phi, S^\Pi, S^\Xi, S^\Theta, S^{H_{\mathrm{pe}}}, S^{H_{\mathrm{ec}}}, S^{H_{\mathrm{pa}}})$.}
\nc{$F^{\mathrm{pe}}$}{Flag for the parameter estimation test}
\nc{$F^{\mathrm{ec}}$}{Flag for the error correction test}
\nc{$F$}{Register corresponding to all the flags, $F = (F^{\mathrm{pe}},F^{\mathrm{ec}})$}
\nc{$C^V$}{Transcript of the register $V$ sent during parameter estimation}
\nc{$C^Z, C^T$}{Transcripts of the registers $Z$ and $T$ sent during error correction}
\nc{$C$}{Register containing all the communication transcripts, $C = (C^V, C^Z, C^T)$}

\hline
\nc{$\rho$}{Quantum state before any measurement took place}
\nc{$\tau$}{Quantum state after the registers used for parameter estimation have been measured}
\nc{$\sigma$}{Quantum state once Alice and Bob's quantum registers have been entirely measured}
\nc{$\omega$}{Quantum state describing the final output of the protocol}
\end{tabular}
}
\caption{Overview of the nomenclature and notation used in Part~\ref{part:eb}.}
\label{tb:not-eb}
\end{table}

\section{Formal description of the entanglement-based protocol}
\label{sc:protocol}
\label{sc:eb1}

We first focus on class of simple entanglement-based QKD protocols.
We give an overview of the protocols in Table~\ref{tb:simple}. Section~\ref{sec:assumpt} discusses the assumptions that go into our model, Section~\ref{sc:params} presents the protocol parameters, and the detailed mathematical description of the individual steps follows in Section~\ref{sec:model}. 

Let us emphasize that by \emph{simple} protocols, we mean that we restrict our attention to protocols where the \emph{sifting} procedure is essentially given for free, meaning that Alice and Bob are assumed to initially share a quantum state on which all the measurements are performed. Relatedly, we also do not allow for strategies where the measurement settings are biased towards a specific value.\footnote{Such strategies are advocated in the literature in order to increase the secret key rate by minimizing the cost of sifting~\cite{lo04}.} As discussed in Part \ref{part:pm}, the sifting procedure can be analyzed separately under certain assumptions.

\subsection{Assumptions of our model}
\label{sec:assumpt}
\label{sc:assumeb}

Every mathematical model of physical reality requires some assumptions, and in cryptography it is important to discuss these assumptions since if they are not met by an implementation then the security guarantees derived here are also not applicable to this implementation.

\paragraph*{Finite-dimensional quantum systems:}

We assume that Alice's and Bob's relevant quantum degrees of freedom can be effectively represented on a finite-dimensional Hilbert space. (This requirement is not strictly necessary to show security but allows us to circumvent some technical pitfalls.)

\paragraph*{Sealed laboratories:}

We assume that the laboratories of Alice and Bob are spatially separated. This allows us to model joint quantum systems $AB$ shared between Alice and Bob as tensor products of respective local Hilbert spaces $A$ and $B$. Moreover, an easily overlooked (an in practice hard to ensure) assumption we need is that we control exactly what information is released from Alice and Bob's laboratory.

\paragraph*{Random seeds:}

We assume that Alice has access to uniform randomness (uniformly random seeds). In practice, the seeds can be produced by a trusted quantum random number generator in Alice's lab.\footnote{See, for instance, ~\cite{frauchiger13} for an analysis of realistic quantum random number generators explaining how randomness originating from quantum processes can be turned into ideal seeds.}

\paragraph{Authenticated communication channel:}
We assume that Alice and Bob share an authenticated public (classical) communication channel.
Everything that is communicated over this channel will be in the public domain and is thus treated as an output of the protocol. The authentication of the classical channel can be obtained with information-theoretic security by tagging every classical message~\cite{wegman81}. A more detailed discussion of authentication for QKD is beyond the scope of the present work, and the interested reader is referred to~\citet[Appendix D]{portmann14}.

\paragraph*{Deterministic detection:}

We further assume that Alice and Bob's measurement devices always output a valid outcome, either $0$ or $1$. This is unrealistic in practice since it is often the case that detectors will not detect the quantum system (due to losses or imperfect detection efficiency). A simple fix is then to flip a coin and use the resulting bit as the measurement output. Unfortunately, this solution artificially decreases the robustness of the protocol beyond what is usually tolerable in a practical setting. Another much more practical solution consists in discarding these ``no detection'' events, but this should be done with care and requires extra-assumptions about the measurement devices to prevent various types of side-channel attacks such as that of~\citet{lydersen10}. We will discuss this solution in more detail in Part \ref{part:pm} of this work.

\paragraph*{Commuting measurements:}

The block length is given by a protocol parameter, $m$, which will be discussed in Section~\ref{sc:params}. For Alice and Bob to run a protocol with block length $m$, we need to assume that both can perform up to $m$ measurements (with either one of two possible settings) on their share of the quantum state in such a way that the order in which they do these measurements does not affect the resulting measurement outcome distribution. 
This is a standard assumption in the model of trusted measurement devices (by opposition to device-independent cryptography) and ensures that there are no memory effects in the measurement devices. 

More formally, we assume that Alice's and Bob's share of the quantum system can be decomposed into $m$ individual quantum systems, $A \equiv A_{[m]} = A_1 A_2 \ldots A_m$ and $B \equiv B_{[m]} = B_1 B_2 \ldots B_m$ and that the measurements can be represented as operators acting on the individual subsystems. We model Alice's $i$-th measurement with setting $\phi \in \{0,1\}$ by a binary generalized measurement $\{ M_{A_i}^{\phi,x} \}_{x \in \{0,1\}}$ acting on subsystem $A_i$. The index $x$ ranges over the two possible outcomes of Alice's measurement. 
Analogously, Bob's $i$-th measurement with setting $\phi \in \{0,1\}$  is a binary generalized measurement $\{ M_{B_i}^{\phi,y}  \}_{y \in \{0,1\}}$ acting on subsystem $B_i$. The index $y$ ranges over the two possible outcomes of Bob's measurement.

\paragraph*{Complementarity of Alice's measurements:}  
  
The exact description of the measurement devices will not be relevant for our derivations. However, we will need to assume that Alice's measurements are sufficiently complementary, a property that is encapsulated by the average overlap, $\bar{c}(m,n)$ that we introduce next.
Let $m$ be the block length and $n$ the number of bits used for key extraction (see Section~\ref{sc:params} for a discussion of the protocol parameters). Let us define
\begin{align}
  c \big( \{M^x\}_x, \{N^y\}_y \big) := \max_{x,y \in \{0,1\}} \left\| M^x \big(N^y\big)^{\dag}
    \right\|_{\infty}^2, \quad \textrm{and} \quad c_i := c\Big( \big\{ M_{A_i}^{0,x} \big\}_x, \big\{ M_{A_i}^{1,y} \big\}_y \Big) \,.
\label{eq:cdef}
 \end{align}
In an ideal physical implementation of the protocol with complementary measurements (for example in the computational and Hadamard basis), we would have $c_i = \frac12$ for all $i \in [m]$. In realistic implementations, its value will be larger. We assume that there exists a reliable upper bound on $c_i$. More precisely, we assume that
\begin{align}
  \bar{c}(m, n) := \max_{\pi \in \Pi_{m,n}} \left( \prod_{i \in \pi} c_i \right)^{\frac1{n}} \leq \bar{c} \label{eq:cbar} \,.
\end{align}

We always have $c_i \in [0,1]$ and the condition $\bar{c} <1$ is necessary to ensure secrecy.
As long as the commuting measurement assumption holds, the parameter $\bar{c}$ can in principle be measured directly in an experiment\,---\,even if the operators $\{ M_{A_i}^{\phi,x} \}_{\phi,x}$ are unknown.\footnote{See, e.g.,~\cite{haenggi11,lim12} for an estimation strategy based on Bell tests.}
  
For Bob we do not need to assume a bound on the complementarity parameter. (We only need to assume that the measurements commute, as described in the previous item.)

\subsection{Protocol parameters and overview}
\label{sc:params}

\begin{table}[t!]
\fbox{%
\begin{minipage}{\textwidth}
  
  { $(K_A, K_B, S, C, F) = $}
  {\large\qkdeb}
  { $\!\!\big( \rho_{AB} \big)$:}
       
  \begin{description}
  
    \item[Input:] Alice and Bob are given a state $\rho_{AB}$, where $A \equiv A_{[m]}$ and $B \equiv B_{[m]}$ are comprised of $m$ quantum systems each. 
    
    \item[Randomization:]
    They agree on a random string $\Phi \in \{0,1\}^m$, a random subset $\Pi \in \Pi_{m,k}$, and random hash functions $H_{\mathrm{ec}} \in \cH_{\textrm{ec}}$ as well as $H_{\mathrm{pa}} \in \cH_{\textrm{pa}}$. The corresponding uniformly random seeds are denoted $S = (S^{\Phi}, S^{\Pi}, S^{H_{\textrm{ec}}}, S^{H_{\textrm{pa}}})$.
        
    \item[Measurement:]
    Alice and Bob measure the $m$ quantum systems with the setting $\Phi$. They store the binary measurement outcomes in two strings, the \emph{raw keys}. These are denoted $(X, V)$ and $(Y, W)$ for Alice and Bob, respectively. Here $V, W$ are of length $k$ and correspond to the indices in $\Pi$, whereas $X, Y$ of length $n$ correspond to indices not in $\Pi$.

    \item[Parameter Estimation:]
    Alice sends $V$ to Bob, the transcript is denoted $C^V$. Bob compares $V$ and $W$. If the fraction of errors exceeds $\delta$, Bob sets the flag $F^{\mathrm{pe}} = \fail$ and they abort.  
    Otherwise he sets $F^{\mathrm{pe}} = \succ$ and they proceed.

   \item[Error Correction:]
  Alice sends the syndrome $Z = \mathrm{synd}(X)$ to Bob, with transcript $C^Z$. Bob computes $\hat{X} = \mathrm{corr}(Y, Z)$.
  
    To verify the success of the error correction step, Alice computes the hash $T = H_{\mathrm{ec}}(X)$ of length $t$ and sends it to Bob, with transcript $C^T$.
    Bob computes $H_{\mathrm{ec}}(\hat{X})$. If it differs from $T$, he sets the flag $F^{\mathrm{ec}} = \fail$ and they abort the protocol.  
    Otherwise he sets $F^{\mathrm{ec}} = \succ$ and they proceed.    
    
  \item[Privacy Amplification:] They compute keys $K_A = H_{\mathrm{pa}}(X)$ and $K_B = H_{\mathrm{pa}}(\hat{X})$ of length $\ell$.
  
  \item[Output:] The output of the protocol consists of the keys $K_A$ and $K_B$, the seeds $S = (S^{\Phi}, S^{\Pi}, S^{H_{\textrm{ec}}}, S^{H_{\textrm{pa}}})$, the transcript $C = (C^V, C^Z, C^T)$ and the flags $F = (F^{\mathrm{pe}}, F^{\mathrm{ec}})$.
  In case of abort, we assume that all registers are initialized to a predetermined value.
  
  \end{description}
\end{minipage}
}
  \caption{Simple QKD Protocol. The precise mathematical model is to be found in Section~\ref{sec:model}.}
  \label{tb:simple}
\end{table}

From an information theoretic and mathematical point of view, an entanglement-based QKD protocol is simply a completely positive trace-preserving (CPTP) map composed of local operations and classical communication (LOCC) that takes a bipartite state $\rho_{AB}$ as an input and either aborts or returns two classical binary strings, the keys, which should ideally be identical and independent of the knowledge of any third party having access to a purifying system of $\rho_{AB}$ and to the transcript of the communication performed by the protocol.

We consider protocols \qkdeb{} that are parametrized by the block length, $m$, and the sub-protocols for parameter estimation, error correction and privacy amplification, respectively denoted by $\mathrm{pe}$, $\mathrm{ec}$, and $\mathrm{pa}$.

\begin{itemize}
\item
The block length, $m \in \mathbb{N}$, determines the number of individual quantum systems that are shared between Alice and Bob, and thus available to them for parameter estimation and key extraction.

\item
Parameter estimation is characterized by a tuple $\mathrm{pe} = \{ k, \delta \}$, where $k \in \mathbb{N}, k \leq m$ determines the number of quantum systems used for parameter estimation and $\delta \in (0,\frac12)$ is the tolerated error rate.
Let us for later convenience also define $n := m - k$ to denote the number of quantum systems used for key generation. 
 
 \item
  The error correcting scheme is described by a quintuple $\mathrm{ec} = \{ t, r, \mathrm{synd}, \mathrm{corr},\cH_{\mathrm{ec}} \}$. Here, $r \in \mathbb{N}$ is the length (in bits) of the error correction syndrome.
  Moreover, $\mathrm{synd}$ and $\mathrm{corr}$ are functions of the form $\mathrm{synd}: \{0,1\}^n \to \{0,1\}^r$ and $\mathrm{corr}: \{0,1\}^n \times \{0,1\}^r \to \{0,1\}^n$ used to compute the error syndrome and calculate the corrected string, respectively. We do not need to assume anything about the structure of this code. To fix ideas, let us just note that there exist good error correction codes with $r \approx n h(\delta)$, where $h$ is the binary entropy function and $\delta$ is the number of errors to be corrected.\footnote{For example, $\mathrm{synd}$ could be a linear code described by an $r \times n$ parity check matrix $H$ such that $\mathrm{synd}(x) = H x$. Moreover, $\mathrm{corr}$ can be any decoder, for example the (optimal) maximum likelihood decoder, but also a more practical suboptimal iterative decoder.} 

  Finally, $t\in \mathbb{N}$ is the length (in bits) of the hash used for verification and $\cH_{\mathrm{ec}} := \big\{h_{\mathrm{ec}}: \{0,1\}^n \rightarrow \{0,1\}^{t} \big\}$ is a 
universal$_2$ family of hash functions. We will see in Theorem~\ref{th:correct} that the size $t$ only depends logarithmically on the targeted correctness parameter.
 
 \item
  Privacy amplification is characterized by a tuple $\mathrm{pa} = \{ \ell, \cH_{\mathrm{pa}} \}$, where $\ell \in \mathbb{N}$ with $\ell \leq n$ is the length (in bits) of the extracted key and
  $\cH_{\mathrm{pa}}  := \big\{h_{\mathrm{pa}}: \{0,1\}^n \rightarrow \{0,1\}^{\ell} \big\}$ is
  a universal$_2$ family of hash functions.

  Note that $\ell$, the length of the final key, is fixed. It is in principle possible to design adaptive protocols where the final key length is chosen after parameter estimation, but this is beyond the scope of this work. 

 \end{itemize}

  This allows us to define a family of protocols \qkdeb{} in Table~\ref{tb:simple}. Note that any such protocol is simply a completely positive trace-preserving map that maps bipartite quantum states shared between Alice and Bob onto probability distributions of the classical outputs, and we will define their exact operation in Section~\ref{sec:model}.

\subsection{Exact mathematical model of the protocol} 
\label{sec:model}

  Here we describe in detail the mathematical model underlying the protocol in Table~\ref{tb:simple}.
  It is worth emphasizing that the eavesdropper does not appear anywhere in this description, but will of course be required when assessing the security of the protocol as discussed in Section \ref{sc:eb2}.

   \paragraph*{Input:}
    Alice and Bob are given a state
     $\rho_{A B}$, where $A = A_{[m]} = A_1 A_2 \ldots A_m$ consists of $m$ quantum systems of arbitrary, finite dimension, $B = B_{[m]} = B_1 B_2 \ldots B_m$ consists of $m$ quantum systems of arbitrary, finite dimension. 
     Note that apart from the above structure, the state $\rho_{A B}$ is fully general. The situation is depicted in Figure~\ref{fig:state_input}.

   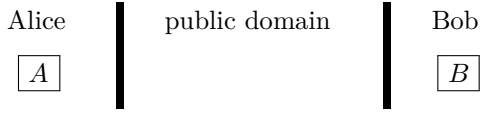
\begin{figure}[h!]
\centering
\begin{picture}(200, 40)
  \put(11, 30){Alice}
  \put(70, 30){public domain}
  \put(170, 30){Bob}
  \put(15, 10){\boxed{\phantom{X}}}
  \put(19, 10){$A$}
  \put(172, 10){\boxed{\phantom{X}}}
  \put(176, 10){$B$}
  \linethickness{1mm}
  \put(50, 0){ \line(0, 1){40} }
  \put(150, 0){ \line(0, 1){40} }
\end{picture}
\caption{State of the classical and quantum systems at the beginning of the protocol. The initial state is denoted $\rho_{AB}$.}
\label{fig:state_input}
\end{figure}

    \paragraph*{Randomization:}
    
    We model the randomization by random seeds (uniform random variables), shared between Alice and Bob over the public authenticated channel. These random seeds are represented by a quantum state $\rho_{S}$ which is assumed to be maximally mixed and independent of $\rho_{AB}$. The situation after randomization is depicted in Figure~\ref{fig:state_rand}. Let us now detail the content of the system $S$.

    The first random variable is a random basis choice for each quantum system. 
     This is modeled as a register $S^{\Phi}$ in the state
     \begin{align}
     \rho_{S^\Phi} = \sum_{ \phi \in \{0,1\}^m } \frac{1}{2^m}\, \proj{\phi}_{S^\Phi} ,
     \end{align}
     where $\{ |\phi\rangle \}_{\phi \in \{0,1\}^m}$ is an orthonormal basis of the space $S^{\Phi}$ and $\phi = \phi_{[m]} = (\phi_1, \phi_2, \ldots, \phi_m)$ with $\phi_i \in \{0,1\}$. The total state at the beginning of the protocol is thus of the form
     $\rho_{AB} \otimes \rho_{S^\Phi}$.
     
    The seed for the choice of the random subset is denoted $S^{\Pi}$ and is initially in the state
    \begin{align}
       \rho_{S^\Pi} = \sum_{ \pi \in  \Pi_{m,k} } \frac{1}{ {m \choose k} }\, \proj{\pi}_{S^\Pi} \,,
     \end{align}
     where $\{ |\pi\rangle \}_{\pi \in \Pi_{m,k}}$ is an orthonormal basis of the space $S^{\Pi}$. For any $\pi \in \Pi_{m,k}$, we denote its $k$ elements by $\pi_i$, for $i \in [k]$ and we denote by $\bar{\pi} \in [m]$ the complement of $\pi$. 
     
     At this point we reorder the measurement settings in $S^{\Phi}$ into two parts: the settings to be used for measuring quantum systems in $\pi$ will be stored in a register $S^{\Xi}$ and the settings to be used for measuring the remaining $n$ quantum systems in $\bar{\pi}$ will be stored in a register $S^{\Theta}$. Formally, we consider the function
     \begin{align}
       \textrm{ro}: \{0,1\}^m \times \Pi_{m,k} \to \{0,1\}^k \times \{0,1\}^n , \quad (\phi, \pi) \mapsto (\phi_{\pi}, \phi_{\bar{\pi}}) \,.
     \end{align}
     Since $S^{\Phi}$ is uniformly random, the resulting state after applying this function and discarding $S^{\Phi}$ is of the form 
     \begin{align}
       \rho_{S^\Pi S^\Xi S^\Theta} = \tr_{S^{\Phi}} \big\{ \mathcal{E}_\textrm{ro}(\rho_{S^{\Phi}} \otimes \rho_{S^{\Pi}} )\big\} = \rho_{S^\Pi} \otimes \rho_{S^\Xi} \otimes \rho_{S^{\Theta}}\, ,
     \end{align}
     where the registers containing $S^{\Xi}$ and $S^{\Theta}$ are again uniformly random:
     \begin{align} \label{eq:uniform-basis}
    \rho_{S^{\Xi}} = \sum_{ \xi \in \{0,1\}^k } \frac{1}{2^k}\, \proj{\xi}_{S^\Xi}
       \quad \textrm{and} \quad \rho_{S^\Theta} = \sum_{ \theta \in \{0,1\}^n } \frac{1}{2^n}\, \proj{\theta}_{S^\Theta}  
     \end{align}
     for $\xi = \xi_{[k]} = (\xi_1, \xi_2, \ldots, \xi_k)$ and $\theta = \theta_{[n]} = (\theta_1, \theta_2, \ldots, \theta_n)$ with $\xi_i, \theta_i \in \{0,1\}$.
     
The choice of the hash function in the family $\mathcal{H}_{\mathrm{ec}} = \{h_{\mathrm{ec}}: \{0,1\}^n \rightarrow \{0,1\}^{t} \}$ and the choice of hash function in the family $\mathcal{H}_{\mathrm{pa}} = \{h_{\mathrm{pa}}: \{0,1\}^n \rightarrow \{0,1\}^{t} \}$ are modeled via random seeds
\begin{align}
\rho_{S^{H_{\mathrm{ec}}}} = \sum_{h \in \mathcal{H}_{\mathrm{ec}}} \frac{1}{|\mathcal{H}_{\mathrm{ec}}|} |h\rangle \!\langle h|_{S^{H_{\mathrm{ec}}}} \qquad \textrm{and} \qquad \rho_{S^{H_{\mathrm{pa}}}} = \sum_{h \in \mathcal{H}_{\mathrm{pa}}} \frac{1}{|\mathcal{H}_{\mathrm{pa}}|} |h\rangle \!\langle h|_{S^{H_{\mathrm{pa}}}}.
\end{align}

\begin{figure}[t]
\centering
\begin{picture}(200, 60)
  \put(11, 50){Alice}
  \put(70, 50){public domain}
  \put(170, 50){Bob}
  \put(15, 30){\boxed{\phantom{X}}}
  \put(15, 5){\boxed{\phantom{X}}}
  \put(19, 30){$A$}
  \put(19, 5){$S$}
  \put(172, 30){\boxed{\phantom{X}}}
  \put(172, 5){\boxed{\phantom{X}}}
  \put(176, 30){$B$}
  \put(176, 5){$S$}
  \put(32, 9){\vector(1, 0){138}}
  \put(100,9){\vector(0, 1){15}}
  \put(92, 30){\boxed{\phantom{X}}}
  \put(96, 30){$S$}
  \linethickness{1mm}
  \put(50, 0){ \line(0, 1){60} }
  \put(150, 0){ \line(0, 1){60} }
\end{picture}
\caption{State of the classical and quantum systems after randomization. The state of the total system is $\rho_{AB} \otimes \rho_S$, where $\rho_S = \rho_{S^\Phi} \otimes \rho_{S^\Pi} \otimes \rho_{S^{H_{\mathrm{ec}}}} \otimes \rho_{S^{H_{\mathrm{pa}}}}$.}
\label{fig:state_rand}
\end{figure}
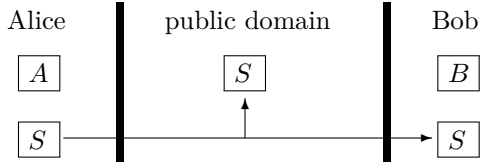

\paragraph*{Measurement:}
  
  We split the measurement process into two parts, measuring the systems in the set $\pi$ and $\bar{\pi}$ separately. While this distinction is not relevant for the practical implementation of the protocol, the notation introduced here will be important for the security analysis later.
  The first measurement concerns the registers in $\pi$, which are used for parameter estimation. For any subset $\pi \in \Pi_{m,k}$, we define a completely positive trace-preserving map $\cM_{A \to V|S^{\Xi}}^{\pi}: A_{\pi}S^\Xi \to V A_{\pi} S^{\Xi}$ where $V = V_{[k]} = V_1 \otimes V_2 \otimes \ldots V_k$ models $k$ binary classical registers storing the measurement outcomes. The map is given by
  \begin{align}
    \cM_{A \to V|S^{\Xi}}^{\pi}(\cdot) = \!\! \sum_{\xi \in \{0, 1\}^{k}} \sum_{ v \in \{0, 1\}^{k}} 
    \ket{v}_{V} 
      \Big( M_{A_{\pi}}^{\xi,v} \otimes  \proj{\xi}_{S^{\Xi}} \Big)\ \cdot\ \Big( M_{A_{\pi}}^{\xi,v} \otimes  \proj{\xi}_{S^{\Xi}} \Big)^{\dagger} \bra{v}_{V} ,
  \end{align}
    where
    $M_{A_{\pi}}^{\xi,v} := \bigotimes_{i \in [k]} M_{A_{\pi_i}}^{\xi_i,v_i}$. This map measures the $k$ subsystems determined by $\pi$ using the (random) measurement settings stored in the register $S^\Xi$. The results are stored in the classical register $V$, and the post-measurement state remains in the systems $A_{\pi}$.
  
  Similarly, we define $\cM_{B\to W|S^{\Xi}}^{\pi}: B_{\pi}S^\Xi \to W B_{\pi} S^{\Xi}$ as
  \begin{align}
    \cM_{B\to W|S^{\Xi}}^{\pi}(\cdot) = \!\! \sum_{\xi \in \{0, 1\}^{k}} \sum_{ w \in \{0, 1\}^{k}} 
    \ket{w}_{W} 
      \Big(M_{B_{\pi}}^{\xi,w} \otimes  \proj{\xi}_{S^{\Xi}}  \Big)\ \cdot\ \Big(M_{B_{\pi}}^{\xi,w} \otimes  \proj{\xi}_{S^{\Xi}}  \Big)^{\dagger} \bra{w}_{W}  ,
  \end{align}
    where
    $M_{B_{\pi}}^{\xi,w} := \bigotimes_{i \in [k]} M_{B_{\pi_i}}^{\xi_i,w_i}$.
  The two maps $\cM_{A\to V|S^{\Xi}}^{\pi}$ and $\cM_{B\to W|S^{\Xi}}^{\pi}$ commute since they act on different systems and we write their concatenation as $\cM_{A\to V|S^{\Xi}}^{\pi} \circ \cM_{B\to W|S^{\Xi}}^{\pi} = \cM_{B\to W|S^{\Xi}}^{\pi} \circ \cM_{A\to V|S^{\Xi}}^{\pi}$.
  
  So far we have considered $\pi$ to be fixed. The full measurement for parameter estimation instead consults the register $S^{\Pi}$ and is modeled as a map $\cM_{AB\to VW|S^{\Pi}S^{\Xi}}: A B S^{\Pi}S^{\Xi} \to A B V W S^{\Pi}S^{\Xi}$ given by
  \begin{align}
    \cM_{AB\to VW|S^{\Pi}S^{\Xi}}(\cdot) = \sum_{\pi \in \Pi_{m,k}}  \,  \cM_{A\to V|S^{\Xi}}^{\pi} \circ \cM_{B\to W|S^{\Xi}}^{\pi} \big( \proj{\pi}_{S^\Pi} \ \cdot\ \proj{\pi}_{S^{\Pi}} \big).
  \end{align}
  The state of the total system after the measurement required for parameter estimation is thus given by
  \begin{align}
  \tau_{A B V W S^{\Pi}S^{\Xi} S^{\Theta}}
  &= \cM_{AB\to VW|S^{\Pi}S^{\Xi}} ( \rho_{A B S^\Pi  S^\Xi S^\Theta} ) \\
  &= \sum_{\pi \in \Pi_{m,k}} \sum_{\xi \in \{0,1\}^k} \sum_{v,w \in \{0,1\}^k} \frac{1}{ 2^k {m \choose k} } \proj{\pi,\xi}_{S^{\Pi}S^{\Xi}} \otimes \rho_{S^{\Theta}} \otimes~  \nonumber\\
  &\qquad\qquad\qquad  \dots\  \proj{v, w}_{VW} \otimes \big( M_{A_{\pi}}^{\xi,v} \otimes M_{B_{\pi}}^{\xi,w} \big) \rho_{AB} \big( M_{A_{\pi}}^{\xi,v} \otimes M_{B_{\pi}}^{\xi,w} \big)^{\dagger} \,.
  \end{align}
  
  The second measurement concerns the quantum systems used for extracting the secret key. The corresponding measurement maps are defined analogously to the measurements maps above, but now act on the systems determined by $\bar{\pi}$, the complement of $\pi$ in $[m]$.
We define
    \begin{align}
    \cM_{A \to X|S^{\Pi} S^{\Theta}}(\cdot) &=  \sum_{\pi \in \Pi_{m,k}}\sum_{\theta,x \in \{0, 1\}^{n}} 
    \ket{x}_{X} 
      \big( M_{A_{\bar{\pi}}}^{\theta,x}\otimes  \proj{\pi,\theta}_{S^{\Pi} S^{\Theta}}  \big)\ \cdot\ \big(M_{A_{\bar{\pi}}}^{\theta,x}\otimes  \proj{\pi,\theta}_{S^{\Pi} S^{\Theta}} \big)^{\dagger} \bra{x}_{X},\label{eq:xymeasure}\\
     \cM_{B\to Y|S^{\Pi} S^{\Theta}}(\cdot) &=  \sum_{\pi \in \Pi_{m,k}}\sum_{\theta,y \in \{0, 1\}^{n}}  
    \ket{y}_{Y} 
      \big(M_{B_{\bar{\pi}}}^{\theta,y} \otimes \proj{\pi,\theta}_{S^{\Pi} S^{\Theta}}  \big)\ \cdot\ \big(M_{B_{\bar{\pi}}}^{\theta,y} \otimes \proj{\pi,\theta}_{S^{\Pi}S^{\Theta}}\big)^{\dagger} \bra{y}_{Y} 
  \end{align}
  as well as $\cM_{AB\to XY|S^{\Pi}S^{\Theta}} = \cM_{A \to X|S^{\Pi}S^{\Theta}} \circ \cM_{B \to Y|S^{\Pi}S^{\Theta}}$.  
  It is evident that all measurements $\cM$ defined so far mutually commute because they act either on classical registers or on distinct quantum registers. Finally, we define the total measurement map as $\cM_{AB \to VWXY|S^{\Pi}S^{\Xi}S^{\Theta}} := \cM_{AB \to VW|S^{\Pi} S^{\Xi}} \circ \cM_{AB\to XY|S^{\Pi}S^{\Theta}}$.

Of particular interest is the state of the system after measurement and after we discard the quantum systems. This is given by a classical state $\sigma_{VWXY S^{\Pi} S^{\Xi} S^{\Theta}}$. This state is of the form
\begin{align}
  &\sigma_{V W X Y S^{\Pi} S^{\Xi} S^{\Theta}} \\
  &\quad=\tr_{AB} \big( \cM_{AB \to VWXY|S^{\Pi}S^{\Xi}S^{\Theta}}( \rho_{A B S^{\Pi}S^{\Xi}S^{\Theta} }) \big) \\
  &\quad=
  \tr_{AB} \big( \cM_{AB\to XY|S^{\Pi}S^{\Theta}} ( \tau_{A B V W S^{\Pi}S^{\Xi}S^{\Theta} } ) \big) \\
  &\quad = \sum_{\pi \in \Pi_{m,k}} \sum_{\xi \in \{0,1\}^k \atop \theta \in \{0,1\}^n } \sum_{v,w \in \{0,1\}^k \atop
  x,y \in \{0,1\}^n} \frac{1}{ 2^m {m \choose k} } \proj{\pi,\xi,\theta}_{S^{\Pi}S^{\Xi}S^{\Theta}} \otimes \proj{v, w, x, y}_{VWXY}\otimes~  \nonumber\\
  &\qquad\qquad  \dots\  \tr_{AB} \Big\{ \big( \widetilde{M}_{A_{\pi}}^{\xi,v} \otimes \widetilde{M}_{A_{\bar{\pi}}}^{\theta,x} \otimes \widetilde{M}_{B_{\pi}}^{\xi,w} \otimes \widetilde{M}_{B_{\bar{\pi}}}^{\theta,y} \big) \rho_{AB} \Big\} \, ,
\end{align}
where we write $\widetilde{M}_{A_{\pi}}^{\xi,v} = \big(M_{A_{\pi}}^{\xi,v}\big)^{\dagger} M_{A_{\pi}}^{\xi,v}$ and analogously introduce 
$\widetilde{M}_{A_{\bar{\pi}}}^{\theta,x}$, $\widetilde{M}_{B_{\pi}}^{\xi,w}$ and $\widetilde{M}_{B_{\bar{\pi}}}^{\theta,y}$.

The situation after the complete measurement is depicted in Figure~\ref{fig:state_meas}.

\begin{figure}[t]
\centering
\begin{picture}(300, 60)
  \put(31, 50){Alice}
  \put(120, 50){public domain}
  \put(240, 50){Bob}
  
  \put(35, 30){\boxed{\phantom{X}}}
  \put(39, 30){$S$}
  \put(5, 5){\boxed{\phantom{X}}}
  \put(9, 5){$A$}
  \put(43,25.5){\line(0, -1){16}}
  \put(4, 0){\line(1,1){17}}
  \put(23, 9){\vector(1, 0){30}}
  \put(55, 5){\boxed{\phantom{X}}}
  \put(59, 5){$V$}
  \put(75, 5){\boxed{\phantom{X}}}
  \put(79, 5){$X$}

  \put(245, 30){\boxed{\phantom{X}}}
  \put(249, 30){$S$}
  \put(215, 5){\boxed{\phantom{X}}}
  \put(219, 5){$B$}
  \put(253,25.5){\line(0, -1){16}}
  \put(214, 0){\line(1,1){17}}
  \put(233, 9){\vector(1, 0){30}}
  \put(265, 5){\boxed{\phantom{X}}}
  \put(268, 5){$W$}
  \put(285, 5){\boxed{\phantom{X}}}
  \put(289, 5){$Y$}

  \put(245,-1){$\mathcal{M}$}
  \put(35,-1){$\mathcal{M}$}
 
 \put(142, 30){\boxed{\phantom{X}}}
  \put(146, 30){$S$}
  \linethickness{1mm}
  \put(100, 0){ \line(0, 1){60} }
  \put(200, 0){ \line(0, 1){60} }
\end{picture}
\caption{State of the classical and quantum systems during and after measurement. The measurement can be summarized as a CPTP map $\rho_{AB} \otimes \rho_{S^\Phi} \otimes \rho_{S^\Pi} \mapsto \sigma_{VWXYAB S^\Phi S^\Pi}$.}
\label{fig:state_meas}
\end{figure}
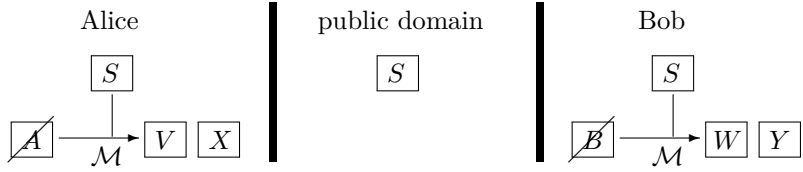

\paragraph*{Parameter estimation:}

We model parameter estimation by a test function acting on the registers $V$ and $W$ and creating a binary flag $F^{\textrm{pe}}$ as follows:
\begin{align}
  \mathrm{test}: \{ 0, 1 \}^k \times \{ 0, 1\}^k \to \{ \fail, \succ \}, \quad \mathrm{pe}(v, w) = \begin{cases} \fail &
    \textrm{if}\ \sum_{i \in [k]} 1\{ v_i \neq w_i \} \geq k \delta,
     \\
  \succ & \textrm{otherwise}. \end{cases} 
\end{align}

This test can be applied to the states $\tau_{A B V W S^{\Pi} S^{\Xi}S^{\Theta}}$ or $\sigma_{V W X Y S^{\Pi}S^{\Xi} S^{\Theta} }$ defined previously. 
This requires Alice to communicate $V$ to Bob on the authenticated classical channel in order to evaluate the value of $\mathrm{pe}(v, w)$ and the transcript of this communication is stored in the variable $C^V = V$.

We are specifically interested in the state
  $\tau_{A B V W S^{\Pi}  S^{\Xi}S^{\Theta} F^{\textrm{pe}}} = \mathcal{E}_{\textrm{pe}} \big( \tau_{A B V W S^{\Pi} S^{\Xi}S^{\Theta} } \big) $
and the corresponding state conditioned on the outcome $F^{\textrm{pe}}=\succ$, given by
\begin{align}
  &\tau_{A B V W S^{\Pi} S^{\Xi} S^{\Theta}|F^{\textrm{pe}}=\succ} = \frac{1}{\Pr [F^{\textrm{pe}} = \succ]_{\tau}} \sum_{\pi \in \Pi_{m,k}} \sum_{\xi \in \{0,1\}^k} \sum_{v,w \in \{0,1\}^k \atop \sum_{i =1}^k 1\{ v_i \neq w_i \} < k \delta} \frac{1}{ 2^k {m \choose k} } \proj{\pi,\xi}_{S^{\Pi}S^{\Xi}} \otimes~  \nonumber\\
  &\qquad\qquad\qquad  \dots\   \rho_{S^{\Theta}} \otimes \proj{v,w}_{VW} \otimes \big( M_{A_{\pi}}^{\xi,v} \otimes M_{B_{\pi}}^{\xi,w} \big) \rho_{AB} \big( M_{A_{\pi}}^{\xi,v} \otimes M_{B_{\pi}}^{\xi,w} \big)^{\dagger} \,. \label{eq:tautau}
\end{align}
We will see that this state is crucial for the security analysis in the next section. Finally, we note that $\cM_{AB\to XY|S^{\Pi} S^{\Theta}}$ and $\mathcal{E}_{\textrm{pe}}$ commute, and thus in particular we find that
\begin{align}
  \tr_{AB} \big\{ \cM_{AB\to XY|S^{\Pi} S^{\Theta} } ( \tau_{A B V W S^{\Pi} S^{\Xi} S^{\Theta}|F^{\textrm{pe}}=\succ} ) \big\} &= \sigma_{V W  X Y S^{\Pi} S^{\Xi} S^{\Theta}|F^{\textrm{pe}}=\succ } ,  \qquad \textrm{where}\\
  \sigma_{V W X Y S^{\Pi}  S^{\Xi} S^{\Theta }F^{\textrm{pe}}} &= \mathcal{E}_{\textrm{pe}} ( \sigma_{V W X Y S^{\Pi} S^{\Xi} S^{\Theta}} ) \,.
  \label{eq:sigmastate}
\end{align}

We then relabel $V$ to $C^V$ and keep it around as part of the transcript, while we discard $W$ after performing parameter estimation. The situation after parameter estimation is depicted in Figure~\ref{fig:state_parest}.

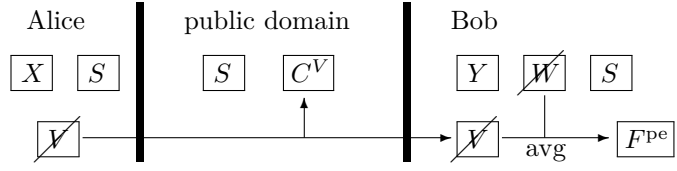
\begin{figure}[t]
\centering
\begin{picture}(250, 60)
  \put(11, 50){Alice}
  \put(70, 50){public domain}
  \put(170, 50){Bob}

  \put(5, 30){\boxed{\phantom{X}}}
  \put(9, 30){$X$}
  \put(30, 30){\boxed{\phantom{X}}}
  \put(34, 30){$S$}
  \put(15, 5){\boxed{\phantom{X}}}
  \put(19, 5){$V$}
  \put(14, 0){\line(1,1){17}}

  \put(172, 30){\boxed{\phantom{X}}}
  \put(176, 30){$Y$}
  \put(197, 30){\boxed{\phantom{X}}}
  \put(200, 30){$W$}
  \put(195, 25){\line(1,1){17}}
  \put(222, 30){\boxed{\phantom{X}}}
  \put(226, 30){$S$}
  \put(172, 5){\boxed{\phantom{X}}}
  \put(176, 5){$V$}
  \put(170, 0){\line(1,1){17}}

  \put(232, 5){\boxed{\phantom{Xx}}}
  \put(235, 5){$F^{\mathrm{pe}}$}
  \put(189, 9){\vector(1,0){40}}
  \put(205, 9){\line(0,1){16}}

  \put(32, 9){\vector(1, 0){138}}
  \put(115,9){\vector(0, 1){15}}
  \put(77, 30){\boxed{\phantom{X}}}
  \put(81, 30){$S$}
  \put(107, 30){\boxed{\phantom{X~}}}
  \put(110, 30){$C^V$}
  \put(198, 2){avg}

 \linethickness{1mm}
  \put(50, 0){ \line(0, 1){60} }
  \put(150, 0){ \line(0, 1){60} }
\end{picture}
\caption{State of the classical and quantum systems during and after parameter estimation. Parameter estimation is summarized as a CPTP map $\sigma_{VW} \mapsto \sigma_{C^V F^{\mathrm{pe}}}$.}
\label{fig:state_parest}
\end{figure}

\paragraph*{Error correction:}

The error correction part of the protocol is split into two parts. The first part consists of the actual error correction procedure, determined by two functions $\mathrm{synd}$ and $\mathrm{corr}$ that are executed by Alice and Bob, respectively. We do not assume anything about these functions, but rather check their success in the second part by evaluating hash functions.

First Alice computes a syndrome $Z = \mathrm{synd}(X)$ and sends it to Bob over the public channel. Bob then computes an estimate $\hat{X} = \mathrm{corr}(Y, Z)$, discarding $Y$ in the process. 

Alice and Bob then need to check that the decoding procedure succeeded with high probability by comparing hashes of their respective strings $X$ and $\hat{X}$ and abort the protocol if they differ. 
Alice computes a hash of size $t$ (in bits) of $X$ and sends it to Bob, who computes the corresponding hash for $\hat{X}$. 
This test is summarized as a classical map $\mathrm{ec}$ acting on registers $X$, $\hat{X}$ and $S^{H_{\mathrm{ec}}}$ creating a transcript of the hash value $C^T$ and a binary flag $F^{\mathrm{ec}}$ as follows:
\begin{align}
  \mathrm{ec}: \{ 0, 1 \}^n \times \{ 0, 1\}^n \times \mathcal{H}_{\mathrm{ec}}  \to \{ 0,1\}^t \times \{ \fail, \succ \}, \quad (x, \hat{x}) \mapsto \begin{cases} (h_{\mathrm{ec}}(x), \fail) &
    \textrm{if}\ h_{\mathrm{ec}}(x) \ne h_{\mathrm{ec}}(\hat{x}),
     \\
  (h_{\mathrm{ec}}(x), \succ) & \textrm{otherwise}. \end{cases} 
\end{align}

These classical functions are modeled using CPTP maps $\mathcal{E}_{\mathrm{synd}}$, $\mathcal{E}_{\mathrm{corr}}$ and $\mathcal{E}_{\mathrm{ec}}$, respectively.
Applying them to the state $\sigma_{X Y C^V S^{\Pi} S^{\Xi} S^{\Theta} F^{\textrm{pe}}}$ yields
\begin{align}
\sigma_{X \hat{X} C^V C^Z C^T S^{\Pi} S^{\Xi} S^{\Theta} S^{H_{\mathrm{ec}}} F^{\textrm{pe}} F^{\textrm{ec}} } = \tr_Y \Big\{ \mathcal{E}_{\mathrm{ec}} \circ \mathcal{E}_{\mathrm{corr}} \circ \mathcal{E}_{\mathrm{synd}} \left(\sigma_{X Y C^V S^{\Pi} S^{\Xi} S^{\Theta} F^{\textrm{pe}}} \otimes \rho_{S^{H_{\mathrm{ec}}}} \right) \Big\}, \label{eq:statebeforepa}
 \end{align}
 where the transcript register $C^Z$ contains the value of the syndrome and $C^T$ the output of Alice's hash. This process is depicted in Figure~\ref{fig:state_ec}.

\begin{figure}[h]
\centering
\begin{picture}(320, 120)
  \put(31, 110){Alice}
  \put(120, 110){public domain}
  \put(250, 110){Bob}

  \put(30, 65){\boxed{\phantom{X}}}
  \put(34, 65){$X$}
  \put(75, 65){\boxed{\phantom{X}}}
  \put(79, 65){$Z$}
  \put(74, 60){\line(1,1){17}}
  \put(47, 69){\vector(1,0){27}}
  \put(50, 73){synd}

  \put(0, 33){\boxed{\phantom{X}}}
  \put(4, 33){$S$}
  \put(75, 33){\boxed{\phantom{X}}}
  \put(79, 33){$T$}
  \put(74, 28){\line(1,1){17}}
  \put(17, 37){\vector(1,0){57}}
  \put(40, 37){\line(0,1){23.5}}
  \put(35, 28){hash}

  \put(247, 90){\boxed{\phantom{X}}}
  \put(251, 90){$Y$}
  \put(246, 85){\line(1,1){17}}
  \put(222, 65){\boxed{\phantom{X}}}
  \put(226, 65){$Z$}
  \put(220, 60){\line(1,1){17}}
  \put(239, 69){\vector(1,0){35}}
  \put(255, 69){\line(0,1){16}}
  \put(276, 65){\boxed{\phantom{X}}}
  \put(279, 65){$\hat{X}$}
  \put(244, 60){corr}

  \put(222, 33){\boxed{\phantom{X}}}
  \put(226, 33){$T$}
  \put(220, 28){\line(1,1){17}}
  \put(302, 33){\boxed{\phantom{X}}}
  \put(306, 33){$S$}
  \put(301, 37){\vector(-1,0){35}}
  \put(285, 37){\line(0,1){23.5}}
  \put(249, 33){\boxed{\phantom{X}}}
  \put(253, 33){$\hat{T}$}
  \put(246, 28){\line(1,1){17}}
  \put(232, 5){\boxed{\phantom{Xx}}}
  \put(235, 5){$F^{\mathrm{ec}}$}
  \put(237,37){\line(1,0){12}}
 \put(262, 5){\boxed{\phantom{Xx}}}
  \put(265, 5){$F^{\mathrm{pe}}$}
   \put(243,37){\vector(0,-1){20}}
  \put(274, 28){hash}

  \put(92, 69){\vector(1, 0){128}}
  \put(182,69){\vector(0, 1){15}}
  \put(117, 90){\boxed{\phantom{X}}}
  \put(121, 90){$S$}
  \put(142, 90){\boxed{\phantom{X~}}}
  \put(145, 90){$C^V$}
  \put(172, 90){\boxed{\phantom{X~}}}
  \put(175, 90){$C^Z$}
  \put(92, 37){\vector(1, 0){128}}
  \put(182,37){\vector(0, -1){20}}
  \put(172, 5){\boxed{\phantom{X~}}}
  \put(175, 5){$C^T$}

 \linethickness{1mm}
  \put(100, 0){ \line(0, 1){120} }
  \put(200, 0){ \line(0, 1){120} }
\end{picture}
\caption{State of the classical and quantum systems during and after error correction. Error correction is summarized as a CPTP map $\sigma_{XY} \otimes \rho_{S^{H_\mathrm{ec}}} \mapsto \sigma_{X \hat{X} C^Z F^{\mathrm{ec}}}$.}
\label{fig:state_ec}
\end{figure}
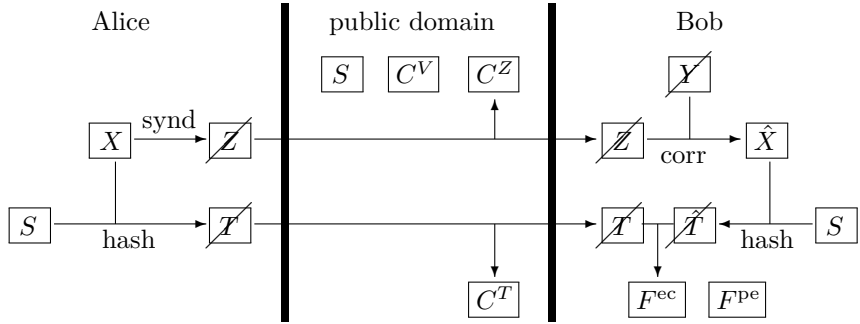

\paragraph*{Privacy amplification:}

Alice and Bob use the seed $H_{\mathrm{pa}}$ to choose a hash function, which they then both apply on their raw key to compute $K_A = H_{\mathrm{pa}}(X)$ and $K_B = H_{\mathrm{pa}}(\hat{X})$, their respective keys. 
Formally, the privacy amplification map is defined as:
\begin{equation}
  \mathrm{pa}:
  \{ 0, 1 \}^n \times \{ 0, 1\}^n \times \mathcal{H}_{\mathrm{pa}}  \to \{ 0, 1 \}^{\ell} \times \{ 0, 1\}^\ell, \quad
 ( x, \hat{x}, h_{\mathrm{pa}} ) \mapsto  (h_{\mathrm{pa}}(x), h_{\mathrm{pa}}(\hat{x})).
 \end{equation}
Denoting by $K_A$ and $K_B$ the respective key spaces of Alice and Bob, the final quantum state is
\begin{align}
\omega_{K_A K_B C { S} F} = \tr_{X\hat{X}} \big\{ \mathcal{E}_{\mathrm{pa}}(\sigma_{X \hat{X}  C { S} F} \otimes \rho_{S^{H_{\mathrm{pa}}}})\big\}.
\end{align}
Finally, Bob reveals the status of his flag registers. This final step is depicted in Figure~\ref{fig:state_privacy}.

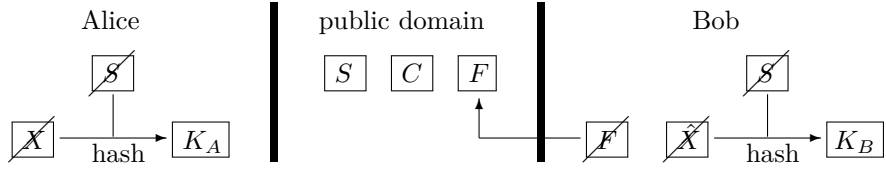
\begin{figure}[t]
\centering
\begin{picture}(330, 60)
  \put(31, 50){Alice}
  \put(120, 50){public domain}
  \put(260, 50){Bob}
  
  \put(35, 30){\boxed{\phantom{X}}}
  \put(39, 30){$S$}
   \put(34, 25){\line(1,1){17}}
 \put(5, 5){\boxed{\phantom{X}}}
  \put(9, 5){$X$}
  \put(43,25.5){\line(0, -1){16}}
  \put(4, 0){\line(1,1){17}}
  \put(23, 9){\vector(1, 0){40}}
  \put(65, 5){\boxed{\phantom{Xx}}}
  \put(69, 5){$K_A$}

  \put(280, 30){\boxed{\phantom{X}}}
  \put(284, 30){$S$}
  \put(279, 25){\line(1,1){17}}
  \put(250, 5){\boxed{\phantom{X}}}
  \put(254, 5){$\hat{X}$}
  \put(288,25.5){\line(0, -1){16}}
  \put(249, 0){\line(1,1){17}}
  \put(268, 9){\vector(1, 0){40}}
  \put(310, 5){\boxed{\phantom{Xx}}}
  \put(313, 5){$K_B$}
  \put(220, 5){\boxed{\phantom{X}}}
  \put(224, 5){$F$}
  \put(219, 0){\line(1,1){17}}
  \put(218, 9){\line(-1,0){38}}
  \put(180, 9){\vector(0,1){15}}

  \put(280,0){hash}
  \put(35,0){hash}
 
 \put(122, 30){\boxed{\phantom{X}}}
  \put(126, 30){$S$}
 \put(147, 30){\boxed{\phantom{X}}}
 \put(151, 30){$C$}
 \put(172, 30){\boxed{\phantom{F}}}
 \put(176, 30){$F$}
 
  \linethickness{1mm}
  \put(100, 0){ \line(0, 1){60} }
  \put(200, 0){ \line(0, 1){60} }
\end{picture}
\caption{State of the classical and quantum systems during and after privacy amplification. Privacy amplification is a CPTP map $\sigma_{X\hat{X}} \otimes \rho_{S^{H_\mathrm{pa}}} \mapsto \omega_{K_A K_B}$. The complete final state is denoted by $\omega_{K_A K_B S C F}$.}
\label{fig:state_privacy}
\end{figure}


\section{Security of the generated key}
\label{sc:eb2}

For a detailed discussion of the security of quantum key distribution, we refer the reader to~\citet{portmann14}. For our purposes here (consistent with~\cite{renner05} and~\cite{portmann14}), we say that our protocol is $\Delta$-secure if it is $\Delta$-close to an ideal protocol in terms of the diamond distance. Note in particular that an ideal protocol is allowed to abort, but it will always output a uniformly random shared key in case it does not. Table~\ref{tb:ideal} gives such an ideal protocol, denoted \qkdideal{} designed in such a way that it is close to the original \qkdeb{} protocol.

\begin{table}[h!]
\fbox{%
\begin{minipage}{\textwidth}
  
  { $(K_A, K_B, S, C, F) = $}
  {\large \qkdideal}
  { $\!\!\big( \rho_{AB} \big)$:}

\begin{description}

  \item[Run protocol:]
  Set $(K_A, K_B, S, C, F) = \qkdeb(\rho_{AB})$. 
  
  \item[Output:]
  If $F^{\mathrm{pe}} = F^{\mathrm{ec}} = \succ$, then replace $K_A$ and $K_B$ by an independent and uniformly distributed random string $K$ of length $\ell$, i.e.\ set $K_A = K_B = K$. 
  \end{description}

\end{minipage}
}
  \caption{An ideal QKD protocol that is close to \qkdeb.}
  \label{tb:ideal}
\end{table}

In order to show that the protocol is secure, it thus suffices to show that
\begin{align}
  \Delta_{m,\mathrm{pe},\mathrm{ec},\mathrm{pa}} 
  &:=  \big\| \qkdeb{} - \qkdideal{} \big\|_{\diamond} \\
  &\;= \sup_{\rho_{ABE} \in \cS(ABE)}  \big\| \qkdeb(\rho_{ABE}) - \qkdideal(\rho_{ABE}) \big\|_{\mathrm{tr}} \label{eq:distance}
\end{align}
is very small for certain choices of parameters $k,n,\delta,\mathrm{ec}$ and $\mathrm{pa}$. In the latter expression $\rho_{ABE}$ is an arbitrary extension of $\rho_{AB}$ to an auxiliary system $E$. Without loss of generality we may take $|E| = |A| |B|$, which is sufficient to achieve the supremum.
Physically the system $E$ is held by a potential adversary, the eavesdropper. In particular, this assures that $E$ can be assumed finite-dimensional.
Hence, we need to show that the trace distance between the protocols' outputs is small for all possible input states $\rho_{ABE}$.

Let us now fix $\rho_{ABE}$ for the moment. The trace distance in~\eqref{eq:distance} can be simplified by noting that the output of \texttt{qkd\_ideal} equals the output of \texttt{qkd\_eb} if the protocol aborts. We find
\begin{align}
   &\big\| \qkdeb(\rho_{ABE}) -\qkdideal(\rho_{ABE}) \big\|_{\tr} \nonumber\\
   &\qquad = \Pr\big[ F =\,(\succ, \succ) \big]_{\omega} \cdot
   \big\| \omega_{K_A K_B S C E | F =\,(\succ, \succ)} - \chi_{K_A K_B} \otimes \omega_{S C E | F =\,(\succ, \succ)} \big\|_{\tr} \label{eq:boundthat}\\
   &\qquad =
   \big\| {\omega}_{K_A K_B S C F E \land F =\,(\succ, \succ)} - \chi_{K_A K_B}\otimes \omega_{S C F E \land F =\,(\succ, \succ)} \big\|_{\tr} \label{eq:boundthis}
\end{align}
where we use $\omega_{K_A K_B S C F E} = \texttt{qkd\_eb}_{k,n,\delta,\mathrm{ec},\mathrm{pa}}(\rho_{ABE})$ and define
a perfect key $\chi_{K_A K_B}$ as follows:
\begin{align}
  \chi_{K_A K_B} := \frac{1}{2^{\ell}} \sum_{k \in \{0,1\}^{\ell}} \proj{k}_{K_A} \otimes \proj{k}_{K_B} \,.
\end{align}
Recall that ${\omega}$ in~\eqref{eq:boundthis} corresponds to a subnormalized state with trace equal to $\Pr[ F =\,(\succ, \succ) ]_{\omega}$.

Our goal in the following is to bound~\eqref{eq:boundthis} or~\eqref{eq:boundthat} uniformly in $\rho_{ABE}$, which implies an upper bound on~\eqref{eq:distance} as well. In order to do this we will employ the following lemma which allows us to split the norm into two terms corresponding to correctness and secrecy. This has been shown, e.g., in \cite[Theorem 4.1]{portmann14}, but we provide a proof here for completeness.

\begin{lemma}
\label{lm:security}
  Let $\eps_{\textrm{ec}}, \eps_{\textrm{pa}} \in [0,1)$ be two constants. If, for every state $\rho_{ABE} \in \cS(ABE)$ and $\omega_{K_A K_B S C F E} = \qkdeb(\rho_{ABE})$, we have
  \begin{align}
    \Pr[ K_A \neq K_B \land F^{\textrm{pe}} = F^{\textrm{ec}} = \succ ]_{\omega} &\leq \eps_{\textrm{ec}}
    \qquad \textrm{and}\\
     \big\| \omega_{K_A S C F E \land F =\,(\succ, \succ)} - \chi_{K_A}\otimes \omega_{S C F E \land F =\,(\succ, \succ)} \big\|_{\tr} 
    &\leq \eps_{\textrm{pa}} \,.
  \end{align}
  Then, $\Delta_{m,\mathrm{pe},\mathrm{ec},\mathrm{pa}}  \leq \eps_{\mathrm{ec}} + \eps_{\textrm{pa}}$.
\end{lemma}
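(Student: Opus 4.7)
The plan is to prove this by a triangle-inequality argument through a suitable intermediate state, reducing the diamond-distance bound in~\eqref{eq:distance} to the two hypotheses of the lemma. The key observation is that the real and ideal protocols differ only when both flags are $\succ$, as noted already in~\eqref{eq:boundthis}. So for any fixed $\rho_{ABE}$ it suffices to bound the trace distance of the subnormalized states restricted to the event $F=(\succ,\succ)$.

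Concretely, for each fixed $\rho_{ABE}$ I would introduce the intermediate ``corrected'' state $\tilde\omega_{K_A K_B SCFE}$ obtained from $\omega_{K_A K_B SCFE} = \qkdeb(\rho_{ABE})$ by discarding the register $K_B$ and then applying the classical CPTP copy map $\mathcal{C}_{K_A \to K_A K_B}$ that writes a copy of $K_A$ into $K_B$. By construction $\tilde\omega$ always satisfies $K_A = K_B$, and it agrees with $\omega$ on the event $K_A = K_B$. Applying the triangle inequality to~\eqref{eq:boundthis} gives
\begin{align}
&\big\| \omega_{K_A K_B SCFE \land F=(\succ,\succ)} - \chi_{K_A K_B}\otimes \omega_{SCFE \land F=(\succ,\succ)} \big\|_{\tr} \notag\\
&\qquad \leq \big\| \omega_{K_A K_B SCFE \land F=(\succ,\succ)} - \tilde\omega_{K_A K_B SCFE \land F=(\succ,\succ)} \big\|_{\tr} \notag\\
&\qquad\quad + \big\| \tilde\omega_{K_A K_B SCFE \land F=(\succ,\succ)} - \chi_{K_A K_B}\otimes \omega_{SCFE \land F=(\succ,\succ)} \big\|_{\tr} \,.
\end{align}

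For the first term I would observe that $\omega$ and $\tilde\omega$ coincide on the event $K_A = K_B$, so their difference restricted to $F=(\succ,\succ)$ equals the positive semidefinite operator $\omega_{\land F=(\succ,\succ)\land K_A \neq K_B}$. Its trace norm is therefore just its trace, which equals $\Pr[K_A \neq K_B \land F^{\mathrm{pe}}=F^{\mathrm{ec}}=\succ]_\omega \leq \eps_{\mathrm{ec}}$ by the correctness hypothesis. For the second term, both states are images under the classical copy map $\mathcal{C}_{K_A \to K_A K_B}$: the left-hand side equals $\mathcal{C}(\omega_{K_A SCFE\land F=(\succ,\succ)})$ and the right-hand side equals $\mathcal{C}(\chi_{K_A}\otimes \omega_{SCFE\land F=(\succ,\succ)})$, where $\chi_{K_A}=2^{-\ell}\,\id_{K_A}$. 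Monotonicity of the trace distance under CPTP maps then bounds the second term by $\|\omega_{K_A SCFE \land F=(\succ,\succ)} - \chi_{K_A}\otimes \omega_{SCFE \land F=(\succ,\succ)}\|_{\tr} \leq \eps_{\mathrm{pa}}$, which is exactly the secrecy hypothesis. Summing the two contributions and taking the supremum over $\rho_{ABE}$ (which is still dominated by $\eps_{\mathrm{ec}}+\eps_{\mathrm{pa}}$ since the hypotheses are uniform in $\rho_{ABE}$) yields the desired diamond-norm bound.

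The proof is entirely mechanical once the intermediate state is chosen, so there is no deep obstacle; the only point requiring care is bookkeeping with the subnormalized ``$\land F=(\succ,\succ)$'' states and checking that the copy map indeed sends $\chi_{K_A}$ to $\chi_{K_A K_B}$ and $\tr_{K_B}\omega$ back to the appropriate state, so that the two ends of the second triangle term are genuinely images of the cq inputs under a single CPTP map.
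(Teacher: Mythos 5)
Your proposal is essentially the paper's own proof: the paper introduces exactly the same intermediate state (there called $\eta$, obtained by overwriting $K_B$ with a copy of $K_A$), applies the triangle inequality to split the distance into a correctness term and a secrecy term, and bounds the latter by noting both states are images of the copy map; the only cosmetic difference is that you work with the subnormalized ``$\land\, F=(\succ,\succ)$'' states throughout while the paper works with conditional states and multiplies by $\Pr[F=(\succ,\succ)]_\omega$ at the end. One small imprecision: the difference $\omega_{\land F=(\succ,\succ)} - \tilde\omega_{\land F=(\succ,\succ)}$ is \emph{not} the positive operator $\omega_{\land F=(\succ,\succ)\land K_A\neq K_B}$, since $\tilde\omega$ also adds mass to the diagonal $K_A=K_B$ blocks; rather it is a difference of two orthogonally supported positive operators, each of trace $\Pr[K_A\neq K_B\land F=(\succ,\succ)]_\omega$, so the trace distance still equals that probability and your bound goes through unchanged.
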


\begin{proof}
  Let us introduce an auxiliary state $\eta_{K_A K_B S C F E}$ that is equal to $\omega_{K_A K_B S C F E}$ except that we set $K_B = K_A$. Then, applying the triangle inequality to the trace distance in~\eqref{eq:boundthat} and simplifying the resulting terms, we find
  \begin{align}
    &\big\| \omega_{K_A K_B S C E | F =\,(\succ, \succ)} - \chi_{K_A K_B} \otimes \omega_{S C E | F =\,(\succ, \succ)} \big\|_{\tr} \notag\\
    &\qquad \leq \big\| \omega_{K_A K_B S C E | F =\,(\succ, \succ)} - \eta_{K_A K_B S C E | F =\,(\succ, \succ)}  \big\|_{\tr} \notag\\
    &\qquad \qquad \qquad + \big\| \eta_{K_A K_B S C E | F =\,(\succ, \succ)} - \chi_{K_A K_B} \otimes \omega_{S C E | F =\,(\succ, \succ)} \big\|_{\tr} \\
    &\qquad = \Pr \big[ K_A \neq K_B | F =\,(\succ, \succ) \big]_{\omega} + \big\| \eta_{K_A S C E | F =\,(\succ, \succ)} - \chi_{K_A} \otimes \omega_{S C E | F =\,(\succ, \succ)} \big\|_{\tr} \,.
  \end{align}
  Multiplying this with $\Pr[ F =\,(\succ, \succ)]_{\omega}$ as in~\eqref{eq:boundthat} yields the desired implication.
\end{proof}

 The first condition of the above Lemma~\ref{lm:security} ensures that the protocol is \emph{$\eps_{\textrm{ec}}$-correct}, and the second condition ensures that the protocol \emph{$\eps_{\textrm{pa}}$-secret}. If both are satisfied, we say that the protocol is \emph{$(\eps_{\mathrm{ec}} + \eps_{\textrm{pa}})$-secure}. In the security proof we can thus verify the two conditions separately.

\section{Results and discussion}
\label{sc:eb2.5}

 We will show the following theorems, which essentially give bounds on the security parameters in terms of the protocol parameters. The first theorem establishes correctness of the protocol. Correctness of the protocol is ensured in the error correction step using hash functions, and consequently correctness can be bounded in term of the length $t$ of the hash that is used. The proof is given in Section~\ref{sc:sec-corr}.

\begin{theorem}
  \label{th:correct}
  Consider the protocol \qkdeb{} in Section~\ref{sc:protocol} with $\mathrm{ec} = \{ t, \ldots \}$. Then for every state $\rho_{AB} \in \cS(AB)$ and $\omega_{K_A K_B S C F} = \qkdeb(\rho_{AB})$ we have
  \begin{align}
     \Pr[ K_A \neq K_B \land F =\,(\succ, \succ) ]_{\omega} &\leq \eps_{\rm ec} := 2^{-t} \,.
  \end{align}
\end{theorem}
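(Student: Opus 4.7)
The key observation is that the two keys $K_A = H_{\mathrm{pa}}(X)$ and $K_B = H_{\mathrm{pa}}(\hat X)$ are obtained by applying the \emph{same} hash function $H_{\mathrm{pa}}$ to $X$ and $\hat X$, so $K_A \neq K_B$ forces $X \neq \hat X$. It therefore suffices to bound the probability of the event $\{X \neq \hat X\} \wedge \{F^{\mathrm{ec}}=\succ\}$, since adding the condition $F^{\mathrm{pe}}=\succ$ only shrinks the event further.

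The plan is to condition on deterministic values of $X$ and $\hat X$ and use the independence of $H_{\mathrm{ec}}$ from everything else. Concretely, $X$ is determined by Alice's measurement outcomes, while $\hat X = \mathrm{corr}(Y, \mathrm{synd}(X))$ depends on Alice's and Bob's measurement outcomes and their seeds $S^\Phi, S^\Pi$; crucially, the marginal distribution of $(X,\hat X)$ is independent of the seed $S^{H_{\mathrm{ec}}}$ that selects the hash $H_{\mathrm{ec}}$. By the formal model of Section~\ref{sec:model}, the joint state factorizes as $\sigma_{X\hat X\,\text{(other classical/quantum regs)}}\otimes \rho_{S^{H_{\mathrm{ec}}}}$ just before Alice applies $H_{\mathrm{ec}}$, so I would expand
\begin{align}
\Pr\bigl[X \neq \hat X \wedge F^{\mathrm{ec}} = \succ\bigr]_\omega
&= \sum_{x \neq \hat x} \Pr[X=x, \hat X = \hat x]_\sigma \cdot \Pr\bigl[H_{\mathrm{ec}}(x) = H_{\mathrm{ec}}(\hat x)\bigr].
\end{align}

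Applying the universal$_2$ property of $\mathcal{H}_{\mathrm{ec}}$ gives $\Pr[H_{\mathrm{ec}}(x)=H_{\mathrm{ec}}(\hat x)] = 2^{-t}$ for every distinct pair $(x,\hat x)$. Hence each collision probability in the sum is exactly $2^{-t}$, and the sum over $(x,\hat x)$ of $\Pr[X=x,\hat X=\hat x]$ is at most $1$, yielding the bound $2^{-t}$. Combining with the implication $K_A\neq K_B \Rightarrow X\neq\hat X$ and discarding the $F^{\mathrm{pe}}=\succ$ restriction upward gives the claim.

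I do not anticipate a serious obstacle: the argument is essentially a direct application of universal$_2$ hashing together with the independence guaranteed by the protocol's randomization step. The only thing to handle carefully is confirming that $(X,\hat X)$ is independent of the hash seed $S^{H_{\mathrm{ec}}}$, which is immediate from the tensor-product form of the initial seed state in Section~\ref{sec:model} and the fact that $\hat X$ is computed from $Y$ and $Z=\mathrm{synd}(X)$ before $H_{\mathrm{ec}}$ enters the protocol.
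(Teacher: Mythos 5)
Your proposal is correct and follows essentially the same route as the paper's proof: reduce $K_A \neq K_B$ to $X \neq \hat{X}$ via the common hash $H_{\mathrm{pa}}$, drop the $F^{\mathrm{pe}}$ condition, and bound the residual collision event $H_{\mathrm{ec}}(X)=H_{\mathrm{ec}}(\hat{X})$ given $X\neq\hat{X}$ by $2^{-t}$ using the universal$_2$ property together with the independence of the seed $S^{H_{\mathrm{ec}}}$ from $(X,\hat{X})$. Your explicit expansion over pairs $(x,\hat{x})$ is just a slightly more detailed rendering of the paper's conditional-probability step, so no gap remains.
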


  The second theorem asserts secrecy. Secrecy is ensured by a combination of the parameter estimation and privacy amplification steps of the protocol, which both introduce an error. There is a tradeoff between these two errors, parametrized by a scalar $\nu$, which ought to be optimized numerically. The proof is given in Sections~\ref{sc:sec-first}--\ref{sc:sec-last}.
  \begin{theorem}
    \label{th:secure}
    Consider the protocol \qkdeb{} in Section~\ref{sc:protocol} with $\mathrm{pe} = \{k, \delta\}$, $\mathrm{ec} = \{ t, r, \ldots \}$ and $\mathrm{pa} = \{ \ell, \ldots \}$. Then, for every state $\rho_{ABE}$ and $\omega_{K_A K_B S C F E} = \qkdeb(\rho_{ABE})$, we have
     \begin{align}
 \big\| \omega_{K_A S C F E \land F =\,(\succ,\succ)} - \chi_{K_A}\otimes \omega_{S C F E \land F =\,(\succ,\succ)} \big\|_{\tr} \leq  \inf_{\nu \in (0, \frac12 - \delta) } \eps_{\textrm{pe}}(\nu) + \eps_{\textrm{pa}}(\nu) \,,
\end{align}
  where the error functions are given as
  \begin{align}
   \eps_{\textrm{pa}}(\nu) := \frac12 \sqrt{2^{- (m-k) \big( \log \frac{1}{\bar{c}} - h(\delta + \nu) \big) + r + t + \ell}} 
   \quad \textrm{and} \quad
   \eps_{\textrm{pe}}(\nu) := 2\, e^{ - \frac{(m-k) k^2 \nu^2}{m (k+1)}}
  \end{align}
  and $h(x) := - x \log x - (1-x)\log(1-x)$ denotes the binary entropy.
  \end{theorem}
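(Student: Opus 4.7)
My plan is to bound the trace distance appearing on the left-hand side by invoking the Leftover Hashing Lemma, which reduces the secrecy claim to lower bounding the smooth conditional min-entropy of Alice's raw key $X$ given all public classical data and Eve's quantum register $E$. Concretely, for a smoothing parameter $\bar\eps$, the LHL applied to $K_A = H_{\mathrm{pa}}(X)$ gives
\begin{align}
  \big\| \omega_{K_A S C F E \land F = (\succ,\succ)} - \chi_{K_A} \otimes \omega_{S C F E \land F = (\succ,\succ)} \big\|_{\tr} \leq \tfrac12 \sqrt{ 2^{\ell - H_{\min}^{\bar\eps}(X \land F = (\succ,\succ)\,|\, S C E)}} + O(\bar\eps).
\end{align}
The error correction step reveals $C^Z$ and $C^T$, a total of $r + t$ classical bits, which I strip off via the chain rule \eqref{eq:min-chain-rule}. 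This reduces the problem to lower bounding $H_{\min}^{\bar\eps}(X \land F^{\mathrm{pe}} = \succ \,|\, S^\Pi S^\Xi S^\Theta C^V E)$, where the conditioning is now only the randomness seeds, the parameter estimation transcript, and Eve's quantum register.

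The heart of the argument is the entropic uncertainty relation of \cite{tomamichel11}. I introduce a \emph{virtual} measurement in which Alice measures each of the $n$ key-generation subsystems $A_{\bar\pi}$ in the basis complementary to the one prescribed by $\Theta$, producing a hypothetical outcome string $\tilde X$. Applied at the level of smooth entropies and conditionally on the basis seeds, the uncertainty relation yields
\begin{align}
  H_{\min}^{\bar\eps}(X \land F^{\mathrm{pe}} = \succ \,|\, S^\Pi S^\Xi S^\Theta C^V E) + H_{\max}^{\bar\eps}(\tilde X \land F^{\mathrm{pe}} = \succ \,|\, S^\Pi S^\Xi S^\Theta C^V B_{\bar\pi}) \geq n \log \tfrac{1}{\bar c},
\end{align}
with $\bar c$ as in \eqref{eq:cbar}. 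It then remains to upper bound the virtual max-entropy, which I will do by combining the parameter estimation test with a classical sampling argument.

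Since $\pi$ is a uniformly random subset of $[m]$, Serfling's inequality for sampling without replacement produces an event $\Omega$ on the classical registers such that, on $F^{\mathrm{pe}} = \succ \cap \Omega$, if Bob also measured $B_{\bar\pi}$ in the complementary basis to produce a string $\tilde Y$, then the Hamming distance between $\tilde X$ and $\tilde Y$ would be at most $n(\delta + \nu)$. The complementary event has $\tau$-weight at most $\eps_{\mathrm{pe}}(\nu) = 2 e^{-(m-k) k^2 \nu^2 / (m(k+1))}$, again by Serfling. On the good event a direct counting argument over the Hamming ball of radius $n(\delta + \nu)$ in $\{0,1\}^n$ bounds $H_{\max} \leq n\, h(\delta + \nu)$. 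Collecting everything gives $H_{\min}^{\bar\eps} \geq n \log(1/\bar c) - n\, h(\delta + \nu) - r - t$, and substituting into the LHL yields $\eps_{\mathrm{pa}}(\nu)$. The weight of the bad sampling event contributes an additive $\eps_{\mathrm{pe}}(\nu)$ to the total trace distance, and taking the infimum over $\nu$ closes the bound.

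The main obstacle is the sampling step: conditioning on $F^{\mathrm{pe}} = \succ$ in general correlates the sampling seed $S^\Pi$ with the quantum data, so one cannot naively invoke Serfling as though $\pi$ were fresh independent randomness uncorrelated with the state. As flagged in the paper's introduction, the resolution is to work throughout with sub-normalized states and event-conditioned smooth entropies, bounding the ``bad'' sub-normalized part by the joint probability $\Pr[F^{\mathrm{pe}} = \succ \cap \lnot \Omega]_\tau$ rather than a conditional one, and to carry this sub-normalization faithfully through both the uncertainty relation and the chain rule. Correctly synchronizing the smoothing parameter $\bar\eps$ with the sampling-event perturbation so that the two error sources add, rather than combine sub-optimally through a square root, is what ultimately produces the clean additive tradeoff $\eps_{\mathrm{pe}}(\nu) + \eps_{\mathrm{pa}}(\nu)$ claimed in the theorem.
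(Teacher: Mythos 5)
Your proposal is correct and follows essentially the same route as the paper: the Leftover Hashing Lemma reduces secrecy to a smooth min-entropy bound, the chain rule strips the $r+t$ leaked bits, the entropic uncertainty relation trades the min-entropy against a max-entropy term, and Serfling's sampling bound plus a Hamming-ball count controls that term, all carried out on sub-normalized states restricted to the event $F^{\mathrm{pe}}=\succ$ exactly as you anticipate, with the bad-event weight $\eps(\nu)^2$ converted to a purified-distance perturbation $\eps(\nu)$ that enters the hashing lemma as the smoothing parameter and yields the additive $2\eps(\nu)$. The only presentational difference is that the paper's Proposition~\ref{pr:ur} is symmetrized so that both entropy terms refer to the \emph{actual} string $X$ (your virtual complementary measurement being absorbed into the symmetry $q:\theta\mapsto\bar\theta$ of the untouched uniform seed $S^\Theta$), which lets the parameter-estimation statistics be applied directly to $X$ and $Y$ and spares you the extra, though available, argument transferring the observed error rate to the virtual pair $(\tilde X,\tilde Y)$.
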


Combining Theorems~\ref{th:correct} and~\ref{th:secure} we see that total error is thus composed of three components, $\eps_{\textrm{pe}}$, $\eps_{\rm ec}$, and $\eps_{\textrm{pa}}$. Let us take a close look at these errors for the case of large $m$. First, we note that $\eps_{\rm ec}$ vanishes asymptotically when we choose $t = \log(m)$, or any other slowly growing function of $m$. To make sure that $\eps_{\textrm{pe}}$ vanishes we choose $k = \sqrt{m}$ and $\nu = \log(m)^{-1}$, for example. For a robust operation at noise level $\delta$ it is necessary (and in theory sufficient) that the error correction leakage satisfies $r \approx (m - k) h(\delta)$. Since $h(\delta + \nu) \approx h(\delta)$ by continuity, we find that
$\eps_{\textrm{pa}}$ vanishes as long as
\begin{align}
     (m-k) \left( \log \frac{1}{\bar{c}} - 2 h(\delta) \right) - \ell - \log(m)
\end{align}
is positive and grows in $m$. Since $k$ and $\log m$ become negligible compared to $m$ as $m$ gets large, 
our protocol thus achieves the asymptotically optimal rate by~\citet{devetak05}, with $\ell/m = \log \frac{1}{\bar{c}} - 2 h(\delta)$.

\begin{figure}[t]
  \begin{center}
  \includegraphics[width=0.80\textwidth]{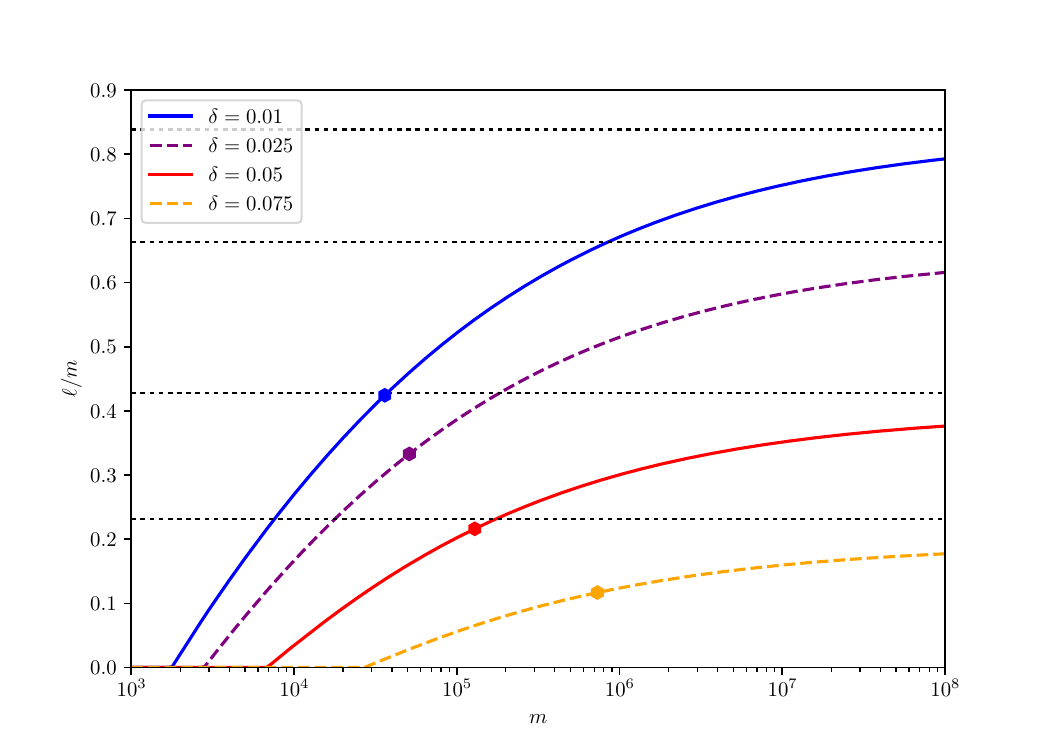}
  \end{center}
  \caption{This plot shows the maximal secret key rate $\ell/m$ as a function of $m$ for different error thresholds $\delta \in \{0.01, 0.025, 0.05, 0.075\}$, optimized over all protocols. The protocols are required to be $\eps$-secure with $\eps = 10^{-10}$ and the device parameter is assumed to be $\bar{c} = 0.5$. The error correction leakage is approximated to be $r = 1.1 (m - k) h(\delta)$, see for instance \cite{elkouss09}. (A more detailed approximation that includes finite-size effects was recently given in~\cite{tomamichel14-2}.) All remaining parameters, i.e.\ $\nu, k$ and $t$, are optimized numerically to maximize $\ell$ (code available online). The dotted horizontal lines show the corresponding asymptotic limit of the key rate for each value of $\delta$, given as $1 - 2 h(\delta)$. The markers indicate the points at which the key rate matches $50\%$ of the asymptotic limit.}
    \label{fig:keylength}
\end{figure}


\section{Security proof}
\label{sc:eb3}

The purpose of this section is to prove Theorems~\ref{th:correct} and~\ref{th:secure}.

\subsection{Error correction: Proof of Theorem~\ref{th:correct}}
\label{sc:sec-corr}

We wish to upper bound the probability of the protocol not aborting and outputting distinct final keys for Alice and Bob.
\begin{proof}[Proof of Theorem~\ref{th:correct}]
We consider the following chain of inequalities:
\begin{align}
\Pr[ K_A \neq K_B \land F^{\textrm{pe}} = F^{\textrm{ec}} =\, \succ ]_{\omega}
&\leq \Pr[ K_A \neq K_B \land  F^{\textrm{ec}} =\, \succ ]_{\omega} \\
&=  \Pr[ H_{\mathrm{pa}}(X) \neq H_{\mathrm{pa}}(X') \land H_{\mathrm{ec}}(X) = H_{\mathrm{ec}}(X') ]_{\omega}\\
&\leq  \Pr[ X \neq X' \land H_{\mathrm{ec}}(X) = H_{\mathrm{ec}}(X') ]_{\sigma}\\
& =\Pr[X \neq X']_{\sigma} \Pr[H_{\mathrm{ec}}(X) = H_{\mathrm{ec}}(X') \: | \: X \neq X']_{\sigma}\\
& \leq \Pr[H_{\mathrm{ec}}(X) = H_{\mathrm{ec}}(X') \: | \: X \neq X']_{\sigma} \\
& \leq |\mathcal{H}_{\mathrm{ec}}|^{-1}  = 2^{-t}.
\end{align}
The first inequality follows since we ignore the status of the flag $F^{\mathrm{pe}}$. The second inequality is a consequence of the fact $X = X'$ implies $H_{\mathrm{pa}}(X) = H_{\mathrm{pa}}(X')$. The third inequality follows since $\Pr[X \neq X']_{\sigma} \leq 1$ and the last one by definition of universal$_2$ hashing.
\end{proof}

\subsection{Measurements: Uncertainty tradeoff between smooth min- and max-entropy}
\label{sc:sec-first}

The crucial bound on the smooth entropy of Alice's measurement outcomes follows by the entropic uncertainty relation, suitably applied. We state the uncertainty relation in a natural form~\cite[Corollary~7.4]{mythesis}. 

\begin{proposition}
\label{pr:ur}
  Let $\tau_{APRS} \in \cSsub(APRS)$ be an arbitrary sub-normalized state with $P$ a classical register, and set $t := \tr\{\tau_{APRS}\}$. Furthermore, let $\eps \in [0, \sqrt{t})$ and let $q$ be a bijective function on $P$ that is a symmetry of $\tau_{ABCP}$ in the sense that $\tau_{ARS, P=p} = \tau_{ARS, P=q(p)}$ for all $p \in P$. Then, we have
  \begin{align}
    H_{\min}^{\eps}(X| P R)_{\sigma} + H_{\max}^{\eps}(X| P S)_{\sigma} \geq \log \frac{1}{c_q}
    , \qquad \textrm{where}  \label{eq:ucr-orig}
  \end{align}
  where $c_q = \max_{p \in P} \max_{x,z \in X} \big\| {F_{A}^{q(p),x}} \big( {F_{A}^{p,z}} \big)^{\dag} \big\|_{\infty}^2$.
  Here, $\sigma_{XPRS} = \cM_{A\to X|P}(\tau_{A P RS})$ for the map 
  \begin{align}
     \cM_{A\to X|P}\big[\cdot\big] &=  \tr_A \Bigg( \sum_{p \in P} \sum_{x \in X} 
    \ket{x}_{X} 
      \Big( \proj{p}_{P} \otimes F_{A}^{p,x} \Big)\ \cdot\ \Big( \proj{p}_{P} \otimes F_{A}^{p,x} \Big)^{\dagger} \bra{x}_{X} \Bigg) \,. \label{eq:urth2-orig}
  \end{align}
  and any set (indexed by $p \in P$) of generalized measurements $\{ F_A^{p,x} \}_{x \in X}$.
\end{proposition}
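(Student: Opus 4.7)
The plan is to reduce to the standard entropic uncertainty relation (Corollary~7.4 of~\cite{mythesis}) and use the symmetry $q$ to swap the measurement appearing in the max-entropy term. First I introduce an auxiliary measurement map $\tilde{\cM}_{A\to X|P}$ that, conditional on the classical register $P$ taking value $p$, applies the measurement with setting $q(p)$ rather than $p$; concretely, its Kraus operators are obtained from those of $\cM$ in~\eqref{eq:urth2-orig} by replacing $F_A^{p,x}$ with $F_A^{q(p),x}$. Set $\tilde{\sigma}_{XPRS} = \tilde{\cM}_{A\to X|P}(\tau_{APRS})$. Applying the standard uncertainty relation to $\tau$ with the two measurement maps $\cM$ and $\tilde{\cM}$, both controlled by the same classical register $P$, yields
\begin{align}
  H_{\min}^{\eps}(X|PR)_{\sigma} + H_{\max}^{\eps}(X|PS)_{\tilde{\sigma}} \geq \log \frac{1}{c_q},
\end{align}
because the only overlap between the $(p,z)$-labelled Kraus operator of $\cM$ and the $(p,x)$-labelled Kraus operator of $\tilde{\cM}$ that contributes is $\| F_A^{q(p),x}(F_A^{p,z})^{\dag}\|_{\infty}^2$; operators with different $P$-labels are orthogonal on the $P$ factor, so the relevant worst-case overlap is exactly $c_q$.

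The second step uses the hypothesis $\tau_{ARS,P=p} = \tau_{ARS,P=q(p)}$ to identify $\tilde\sigma$ with $\sigma$ up to a classical relabelling of $P$. For each $p$, the conditional slice $\tilde\sigma_{XRS,P=p}$ is obtained by measuring $\tau_{ARS,P=p}$ with setting $q(p)$; by the symmetry this equals the result of measuring $\tau_{ARS,P=q(p)}$ with setting $q(p)$, which is precisely $\sigma_{XRS,P=q(p)}$. Hence $\tilde\sigma$ arises from $\sigma$ by the unitary that permutes the classical basis of $P$ according to $q^{-1}$. Because this unitary acts entirely within the conditioning system of the max-entropy term, it leaves the smooth max-entropy invariant, so
\begin{align}
  H_{\max}^{\eps}(X|PS)_{\tilde{\sigma}} = H_{\max}^{\eps}(X|PS)_{\sigma}.
\end{align}
Substituting this equality into the uncertainty bound above yields the claim.

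The main subtlety I anticipate is the first step: confirming that the cited corollary genuinely supplies the inequality in the presence of the classical control register $P$ on \emph{both} measurement maps (rather than for a single fixed pair of measurements) and for sub-normalized inputs. If the reference is stated only for a fixed pair, a preliminary lift is needed: view the pair $(P,X)$ as the joint outcome of a single coherent measurement on $AP$ whose Kraus operators record $P$ into an ancilla, apply the standard result to the pair $(\cM,\tilde{\cM})$ lifted in this way, and observe that no further work is required to push the recorded $P$ into the conditioning system since it is already classical in $\tau$. The symmetry-based swap and the invariance under the permutation of $P$ are then essentially bookkeeping, making the second step routine.
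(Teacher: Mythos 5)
Your second step---using the hypothesis $\tau_{ARS,P=p}=\tau_{ARS,P=q(p)}$ to identify $\tilde\sigma$ with a $P$-relabelled copy of $\sigma$ and hence replace $H^{\eps}_{\max}(X|PS)_{\tilde\sigma}$ by $H^{\eps}_{\max}(X|PS)_{\sigma}$---is correct, and it is in fact the same device the paper's proof uses (there it appears as the permutation unitary $Q_P$ and the conjugated isometry $\bar V=Q_PVQ_P^{\dag}$, which ``coherently undoes'' the $p$-measurement and re-measures with setting $q(p)$). The genuine gap is in your first step. The ``standard'' relation you invoke, Corollary~7.4 of~\cite{mythesis}, \emph{is} Proposition~\ref{pr:ur}; citing it is circular in this context, and the entire purpose of Appendix~\ref{app:ur} is to supply a self-contained proof of it. What your proposal actually reduces the claim to is the two-measurement, $P$-controlled inequality $H^{\eps}_{\min}(X|PR)_{\sigma}+H^{\eps}_{\max}(X|PS)_{\tilde\sigma}\ge\log(1/c_q)$, and that inequality is exactly the content that needs proving.

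Your fallback lift does not close this gap. If you fold $P$ into the measurement outcome, the fixed-pair uncertainty relation yields $H^{\eps}_{\min}(PX|R)+H^{\eps}_{\max}(PX|S)\ge\log(1/c_q)$, and moving the recorded copy of $P$ from the measured register into the conditioning register is \emph{not} free: both $H_{\min}(PX|R)\ge H_{\min}(X|PR)$ and $H_{\max}(PX|S)\ge H_{\max}(X|PS)$ point in the wrong direction, and the chain rules needed to reverse them cost terms of order $\log|P|$, which is fatal in the application where $P=S^{\Pi}S^{\Xi}S^{\Theta}$. A workable lift does exist---keep classical copies of $P$ as side information on both sides and measure $AP$ with the block-diagonal Kraus operators $\sum_p\proj{p}_P\otimes F_A^{p,x}$, whose pairwise overlaps are again bounded by $c_q$ by block-diagonality---but even then one must still prove the underlying uncertainty relation for sub-normalized states and arbitrary quantum side information. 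That is where the real work of the paper's proof lies: the Stinespring dilation of $\cM$, the duality relation~\eqref{eq:dual} converting the max-entropy term into a min-entropy conditioned on the complementary purifying systems, data processing to discard the auxiliary purification $D$, and the operator-norm estimate $L^{\dag}L\le\|L\|_{\infty}^{2}\,\id$ applied to $F_A^{q(p),x}\big(F_A^{p,z}\big)^{\dag}$. None of this appears in your proposal, so as it stands it is a correct reduction of the symmetrized statement to an unproven core, not a proof.
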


A variant of this uncertainty relation was first shown in~\cite{mythesis}, based on the techniques introduced in~\cite{tomamichel11}.
We provide a full proof of the uncertainty relation in Appendix~\ref{app:ur} for completeness. In the following corollary we apply it to the situation at hand during our protocol.

\begin{corollary}
\label{cor:smooth-min-bound}
Consider the protocol \qkdeb{} in Section~\ref{sc:protocol} with $\mathrm{pe} = \{ k, \ldots \}$ applied to a state $\rho_{ABE} \in \cS(ABE)$ and the
state $\sigma_{X Y V W S^{\Pi} S^{\Xi} S^{\Theta} F^{\mathrm{pe}} E}$ as in~\eqref{eq:sigmastate} that results after measurement and parameter estimation. 
Define $\bar{c}$ as in~\eqref{eq:cbar}.
Then, for $\eps \in \big[0,\sqrt{\Pr[ F^{\mathrm{pe}} = \succ ]_{\sigma}}\big)$, we have
\begin{align}
  H_{\min}^{\eps}(X \land F^{\mathrm{pe}} =\succ|V W S^{\Pi} S^{\Xi} S^{\Theta} E)_{\sigma} + H_{\max}^{\eps}(X \land F^{\mathrm{pe}} =\succ|Y)_{\sigma}\geq (m-k) \log \frac{1}{\bar{c}} \,.
\end{align}
\end{corollary}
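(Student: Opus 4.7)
The plan is to apply Proposition~\ref{pr:ur} to the state just after parameter estimation but before the key-generation measurements, with the subset index and basis choice both playing the role of the classical register $P$, and then use the data-processing inequality to obtain the form stated in the corollary. Concretely, let $\tau$ be the subnormalized state $\tau_{A_{\bar\pi} B_{\bar\pi} V W S^{\Pi} S^{\Xi} S^{\Theta} E \land F^{\mathrm{pe}}=\succ}$ obtained from~\eqref{eq:tautau} after tracing out the parameter-estimation systems $A_\pi, B_\pi$, which are no longer used. I identify the UR registers as $P = (S^{\Theta}, S^{\Pi})$, $R = (V, W, S^{\Xi}, E)$ and $S = B_{\bar\pi}$, and take Alice's key-generation measurement $\cM_{A \to X | S^{\Pi} S^{\Theta}}$ from~\eqref{eq:xymeasure} as the map $\cM_{A \to X | P}$, whose Kraus operators are $F_{A}^{(\theta,\pi), x} = M_{A_{\bar\pi}}^{\theta, x}$.

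Next I would verify the symmetry hypothesis with the bijection $q(\theta, \pi) = (\bar\theta, \pi)$ complementing the basis bits while preserving the subset. The state conditioned on $S^{\Theta} = \theta, S^{\Pi} = \pi$ agrees with the one conditioned on $S^{\Theta} = \bar\theta, S^{\Pi} = \pi$ because $S^{\Theta}$ is sampled independently of everything else and is never read during parameter estimation; restricting to the event $\{F^{\mathrm{pe}}=\succ\}$, which depends only on $V$ and $W$, therefore preserves this independence. The overlap $c_q$ factors across the $n$ positions in $\bar\pi$, and using $\| M_{A_i}^{\bar\theta_i, x_i} (M_{A_i}^{\theta_i, z_i})^{\dag} \|_{\infty}^2 \leq c_i$ at each index together with the definition~\eqref{eq:cbar} of $\bar c$ yields
\begin{align}
  c_q \;=\; \max_{\pi, \theta, x, z} \prod_{i \in \bar\pi} \bigl\| M_{A_i}^{\bar\theta_i, x_i} (M_{A_i}^{\theta_i, z_i})^{\dag} \bigr\|_{\infty}^2 \;\leq\; \max_{\pi' \in \Pi_{m,n}} \prod_{i \in \pi'} c_i \;\leq\; \bar{c}^n .
\end{align}
Proposition~\ref{pr:ur} then gives, on the state $\sigma'$ obtained by applying $\cM_{A \to X | S^{\Pi} S^{\Theta}}$ to $\tau$,
\begin{align}
  H_{\min}^{\eps}(X \land F^{\mathrm{pe}}=\succ \,|\, S^{\Theta} S^{\Pi} V W S^{\Xi} E)_{\sigma'} + H_{\max}^{\eps}(X \land F^{\mathrm{pe}}=\succ \,|\, S^{\Theta} S^{\Pi} B_{\bar\pi})_{\sigma'} \;\geq\; n \log \tfrac{1}{\bar c} .
\end{align}

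To finish, I would apply Bob's key measurement $\cM_{B \to Y | S^{\Pi} S^{\Theta}}$, which acts only on $B_{\bar\pi}$ and commutes with Alice's measurement, and then discard $S^{\Theta}$ and $S^{\Pi}$ from the conditioning of the max-entropy. The min-entropy term is unaffected because the processing acts on a different tensor factor than its conditioning registers, and by the data-processing inequality~\eqref{eq:dpi}
\begin{align}
  H_{\max}^{\eps}(X \land F^{\mathrm{pe}}=\succ \,|\, Y)_{\sigma} \;\geq\; H_{\max}^{\eps}(X \land F^{\mathrm{pe}}=\succ \,|\, S^{\Theta} S^{\Pi} B_{\bar\pi})_{\sigma'} ,
\end{align}
with $\sigma$ the state from the corollary statement. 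Chaining with the previous inequality yields the claimed bound.

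The main obstacle is the careful bookkeeping around the symmetry hypothesis: one must verify that conditioning on $F^{\mathrm{pe}}=\succ$ preserves the independence of $S^{\Theta}$ from the other registers, which is what enables the basis symmetry even for the subnormalized state. The only other delicate point is the overlap calculation, where one leverages the tensor product structure of the measurement across positions and the fact that the complements $\bar\pi$ range exactly over $\Pi_{m,n}$ as $\pi$ varies over $\Pi_{m,k}$, so that the worst-case product of $c_i$'s is controlled by $\bar c^n$.
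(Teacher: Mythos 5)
Your proposal is correct and follows essentially the same route as the paper: apply Proposition~\ref{pr:ur} to the post-parameter-estimation state with the symmetry $\theta \mapsto \bar\theta$ (justified by the fact that $S^{\Theta}$ remains uniform and in product with everything else even after conditioning on $F^{\mathrm{pe}}=\succ$), compute $c_q \leq \bar{c}^{\,n}$ from the tensor-product structure and~\eqref{eq:cbar}, and then apply Bob's key measurement with the data-processing inequality on the max-entropy term. The only differences from the paper's proof are cosmetic bookkeeping choices (you place $S^{\Xi}$ in $R$ rather than $P$ and trace out $A_\pi, B_\pi$ up front), neither of which affects the argument.
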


\begin{proof}
Consider the state $\tau_{A B V W S^{\Pi} S^{\Xi} S^{\Theta} F^{\mathrm{pe}} E \land F^{\textrm{pe}}=\succ}$ defined in~\eqref{eq:tautau} and note that it is of the form
\begin{align}
  \tau_{A B V W S^{\Pi} S^{\Xi} S^{\Theta} F^{\mathrm{pe}} E \land F^{\textrm{pe}}=\succ} = \tau_{A B V W S^{\Pi} S^{\Xi} F^{\mathrm{pe}} E \land F^{\textrm{pe}}=\succ} \otimes \rho_{S^\Theta} .
\end{align}
This is the state of the system after parameter estimation and after measuring $V$ and $W$, but with the measurement of $X$ and $Y$ (in the basis determined by $S^{\Theta}$) delayed. In particular we have used the fact that the register $S^{\Theta}$ has not yet been touched in the protocol, and is thus independent and uniform and independent even after we consider the event $F^{\textrm{pe}} = \succ$.

Let us now apply Proposition~\ref{pr:ur} to this state. For this purpose we equate $P = S^{\Pi} S^{\Xi}S^{\Theta}$, $R = VW E$, and $S = B$. 
The symmetry is determined by the map $q: \theta \mapsto \bar{\theta}$ with $\bar{\theta}_i = 1 - \theta_i$, which only acts on $S^{\Theta}$ and since this system is uniform and in product with the rest of the state trivially satisfies the symmetry condition of the theorem. The measurement map is then simply $\cM_{A \to X|S^{\Pi} S^{\Theta}}$ and we can calculate
\begin{align}
  c_q = \max_{\pi \in \Pi_{m,k}} \max_{\theta,x,z \in \{ 0,1\}^n} \Big\| M_{A_{\bar{\pi}}}^{\bar{\theta},x} \big(M_{A_{\bar{\pi}}}^{{\theta},z}\big)^{\dag} \Big\|_{\infty}^2 
  = \max_{\pi \in \Pi_{m,k}} \left( \prod_{i \in \bar{\pi}} c_i \right)
  = \bar{c}^n \,.
\end{align}
Proposition~\ref{pr:ur} applied to our setup thus yields
\begin{align}
  H_{\min}^{\eps}(X \land F^{\mathrm{pe}} = \succ|VW E S^{\Pi} S^{\Xi} S^{\Theta})_{\sigma} + H_{\max}^{\eps}(X \land F^{\mathrm{pe}} = \succ|B S^{\Pi} S^{\Xi} S^{\Theta})_{\tau} \geq n \log \frac{1}{\bar{c}}. \label{eq:ucr2}
\end{align}
Finally, the statement of the Proposition follows by applying the measurement map $\cM_{B \to Y|S^{\Pi}S^{\Theta}}$ (and discarding the seed registers) and noting that $H_{\max}^{\eps}(X \land F^{\mathrm{pe}} = \succ|B S^{\Pi} S^{\Xi} S^{\Theta})_{\tau} \leq H_{\max}^{\eps}(X \land F^{\mathrm{pe}} = \succ|Y)_{\sigma}$ by the data-processing inequality.
\end{proof}


\subsection{Parameter estimation: Statistical bounds on smooth max-entropy}

This section covers the necessary statistical analysis. This is essentially a variation of the analysis in~\cite{tomamichellim11}, but requires a new tool, Lemma~\ref{lm:likely}, presented in Section \ref{subsection:2.2}, as we are finding it clearer to do the analysis with sub-normalized states here. 
We use the following standard tail bound. 
\begin{lemma}
  \label{lm:stat}
  Consider a set of binary random variables $Z = (Z_1, Z_2, \ldots, Z_m)$ with $Z_i$ taking values in $\{0, 1\}$ and $m = n+k$. Let $\Pi \in \Pi_{m,k}$ be an independent, uniformly distributed random variable. Then,
  \begin{align}
    \Pr \Bigg[ \sum_{i \in \Pi} Z_i \leq k\delta \ \land\ \sum_{i \in \bar{\Pi}} Z_i \geq n (\delta+\nu) \Bigg] \leq e^{ - 2 \nu^2  \frac{n k^2}{(n+k) (k+1)} }.
  \end{align}
\end{lemma}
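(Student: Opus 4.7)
The plan is to condition on the sequence $Z$ and observe that, since $\Pi$ is independent of $Z$, it suffices to prove the bound for an arbitrary fixed binary string $z \in \{0,1\}^m$ with total $s = \sum_i z_i$. Writing $A = \sum_{i \in \Pi} z_i$ and $B = \sum_{i \in \bar\Pi} z_i = s - A$, I would note that $\bar\Pi$ is itself a uniformly random size-$n$ subset of $[m]$, so $B$ is the sum of $n$ draws without replacement from the binary population $\{z_1, \ldots, z_m\}$, with mean $\mathbb{E}[B] = ns/m$.

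Next I would reduce the joint event to a one-sided deviation of $B$ from its mean. The condition $B \geq n(\delta+\nu)$ directly gives $B - \mathbb{E}[B] \geq n\bigl(\delta+\nu - s/m\bigr)$, while $A \leq k\delta$ combined with $A = s - B$ yields $B - \mathbb{E}[B] \geq k(s/m - \delta)$. These two affine functions of $s/m$ have opposite slopes, and their pointwise maximum is minimized at the crossing point $s/m = \delta + n\nu/m$, where both values coincide at $nk\nu/m$. Consequently, whenever the joint event occurs,
\begin{align}
  B - \mathbb{E}[B] \;\geq\; \frac{nk\nu}{m},
\end{align}
irrespective of the value of $s$.

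The final step is a concentration bound for sampling without replacement. Serfling's inequality, applied to the size-$n$ sample $\bar \Pi$ from a population of size $m = n+k$ with values in $\{0,1\}$, gives
\begin{align}
  \Pr\!\left[B - \mathbb{E}[B] \geq n\epsilon\right] \;\leq\; \exp\!\left(-\frac{2n\epsilon^2}{1 - (n-1)/m}\right) \;=\; \exp\!\left(-\frac{2n m\epsilon^2}{k+1}\right).
\end{align}
Substituting $\epsilon = k\nu/m$ and using $m = n+k$ yields the target exponent $2\nu^2 n k^2 / ((n+k)(k+1))$.

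I expect the main obstacle to be not the concentration step itself, but the reduction to a one-sided tail: one needs to verify that the deviation $nk\nu/m$ works \emph{regardless} of the unknown total $s$, which is settled by the min-max argument at the crossing point. A secondary subtlety is that one must invoke Serfling's bound rather than a bare Hoeffding inequality; applying Hoeffding directly to $A$ or $B$ would leave the exponent too small by a factor of order $m/(k+1)$, so the finite-population improvement with denominator $1 - (n-1)/m = (k+1)/m$ is exactly what produces the $(k+1)$ in the final bound.
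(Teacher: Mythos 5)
Your proposal is correct and follows essentially the same route as the paper's proof: condition on $Z$ using independence of $\Pi$, reduce the joint event to a one-sided deviation of $\frac{1}{n}\sum_{i\in\bar\Pi}z_i$ from the population mean by the amount $k\nu/m$ (uniformly in the unknown total $s$), and invoke Serfling's sampling-without-replacement bound with $1-f_n^*=(k+1)/m$. The only cosmetic difference is that the paper obtains the deviation by chaining the two threshold inequalities directly and substituting $\sum_{i\in\Pi}z_i=m\mu(z)-\sum_{i\in\bar\Pi}z_i$, whereas you take the maximum of two affine lower bounds and evaluate it at their crossing point; both yield the same exponent.
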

Remarkably this bound is valid without any assumption on the distribution of $Z$.

\begin{proof}
Let $\mu(z) = \frac{1}{m} \sum_{i \in [m]} z_i$. Consider the following sequence of inequalities:
  \begin{align}
    \Pr \Bigg[ \frac{1}{k} \sum_{i \in \Pi} Z_i \leq \delta\ \land\ \frac{1}{n} \sum_{i \notin \Pi} Z_i \geq \delta+\nu \Bigg] &\leq \Pr \Bigg[ \frac{1}{n} \sum_{i \in \bar{\Pi}} Z_i \geq \frac{1}{k} \sum_{i \in \Pi} Z_i +\nu \Bigg] \\
    &= \sum_{z \in \{0,1\}^{m} }\!\! \Pr[ Z = z ] 
      \Pr \Bigg[  \frac{1}{n} \sum_{i \in \bar{\Pi}} z_i \geq \frac{1}{k} \sum_{i \in \Pi} z_i+ \nu \Bigg] \\
        &= \sum_{z \in \{0,1\}^{m} }\!\! \Pr[ Z = z ] 
      \Pr \Bigg[ \frac{1}{n} \sum_{i \in \bar{\Pi}} z_i \geq \mu(z) + \frac{k \nu}{m} \Bigg] . \label{eq:backintohere}
  \end{align}
  Here, the first inequality holds since $A \implies B$ implies $\Pr[A] \leq \Pr[B]$ for any events $A$ and $B$. The first equality follows from the fact that $\Pi$ is independent of $Z$. The last equality follows by substituting $\sum_{i \in \Pi} z_i = m \mu(z) - \sum_{i \in \bar{\Pi}} z_i$ and rearranging the terms appropriately.
  
  Now note that the random sums $S_n := \sum_{i \in \bar{\Pi}} z_i$ can be seen as emanating from randomly sampling without replacement $n$ balls labelled by $z_i \in \{0, 1\}$ from a population $z$ with mean $\mu(z)$. Serfling's bound~\cite[Corollary~1.1]{serfling74} then tells us that
  \begin{align}
    \Pr \bigg[ \frac{1}{n} S_n \geq \mu(z) + \frac{k \nu}{m} \bigg] 
    &\leq e^{ - 2 n \bigg( \frac{k \nu}{m} \bigg)^2 \frac{1}{1 - f_n^*} } = e^{ - 2 \nu^2  \frac{n k^2}{(n+k) (k+1)} } \,.
  \end{align}
where we substituted $f_n^* = \frac{n-1}{m}$. It is important to note that this bound is independent of $\mu(z)$. Thus, substituting this back into~\eqref{eq:backintohere}, we conclude the proof.
\end{proof}

The following lemma ensures that disregarding an unlikely event will not disturb the state too much in terms of the purified distance.
\begin{lemma}
\label{lm:likely}
Let $\rho_{AX} \in \cSsub(AX)$ be classical on $X$ and $\Omega: \mathcal{X} \to \{0, 1\}$ an event with $\Pr[\Omega]_{\rho} = \eps < \tr\{\rho_{AX}\}$. Then there exists a sub-normalized state $\tilde{\rho}_{AX} \in \cSsub(AX)$ with $\Pr[\Omega]_{\tilde{\rho}} = 0$ and $P(\rho_{AX}, \tilde{\rho}_{AX}) \leq \sqrt{\eps}$.
\end{lemma}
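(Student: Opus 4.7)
The plan is to exhibit an explicit $\tilde{\rho}_{AX}$, namely the restriction of $\rho_{AX}$ to the complementary event $\lnot\Omega$, rescaled so as to balance the two terms of the generalized fidelity. Writing $t := \tr\{\rho_{AX}\} \leq 1$, I would set
\begin{align}
\tilde{\rho}_{AX} := \frac{1}{1-\eps}\,\rho_{AX \land \lnot\Omega} ,
\end{align}
and then verify the three required properties in turn: (i) $\tilde{\rho}_{AX} \in \cSsub(AX)$, (ii) $\Pr[\Omega]_{\tilde{\rho}}=0$, and (iii) $P(\rho_{AX},\tilde{\rho}_{AX}) \leq \sqrt{\eps}$. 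Properties (i) and (ii) are immediate: $\tr\{\tilde{\rho}_{AX}\} = (t-\eps)/(1-\eps) \leq 1$ since $t \leq 1$, and $\tilde{\rho}_{AX}$ has no weight on $\Omega$ by construction.

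The content is entirely in the bound on the purified distance. For this I would exploit the fact that $\rho_{AX}$ and $\tilde{\rho}_{AX}$ are both classical on $X$ and block diagonal in the basis $\{\ket{x}\}$, so the square-root expression inside the generalized fidelity decouples over $x$. Writing $\rho_{AX} = \sum_x \proj{x}_X \otimes \rho_{A,x}$, a direct calculation gives
\begin{align}
\tr\sqrt{\sqrt{\rho_{AX}}\,\tilde{\rho}_{AX}\sqrt{\rho_{AX}}} = \frac{1}{\sqrt{1-\eps}}\sum_{x : \Omega(x)=0} \tr\{\rho_{A,x}\} = \frac{t-\eps}{\sqrt{1-\eps}} .
\end{align}
Combined with $\sqrt{1-\tr\{\rho_{AX}\}}\sqrt{1-\tr\{\tilde{\rho}_{AX}\}} = \sqrt{1-t}\cdot\sqrt{(1-t)/(1-\eps)} = (1-t)/\sqrt{1-\eps}$, plugging into Definition~\ref{df:pd} yields $F(\rho_{AX},\tilde{\rho}_{AX}) = \bigl((t-\eps+1-t)/\sqrt{1-\eps}\bigr)^2 = 1-\eps$, and therefore $P(\rho_{AX},\tilde{\rho}_{AX}) = \sqrt{\eps}$, with equality.

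The only real subtlety is the choice of the scaling $1/(1-\eps)$, which is not the most obvious candidate: one might first try $\tilde{\rho}_{AX} = \rho_{AX\land\lnot\Omega}$ itself, or the fully renormalized conditional state $\rho_{AX\land\lnot\Omega}/(t-\eps)$, both of which turn out to give strictly weaker bounds as soon as $t < 1$. The correct rescaling is the unique one that saturates the Cauchy--Schwarz equality condition for the two terms entering the generalized fidelity, making the surviving weight $\tr\{\tilde{\rho}_{AX}\}$ and the shortfall $1-\tr\{\tilde{\rho}_{AX}\}$ sit in the right proportion to $t-\eps$ and $1-t$. Once this choice is identified, everything else is a direct verification.
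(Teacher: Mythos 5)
Your proof is correct and essentially identical to the paper's: the paper writes the ansatz $\tilde{\rho}_{AX} = \frac{\sin^2(\phi)}{t-\eps}\rho_{AX\land\lnot\Omega}$ and optimizes over $\phi$, which lands exactly on your scaling $\frac{1}{1-\eps}$, and the resulting fidelity computation $F = 1-\eps$ is the same. Writing the optimal normalization down directly and verifying it is a perfectly valid shortcut.
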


\begin{proof}
  Set $\xi = \tr\{\rho_{AX}\}$. Let $\tilde{\rho}_{AX} = \frac{\sin^2(\phi)}{\xi - \eps} \rho_{AX \land \lnot \Omega}$ for some normalization $\phi \in [0, \frac{\pi}{2}]$ to be determined. Then the generalized fidelity evaluates to
  \begin{align}
    \sqrt{F\big(\rho_{AX}, \tilde{\rho}_{AX}\big)} &= \frac{\sin(\phi)}{\sqrt{\xi-\eps}} \tr \left\{ \sqrt{\sqrt{\rho_{AX}} \rho_{AX\land \lnot \Omega} \sqrt{\rho_{AX}}} \right\} + \cos(\phi) \sqrt{1-\xi} \\
    &= \sin(\phi) \sqrt{\xi - \eps} + \cos(\phi) \sqrt{1 - \xi} \,. \label{eq:substhere}
  \end{align}
  This expression is maximized for $\tan(\phi) = \sqrt{\frac{\xi - \eps}{1-\xi}}$ and, thus, $\sin(\phi) = \sqrt{\frac{\xi - \eps}{1 - \eps}}$ and $\cos(\phi) = \sqrt{\frac{1 - \xi}{1 - \eps}}$. Substituting this into~\eqref{eq:substhere} yields $F(\rho_{AX}, \tilde{\rho}_{AX}) = 1 - \eps$, concluding the proof.
\end{proof}

With this in hand, we wish to bound the smooth max-entropy of the state when passing the parameter estimation test.
\begin{proposition}
  \label{pr:smooth-max-bound}
  Consider the protocol \qkdeb{} in Section~\ref{sc:protocol} with $\mathrm{pe} = \{k, \delta\}$ applied to a state $\rho_{AB} \in \cS(AB)$ and the state $\sigma_{X Y F^{\textrm{pe}}}$ in~\eqref{eq:sigmastate} that results after measurement and parameter estimation. For any $\nu \in (0, 1)$, we first define
  \begin{align}
    \eps(\nu) := e^{-   \frac{(m-k) k^2 \nu^2}{m (k+1)} } \,. \label{eq:epsdef}
  \end{align}
  Then, for any $\nu \in (0, \frac12-\delta]$ such that $\eps(\nu)^2 < \Pr[ F^{\mathrm{pe}} =\, \succ]_{\sigma}$, the following holds:
\begin{align}
  \label{eq:thisstatement}
  H_{\max}^{\eps(\nu)}(X \land F^{\mathrm{pe}} = \succ|Y)_{\sigma} \leq (m - k)\, h(\delta + \nu) \qquad \textrm{where} \qquad h(x) := - x \log x - (1-x)\log(1-x) \,.
\end{align}
\end{proposition}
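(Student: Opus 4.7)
The plan is to combine the classical statistical bound in Lemma~\ref{lm:stat} with the smoothing lemma (Lemma~\ref{lm:likely}) and a standard Hamming-ball volume bound on the (classical) max-entropy. The key observation is that because the measurement bases are encoded in one uniform string $\Phi$ later split by $\Pi$ into $\Xi$ and $\Theta$, and because the measurements for Alice and Bob commute across subsystems, we can regard the process as first producing, for every fixed $\phi$, a single joint distribution of $m$ outcome pairs $(X_i',Y_i')$ and only afterwards letting the (independent) random subset $\Pi$ decide which indices go to $(V,W)$ and which to $(X,Y)$. Setting $Z_i = \mathbf{1}\{X_i' \ne Y_i'\}$, the random variable $\Pi$ is independent of $Z$, so Lemma~\ref{lm:stat} applies directly.

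First I would define the classical event
\begin{align}
\Omega(x,y) = \mathbf{1}\!\left\{ \sum_{i=1}^{n} \mathbf{1}\{x_i \ne y_i\} \geq n(\delta+\nu) \right\}
\end{align}
on the registers $XY$. Lemma~\ref{lm:stat} then yields
\begin{align}
\Pr[\,F^{\mathrm{pe}} = \succ \ \land\ \Omega\,]_{\sigma} \;\leq\; e^{-2\nu^2 \frac{nk^2}{(n+k)(k+1)}} \;=\; \eps(\nu)^2 ,
\end{align}
using $m-k=n$ and matching the definition in~\eqref{eq:epsdef}. This is the content that requires the most care: one has to check that the independence of $\Pi$ from all the other randomness in the protocol survives the quantum measurement step, which is exactly what the commuting-measurement assumption guarantees.

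Next, since $\eps(\nu)^2 < \Pr[F^{\mathrm{pe}}=\succ]_\sigma = \tr\{\sigma_{XY\land F^{\mathrm{pe}}=\succ}\}$ by assumption, Lemma~\ref{lm:likely} (applied with $\rho = \sigma_{XY\land F^{\mathrm{pe}}=\succ}$ and event $\Omega$) produces a subnormalized classical state $\tilde{\sigma}_{XY}$ with $\Pr[\Omega]_{\tilde{\sigma}}=0$ and $P(\sigma_{XY\land F^{\mathrm{pe}}=\succ},\tilde{\sigma}_{XY}) \leq \eps(\nu)$. Thus $\tilde{\sigma}_{XY}$ is supported entirely on pairs $(x,y)$ with Hamming distance at most $n(\delta+\nu)$.

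Finally, I would bound the (unsmoothed) max-entropy of $\tilde{\sigma}$ by the size of the support in $X$ for each fixed $y$. For a classical (sub-normalized) state this gives
\begin{align}
H_{\max}(X|Y)_{\tilde{\sigma}} \;\leq\; \log \max_{y} \big| \{ x : \tilde{\sigma}(x,y) > 0 \} \big| \;\leq\; \log \sum_{i=0}^{\lfloor n(\delta+\nu)\rfloor} \binom{n}{i} \;\leq\; n\, h(\delta+\nu),
\end{align}
where the last inequality is the standard Gilbert bound on the volume of a Hamming ball and uses $\delta+\nu \leq \tfrac{1}{2}$. By the definition of the smooth max-entropy as an infimum over the $\eps(\nu)$-ball in purified distance, this yields~\eqref{eq:thisstatement}. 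The only real obstacle is the bookkeeping in the first step; once one is convinced that $\Pi$ is truly independent of the outcome vector $Z$, the rest is a direct assembly of the already-stated lemmas and the elementary volume bound.
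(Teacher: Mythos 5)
Your proposal is correct and follows essentially the same route as the paper's proof: the same bad event $\Omega$, the same application of Lemma~\ref{lm:stat} giving the bound $\eps(\nu)^2$, the same smoothing via Lemma~\ref{lm:likely}, and the same Hamming-ball volume bound on the support of $X$ given $Y$ for the smoothed state. Your remark about verifying the independence of $\Pi$ from the error pattern $Z$ is exactly the point the paper handles implicitly through the explicit form of the post-measurement state $\sigma_{VWXYS^{\Pi}S^{\Xi}S^{\Theta}}$, where the outcome distribution depends only on the reassembled basis string and not on $\pi$ itself.
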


Intuitively, this result is a consequence of the fact that when we pass the parameter estimation test, conditioned on any particular value of $Y$, the support of $X$ is small as the number of errors (positions where $x_i \neq y_i$) is bounded (with high probability).

\begin{proof}
  We use the shorthand $p = \Pr[ F^{\mathrm{pe}} =\, \succ]_{\sigma}$ and $n := m - k$. Define the event event $\Omega := 1 \big\{\sum_{i \in [n]} 1\{ X_i \neq Y_i \} \geq n (\delta + \nu)\big\}$.
  We show that the statement in~\eqref{eq:thisstatement} holds when $p > \eps^{2}$. Using Lemma~\ref{lm:stat}, we find
\begin{align}
  &\Pr\bigg[ F^{\mathrm{pe}} = \succ  \land \Omega \bigg]_{\sigma} = \Pr\bigg[ \sum_{i \in [k]} 1\{ V_i \neq W_i \} \leq k \delta \ \land\ \sum_{i \in [n]} 1\{ X_i \neq Y_i \} \geq n (\delta + \nu) \bigg]_{\sigma} \leq \eps(\nu)^2 \label{eq:omegabound}.
\end{align}
This gives an upper bound on the probability of the unlikely coincidence where the parameter estimation test passes with threshold $\delta$ but the fraction of errors between $X$ and $Y$ exceeds the threshold $\delta$ by a constant amount.
We now want to remove the above unlikely events from our state $\sigma_{XYF^{\textrm{pe}} \land F^{\textrm{pe}} = \,\succ}$ by means of smoothing. Lemma~\ref{lm:likely} allows to do just that, and we introduce the state $\tilde{\sigma}_{XYF^{\textrm{pe}}}$ that is $\eps(\nu)$-close to $\sigma_{XYF^{\textrm{pe}} \land F^{\textrm{pe}} = \,\succ}$ in purified distance and satisfies $\Pr[\Omega]_{\tilde{\sigma}} = 0$.
From this we conclude that 
\begin{align}
  H_{\max}^{\eps(\nu)}(X \land F^{\mathrm{pe}} = \succ |Y)_{\sigma} \leq H_{\max}(X \land F^{\mathrm{pe}} = \succ |Y)_{\tilde{\sigma}} = H_{\max}(X|Y)_{\tilde{\sigma}} \,, \label{eq:bound-smooth-max-ent}
\end{align} 
where the last equality is a consequence of the fact that $\tilde{\sigma}$ is only supported on $F^{\mathrm{pe}} = \succ$.

It remains to show that $H_{\max}(X|Y)_{\tilde{\sigma}} \leq n h(\delta+\nu)$. Using the expansion of the conditional max-entropy in~\cite[Sec.~4.3.2]{mythesis}, we find
\begin{align}
  H_{\max}(X|Y)_{\tilde{\sigma}} &= \log \Bigg( \sum_{y \in \{0,1\}^n} \Pr[Y=y]_{\tilde{\sigma}} \ 2^{ H_{\max}(X|Y=y)_{\tilde{\sigma}} } \Bigg) \\
  &\leq \max_{y \in \{0,1\}^n \atop \Pr[Y = y]_{\tilde{\sigma}} > 0} H_{\max}(X|Y=y)_{\tilde{\sigma}} \\
  &\leq \max_{y \in \{0,1\}^n \atop \Pr[Y = y]_{\tilde{\sigma}} > 0} \log \Big| \Big\{ x \in \{0,1\}^n : \Pr[X=x|Y=y]_{\tilde{\sigma}} > 0 \Big\} \Big| \\
  &= \max_{y \in \{0,1\}^n} \log \Big| \Big\{ x \in \{0,1\}^n : \Pr[X=x \land Y=y]_{\tilde{\sigma}} > 0 \Big\} \,.
\end{align}
In the ultimate inequality we used that the (unconditional) R\'enyi entropy is upper bounded by the logarithm of the distribution's support~\cite{renyi61}. Furthermore, since $\Pr[\Omega]_{\tilde{\sigma}} = 0$, we have
  \begin{align}
    \Big| \Big\{ x \in \{0,1\}^n : \Pr[X=x \land Y=y]_{\tilde{\sigma}} > 0 \Big\} \Big| &\leq 
    \sum_{x \in \{0,1\}^n} 1 \Bigg\{ \sum_{i \in [n]} 1\{ x_i \neq y_i \} < n (\delta + \nu) \Bigg\} \\
    &= \sum_{e \in \{0,1\}^n} 1 \bigg\{ \sum_{i=1}^n e_i < n (\delta+\nu) \bigg\} \label{eq:comb1} \\
    &= \sum_{\lambda = 0}^n {n \choose \lambda} 1 \big\{ \lambda < n (\delta+\nu) \big\} = \sum_{\lambda = 0}^{\lfloor n (\delta + \nu) \rfloor} {n \choose \lambda}. \label{eq:comb2}
  \end{align}
  Here, in order to derive~\eqref{eq:comb1} we reparameterize $e_i = x_i \textrm{ xor } y_i$, indicating if there is an error at the $i$-th position. Finally, in~\eqref{eq:comb2} we substitute $\lambda = \sum_{i=1}^n e_i$, the total number of errors.
  The inequality $\sum_{\lambda = 0}^{\lfloor n (\delta + \nu) \rfloor} {n \choose \lambda} \leq 2^{n h(\delta+\nu)}$ for $\delta+\nu \leq 1/2$ (see, e.g.,~\cite[Sec.~1.4]{vanlint99}) then concludes the proof.
\end{proof}

\subsection{Privacy amplification: Proof of Theorem~\ref{th:secure}}
\label{sc:hash}
\label{sc:sec-last}

The last main ingredient of our proof is a so-called Leftover Hashing Lemma. It ensures that if the smooth min-entropy of $X$ given some side information $B$ is large, then we can extract randomness from $X$ that is independent of $B$.
The Leftover Hashing Lemma is, up to a slight change of the definition of the smooth min-entropy, due to Renner~\cite[Corollary~5.6.1]{renner05}. The proof of this exact statement is provided in Appendix~\ref{app:leftover} for the convenience of the reader.
\begin{proposition}
\label{pr:leftover}
Let $\sigma_{XD} \in \cSsub(XD)$ be classical on $X$ and $\eps \in \big[0, \sqrt{\tr(\sigma_{XD})}\big)$. Let $\cH$ be a universal$_2$ family of hash functions from $\mathcal{X} = \{0,1\}^n$ to $\mathcal{K} = \{0,1\}^\ell$. Moreover, let $\rho_{S^{H}} = \frac{1}{|\cH|} \sum_{h \in \cH} \proj{h}_{S^H}$. Then,
\begin{align}
\|\omega_{KS^HD} - \chi_K \otimes \omega_{S^HD}\|_{\tr} \leq \frac12 2^{-\frac{1}{2}\left( H_{\min}^{\eps}(X|D)_{\sigma} - \ell \right)} + 2 \eps
\end{align}
where $\chi_K = \frac{1}{2^\ell} \id_{K}$ is the fully mixed state and $\omega_{KS^HD} = \tr_{X}\big( \mathcal{E}_{f}(\sigma_{XD} \otimes \rho_{S^{H}}) \big)$ for the function $f: (x, h) \mapsto h(x)$ that acts on the registers $X$ and $S^{H}$.
\end{proposition}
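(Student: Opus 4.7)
The plan is to split the argument into two parts: (i) a smoothing reduction to a non-smooth leftover hashing bound via the triangle inequality, and (ii) a sandwiched two-norm calculation that exploits the universal$_2$ property of $\cH$.

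For the smoothing step, I would pick $\tilde{\sigma}_{XD} \in \cSsub(XD)$ attaining the smooth min-entropy with $P(\sigma_{XD}, \tilde{\sigma}_{XD}) \leq \eps$, and let $\tilde{\omega}_{KS^HD}$ denote the output of the privacy amplification CPTP map applied to $\tilde{\sigma}_{XD} \otimes \rho_{S^H}$. Since the purified distance is non-increasing under CPTP maps by~\eqref{eq:pd-dpi} and upper bounds the trace distance by the inequality recalled in Section~\ref{sec:not1}, we obtain $\|\omega - \tilde{\omega}\|_{\tr} \leq \eps$ and the analogous bound for their $S^H D$-marginals. The triangle inequality with intermediate state $\chi_K \otimes \tilde{\omega}_{S^HD}$ then reduces the claim to the non-smooth bound
\begin{align}
\|\tilde{\omega}_{KS^HD} - \chi_K \otimes \tilde{\omega}_{S^HD}\|_{\tr} \;\leq\; \tfrac{1}{2}\, 2^{-\frac{1}{2}(H_{\min}(X|D)_{\tilde{\sigma}} - \ell)},
\end{align}
valid for any subnormalized classical-quantum $\tilde{\sigma}_{XD}$.

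For the non-smooth bound, let $\tau_D \in \cS(D)$ witness the min-entropy, so $\tilde{\sigma}_{XD} \leq 2^{-H_{\min}(X|D)_{\tilde{\sigma}}}\, \id_X \otimes \tau_D$, and set $M := \tilde{\omega}_{KS^HD} - \chi_K \otimes \tilde{\omega}_{S^HD}$. A Cauchy--Schwarz argument after sandwiching with $\tau_D^{-1/4}$ yields
\begin{align}
\|M\|_{\tr} \;\leq\; \tfrac{1}{2}\sqrt{2^{\ell}\,|\cH|}\; \Big\|\bigl(\id_{KS^H} \otimes \tau_D^{-1/4}\bigr)\,M\,\bigl(\id_{KS^H} \otimes \tau_D^{-1/4}\bigr)\Big\|_2.
\end{align}
Squaring the two-norm on the right and averaging over $h$, the universal$_2$ identity $|\cH|^{-1}\sum_{h \in \cH}\mathbf{1}\{h(x) = h(x')\} = 2^{-\ell} + (1 - 2^{-\ell})\mathbf{1}\{x = x'\}$ splits the resulting collision sum; the $2^{-\ell}$ contribution from pairs $x \neq x'$ is cancelled exactly by the subtracted $\chi_K \otimes \tilde{\omega}_{S^HD}$ term. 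What remains is $|\cH|^{-1}\,\tr\big[(\tau_D^{-1/4}\,\tilde{\sigma}_{XD}\,\tau_D^{-1/4})^2\big]$, and applying the witness inequality $\tilde{\sigma}_{XD} \leq 2^{-H_{\min}}\id_X \otimes \tau_D$ inside the trace absorbs one copy of $\tilde{\sigma}_{XD}$ and bounds the remaining factor by $\tr(\tilde{\sigma}_{XD}) \leq 1$. Collecting the prefactors $\tfrac12\sqrt{2^\ell |\cH|}$ and $\sqrt{|\cH|^{-1}\,2^{-H_{\min}}}$ gives the claimed estimate.

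The main obstacle is the identification and cancellation of the $2^{-\ell}$ collision contribution in the universal$_2$ expansion against the $\chi_K \otimes \tilde{\omega}_{S^HD}$ subtraction: without the $\chi_K$ correction in $M$, the universal$_2$ bound would only yield a trivial $O(1)$ estimate, so every factor of $2^{-\ell}$, $|\cH|^{-1}$ and the $\sqrt{2^{\ell}|\cH|}$ prefactor from Cauchy--Schwarz must balance precisely. The remaining ingredients -- the smoothing reduction, the $\tau_D^{-1/4}$ sandwich, and the extraction of $2^{-H_{\min}}$ from the min-entropy witness -- are mechanical once this cancellation is set up.
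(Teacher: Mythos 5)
Your proposal is correct and follows essentially the same route as the paper: a smoothing reduction via the triangle inequality and the monotonicity of the purified distance, followed by a weighted two-norm (Cauchy--Schwarz/H\"older) estimate in which the universal$_2$ collision identity cancels the $\chi_K$ subtraction and the min-entropy witness $\tau_D$ is absorbed, leaving only $\tr\{\tilde{\sigma}_{XD}\}\leq 1$. The only cosmetic differences are that you sandwich symmetrically with $\tau_D^{-1/4}$ and include the $S^H$ register in the Cauchy--Schwarz step (so the $|\cH|$ factors cancel at the end), whereas the paper conditions on $h$ first, uses a one-sided $\tau_D^{-1/2}$ weighting together with Jensen's inequality, and finishes with the operator anti-monotonicity of the inverse; both versions yield the identical bound.
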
%

The following technical lemma allows us to bound the smooth conditional min-entropy restricted to events in terms of the unrestricted entropy.
\begin{lemma}
  \label{lm:conditioning}
  Let $\rho_{ABXY} \in \cSsub(ABXY)$ be classical on $X$ and $Y$ and let $\Omega: \mathcal{X} \times \mathcal{Y} \to \{0, 1\}$ be an event with $\Pr[\Omega]_\rho > 0$. Then, for $\eps \in \big[ 0, \sqrt{\Pr[\Omega]_{\rho}}\big)$, we have
  \begin{align}
    H_{\min}^{\eps}(AX \land \Omega |BY)_{\rho} \geq H_{\min}^{\eps}(AX|BY)_{\rho} 
    \quad \textrm{and} \quad
        H_{\min}^{\eps}(AX \land \Omega |B)_{\rho} \geq H_{\min}^{\eps}(AX|B)_{\rho} 
  \end{align}
\end{lemma}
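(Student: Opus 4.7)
The plan is to prove both inequalities by the same template: given any candidate smoothing state $\rhot$ for the unrestricted smooth min-entropy, construct a state $\rhot'$ in the smoothing ball of the restricted state with no smaller min-entropy, and then take the supremum. The key driving observation is that for a state classical on $X$ and $Y$, the event-restriction map $\mathcal{F}_\Omega(\cdot) := \sum_{(x,y) : \Omega(x,y)=1} \Pi_{x,y} \cdot \Pi_{x,y}$, with $\Pi_{x,y} := \proj{x}_X \otimes \proj{y}_Y$, sends $\rhot_{AXBY} \mapsto \rhot_{AXBY \land \Omega}$, is completely positive and trace-non-increasing, and produces an operator dominated by $\rhot_{AXBY}$ itself because it just zeroes out some diagonal blocks of the classical block structure.

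For the first inequality, I start with any $\rhot_{AXBY}$ satisfying $P(\rhot_{AXBY}, \rho_{AXBY}) \leq \eps$. I may assume without loss of generality that $\rhot_{AXBY}$ is classical on $XY$: pinching in the $XY$ basis leaves $\rho_{AXBY}$ invariant, hence preserves the ball condition by~\eqref{eq:pd-dpi}, and does not decrease $H_{\min}(AX|BY)$ (the $X$-pinching is unital CP on the entropy register, and the $Y$-pinching is CPTP on the conditioning register and can only increase entropy by~\eqref{eq:dpi}). Set $\rhot' := \rhot_{AXBY \land \Omega} = \mathcal{F}_\Omega(\rhot_{AXBY})$. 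Applying~\eqref{eq:pd-dpi} to $\mathcal{F}_\Omega$ yields $P(\rhot', \rho_{AXBY \land \Omega}) \leq \eps$, and since $\rhot' \leq \rhot_{AXBY}$ as positive semidefinite operators, any $\sigma_{BY} \in \cS(BY)$ witnessing $\rhot_{AXBY} \leq 2^{-\lambda} \id_{AX} \otimes \sigma_{BY}$ also witnesses the same bound for $\rhot'$. Hence $H_{\min}(AX|BY)_{\rhot'} \geq H_{\min}(AX|BY)_{\rhot_{AXBY}}$, and the first inequality follows after taking the supremum.

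The second inequality requires an extra step to extend the smoothed state to include the $Y$ register. Starting from $\rhot_{AXB}$ in the $\eps$-ball of $\rho_{AXB}$ (WLOG classical on $X$, by the same pinching trick), I invoke the extension property of the purified distance stated after Definition~\ref{df:pd} to obtain $\rhot_{AXBY}$ with $\tr_Y \rhot_{AXBY} = \rhot_{AXB}$ and $P(\rhot_{AXBY}, \rho_{AXBY}) \leq \eps$. A further $XY$-pinching preserves both the $\eps$-ball and the $AXB$ marginal\,---\,the latter because $\rhot_{AXB}$ is already classical on $X$, so $\tr_Y \circ \mathcal{P}_{XY} = \mathcal{P}_X \circ \tr_Y$ acts as the identity when composed with the extension\,---\,so I may assume $\rhot_{AXBY}$ is classical on $XY$ and still extends $\rhot_{AXB}$. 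Set $\rhot' := \tr_Y \rhot_{AXBY \land \Omega}$. Two applications of~\eqref{eq:pd-dpi} (to $\mathcal{F}_\Omega$ and then to $\tr_Y$) give $P(\rhot', \rho_{AXB \land \Omega}) \leq \eps$, and the operator inequality $\rhot' = \tr_Y \rhot_{AXBY \land \Omega} \leq \tr_Y \rhot_{AXBY} = \rhot_{AXB}$ yields $H_{\min}(AX|B)_{\rhot'} \geq H_{\min}(AX|B)_{\rhot_{AXB}}$.

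The main technical subtlety is keeping the classicality straight: without it, the event restriction need not produce an operator inequality and the chain $\rhot' \leq \rhot$ breaks. The pinching manipulations handle this cleanly in both cases, after which the purified-distance bounds fall out from the data-processing inequality applied to the appropriate CP trace-non-increasing maps.
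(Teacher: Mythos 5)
Your proof is correct and follows essentially the same route as the paper's: restrict the (WLOG $XY$-classical) smoothing state to the event $\Omega$, note that this is a CP trace-non-increasing map so the purified-distance bound is preserved, and use the operator inequality $\rhot_{\land\Omega}\leq\rhot$ so that the same witness $\sigma$ certifies the min-entropy; for the second inequality, both arguments first extend the smoothed state to $Y$ via the purified-distance extension property before restricting and tracing out. The only cosmetic difference is that you justify the classicality reduction by an explicit pinching argument where the paper cites a lemma, and you spell out that the pinched extension still has the right $AXB$ marginal, which the paper leaves implicit.
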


\begin{proof}
  Let us start with the first inequality.
  By definition of the smooth conditional min-entropy there exists a sub-normalized state $\tilde{\rho}_{ABXY} \in \cSsub(ABXY)$ and a state $\sigma_{BY} \in \cS(BY)$ such that
  \begin{align}
    \tilde{\rho}_{ABXY} \leq 2^{-H_{\min}^{\eps}(AX|BY)_{\rho}} \id_{AX} \otimes \sigma_{BY} 
    \quad \textrm{and} \quad
    P(\tilde{\rho}_{ABXY}, \rho_{ABXY}) \leq \eps \,.
  \end{align}
  Without loss of generality~\cite[Lemma~6.6]{mybook} we can assume that $\tilde{\rho}_{ABXY}$ is classical on $X$ and $Y$. As such, we have
  $\tilde{\rho}_{ABXY \land \Omega} \leq \tilde{\rho}_{ABXY}$ and $P(\tilde{\rho}_{ABXY \land \Omega}, \rho_{ABXY \land \Omega}) \leq P(\tilde{\rho}_{ABXY}, \rho_{ABXY}) \leq \eps$ by the monotonicity of the purified distance under trace non-increasing maps.
  The desired inequality then follows by definition of the smooth min-entropy evaluated for the state with the event $\Omega$.

  The second inequality follows similarly. By definition of the smooth conditional min-entropy there exists a sub-normalized state $\tilde{\rho}_{ABX} \in \cSsub(ABX)$ and a state $\sigma_{B} \in \cS(B)$ such that
  \begin{align}
    \tilde{\rho}_{ABX} \leq 2^{-H_{\min}^{\eps}(AX|B)_{\rho}} \id_{AX} \otimes \sigma_{B} 
    \quad \textrm{and} \quad
    P(\tilde{\rho}_{ABX}, \rho_{ABX}) \leq \eps \,.
  \end{align}
  As discussed previously, this implies in particular the existence of an extension $\tilde{\rho}_{ABXY} \in \cSsub(ABXY)$ that satisfies $P(\tilde{\rho}_{ABXY}, \rho_{ABXY}) \leq \eps$. Without loss of generality we can assume that $\tilde{\rho}_{ABXY}$ is classical on $X$ and $Y$. (To see this, note that pinching in the computational basis on $Y$ would indeed only decrease the distance between the $\tilde{\rho}_{ABXY}$ and $\rho_{ABXY}$, leaving the latter state invariant.) On the state $\tilde{\rho}_{ABXY}$ we can now define the restriction on the event $\Omega$ and find
  \begin{align}
    \tilde{\rho}_{ABXY \land \Omega} \leq \tilde{\rho}_{ABXY} \implies \tilde{\rho}_{ABX \land \Omega} = \tr_Y\{ \tilde{\rho}_{ABXY \land \Omega}\} \leq \tilde{\rho}_{ABX} \,.
  \end{align}
  Finally, we proceed in the same fashion as for the first inequality to show that $P(\tilde{\rho}_{ABX \land \Omega}, \rho_{ABX \land \Omega}) \leq \eps$ and conclude the proof.
\end{proof}

The next proposition builds on Corollary~\ref{cor:smooth-min-bound} and Proposition~\ref{pr:smooth-max-bound} and the above Leftover Hashing Lemma to establish the secrecy of the key. 

\begin{proposition}
  \label{pr:secure}
  Let $\rho_{ABE} \in \cS(ABE)$. Consider the protocol \qkdeb{} in Section~\ref{sc:protocol} with $\mathrm{pe} = \{k, \delta\}$, $\mathrm{ec} = \{ t, r, \ldots \}$ and $\mathrm{pa} = \{ \ell, \ldots \}$ and the state $\omega_{K_A K_B S C F E} = \qkdeb(\rho_{ABE})$. Define $\eps(\nu)$ as in \eqref{eq:epsdef}.
  Then, for any $\nu \in (0, \frac12-\delta]$ such that $\eps(\nu)^2 < \Pr[F = (\succ, \succ)]_{\sigma}$, the following holds:
  \begin{align}
     \Big\| \omega_{K_A S C F E \land F =\,(\succ,\succ)} - \chi_{K_A} \otimes \omega_{SCFE \land F =\,(\succ,\succ)} \Big\|_{\mathrm{tr}}
     \leq  \frac12 2^{-\frac12 g(\nu)} + 2 \eps(\nu) \,.
  \end{align}
  where $g(\nu) := (m - k) \big( \log \frac{1}{\bar{c}} - h(\delta + \nu) \big) - r - t - \ell$.
\end{proposition}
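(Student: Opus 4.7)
The plan is to bound the smooth conditional min-entropy of $X$ given Eve's complete side information on the event $F = (\succ,\succ)$, and then to invoke the Leftover Hashing Lemma (Proposition~\ref{pr:leftover}) with Alice's hash seed $S^{H_{\mathrm{pa}}}$. Throughout, the hypothesis $\eps(\nu)^2 < \Pr[F = (\succ,\succ)]_\sigma \leq \Pr[F^{\mathrm{pe}} = \succ]_\sigma$ guarantees that the smoothing radius $\eps(\nu)$ is admissible at every step.

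The first step is to combine Corollary~\ref{cor:smooth-min-bound} with the max-entropy upper bound of Proposition~\ref{pr:smooth-max-bound}, both at smoothing radius $\eps(\nu)$, to obtain
\begin{align}
H_{\min}^{\eps(\nu)}\big( X \land F^{\mathrm{pe}} = \succ \,\big|\, V W S^\Pi S^\Xi S^\Theta E \big)_\sigma \;\geq\; (m-k) \big( \log \tfrac{1}{\bar c} - h(\delta + \nu) \big) \,.
\end{align}
I would then convert the conditioning system to Eve's actual side information at the end of the protocol. Dropping $W$, which Bob never releases, can only increase the min-entropy. The seed $S^{H_{\mathrm{ec}}}$ is uniform and independent, so at the pre-error-correction stage it appears in tensor product with everything else and may be added to the conditioning for free. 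Adding the transcripts $C^Z$ (of bit-length $r$) and $C^T$ (of bit-length $t$) afterwards costs at most $r + t$ bits by the chain rule~\eqref{eq:min-chain-rule}. Strengthening the event from $F^{\mathrm{pe}} = \succ$ to $F = (\succ,\succ)$ is for free by Lemma~\ref{lm:conditioning}. Writing $D := (V, C^Z, C^T, S^{H_{\mathrm{ec}}}, S^\Pi, S^\Xi, S^\Theta, E)$, these manipulations yield
\begin{align}
H_{\min}^{\eps(\nu)}\big( X \land F = (\succ,\succ) \,\big|\, D \big)_\sigma \;\geq\; (m-k) \big( \log \tfrac{1}{\bar c} - h(\delta + \nu) \big) - r - t \,.
\end{align}

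The last step is to apply Proposition~\ref{pr:leftover} to $\sigma_{X D \land F = (\succ,\succ)}$ with the universal$_2$ family $\cH_{\mathrm{pa}}$ and its independent seed $S^{H_{\mathrm{pa}}}$. On the restricted event the flag $F$ is constant, $S^\Phi$ is a deterministic function of $(S^\Pi, S^\Xi, S^\Theta)$, and $(V, C^Z, C^T) = C$, so the joint register $(D, S^{H_{\mathrm{pa}}})$ carries the same information as $(S, C, F, E)$ on $\omega$ restricted to $F = (\succ,\succ)$. Since Alice's hash only touches $X$, we may trace out $\hat X$ (and $K_B$) before applying it. Substituting the min-entropy lower bound into the Leftover Hashing Lemma yields the claimed inequality with the relevant quantity in the exponent of the form $(m-k)(\log \tfrac{1}{\bar c} - h(\delta + \nu)) - r - t - \ell$, matching the $\eps_{\mathrm{pa}}(\nu)$ of Theorem~\ref{th:secure}.

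The main obstacle is ordering the bookkeeping correctly. It is essential to fold $S^{H_{\mathrm{ec}}}$ into the conditioning before the error-correction step is invoked, because once $C^T$ is present the transcript correlates $S^{H_{\mathrm{ec}}}$ with $X$ and a naive chain-rule penalty for $S^{H_{\mathrm{ec}}}$ would cost $\log|\cH_{\mathrm{ec}}|$ bits, which is prohibitively large. Similarly, strengthening the event and removing $W$ must both be done in a way that preserves the admissibility of the smoothing parameter $\eps(\nu)$, which is why the hypothesis is stated with respect to $\Pr[F=(\succ,\succ)]_\sigma$ rather than the larger quantity $\Pr[F^{\mathrm{pe}}=\succ]_\sigma$.
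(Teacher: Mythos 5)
Your proposal is correct and follows essentially the same route as the paper: combine Corollary~\ref{cor:smooth-min-bound} with Proposition~\ref{pr:smooth-max-bound}, then account for the transcripts via the chain rule~\eqref{eq:min-chain-rule} (inserting the independent seed $S^{H_{\mathrm{ec}}}$ before $C^T$ is revealed, exactly as the paper does), strengthen the event with Lemma~\ref{lm:conditioning}, and finish with the Leftover Hashing Lemma. Your exponent $(m-k)\big(\log\tfrac{1}{\bar c}-h(\delta+\nu)\big)-r-t-\ell$ is the one consistent with $\eps_{\mathrm{pa}}(\nu)$ in Theorem~\ref{th:secure}.
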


Note that in the above proposition, in order to ensure that the smooth min-entropy is always well-defined, we restricted ourselves to the case where the success probabillity exceeds the squared smoothing parameter, i.e.\ we required that $\eps(\nu)^2 < \Pr[F =\,(\succ,\succ)]$. The case where the success probability is small will be handled separately in Corollary~\ref{cor:smallsuccess} below.

\begin{proof}
   We use  $n = m - k$. Since $\Pr[F = (\succ \succ)]_{\sigma} \leq \Pr[F^{\mathrm{pe}} = \succ]_{\sigma}$ the condition of Proposition~\ref{pr:smooth-max-bound} is satisfied and 
   we find that $H_{\max}^{\eps(\nu)}(X \land F^{\mathrm{pe}} = \succ|Y)_{\sigma} \leq n h(\delta + \nu)$ for the state $\sigma_{X Y V W S^{\Pi} S^{\Xi} S^{\Theta} F^{\mathrm{pe}} E}$ as in~\eqref{eq:sigmastate} that results after measurement and parameter estimation.
   Combining this with Corollary~\ref{cor:smooth-min-bound} yields
    \begin{align}
  H_{\min}^{\eps(\nu)}(X \land F^{\mathrm{pe}} = \succ |V W S^{\Pi} S^{\Xi} S^{\Theta} E)_{\sigma} \geq n q \,,
   \end{align}
  where we introduced the shorthand $q = \log \frac{1}{\bar{c}} - h(\delta + \nu)$.

  Our goal is to translate this in a condition on the state $\sigma_{X \hat{X} C^V C^Z C^T S^{\Pi} S^{\Xi} S^{\Theta} S^{H_{\mathrm{ec}}} F^{\textrm{pe}} F^{\textrm{ec}} E}$ as in~\eqref{eq:statebeforepa} that results after error correction. The following chain of inequalities holds:
   \begin{align}
   n q &\leq H_{\min}^{\eps(\nu)}(X \land F^{\mathrm{pe}} = \succ|S^{\Pi} S^{\Xi} S^{\Theta} C^V E)_{{\sigma}} \label{eq:min-ineq0} \\
      &\leq H_{\min}^{\eps(\nu)}(X \land F^{\mathrm{pe}} = \succ|S^{\Pi} S^{\Xi} S^{\Theta}  C^V C^Z E)_{{\sigma}} + r \label{eq:min-ineq1} \\
      &= H_{\min}^{\eps(\nu)}(X \land F^{\mathrm{pe}} = \succ|S^{\Pi} S^{\Xi} S^{\Theta} S^{H_{\mathrm{ec}}}  C^V C^Z E)_{{\sigma} \otimes \rho} + r\label{eq:min-ineq2} \\
       &\leq H_{\min}^{\eps(\nu)}(X \land F^{\rm pe} = \succ |S^{\Pi} S^{\Xi} S^{\Theta} S^{H_{\mathrm{ec}}} C^V C^Z C^T E)_{{\sigma}} + r + t \\
      &\leq H_{\min}^{\eps(\nu)}\big(X \land F^{\rm pe} = \succ \land F^{\rm ec} = \succ \big| S^{\Pi} S^{\Xi} S^{\Theta} S^{H_{\mathrm{ec}}} C^V C^Z C^T E \big)_{{\sigma}} + r + t . \label{eq:min-ineq3}
   \end{align}
   The first inequality follows by relabeling $V$ to $C^V$ and discarding $W$, an instance of the data-processing inequality.
 The transcript register $C^Z$ contains the syndrome sent from Alice to Bob and the inequality~\eqref{eq:min-ineq1} follows by the chain rule in~\eqref{eq:min-chain-rule}, and the fact that $\log |C^Z| = r$. The register $S^{H_{\mathrm{ec}}}$ in the state $\rho_{S^{H_{\mathrm{ec}}}}$ is independent of the other registers. The register $C^T$ contains the hash of the raw key $X$ of size $\log |C^T| = t$ leading to the penultimate inequality.
   In the last step we used Lemma~\ref{lm:conditioning}.

   Summarizing $S' = (S^{\Pi}, S^{\Xi}, S^{\Theta}, S^{H_{\mathrm{ec}}})$ as well as $C = (C^V, C^Z, C^T)$, and $F = (F^{\mathrm{pe}}, F^{\mathrm{ec}})$ as usual, we can thus more compactly write this as 
 \begin{align}
  H_{\min}^{\eps(\nu)}(X \land F = (\succ,\succ)\, | S' C E)_{{\sigma}} \geq nq - r - t \,.
 \end{align} 
 Proposition~\ref{pr:leftover} applied with this bound then immediately yields the desired inequality.
\end{proof}

The above proposition suffers from the assumption $\eps(\nu)^2 < \Pr[F = (\succ,\succ)]_{\omega}$. However, the other case can easily be dealt with.
Indeed, the trace distance $\big\| \omega_{K_A S C F E \land F =\,(\succ,\succ)} - \chi_{K_A} \otimes \omega_{SCFE \land F =\,(\succ,\succ)} \big\|_{\mathrm{tr}}$ is upper bounded by the the trace of both states, i.e.\ the probability $\Pr[F = (\succ,\succ)]_{\omega}$. The inequalities $\Pr[F = (\succ,\succ)]_{\sigma} \leq \eps(\nu)^2 \leq \eps(\nu)$ then yield the following corollary, which is equivalent to Theorem~\ref{th:secure}.

 \begin{corollary}
  \label{cor:smallsuccess}
  Consider the setup of Proposition~\ref{pr:secure}. For any $\nu \in (0, \frac12-\delta]$ the following holds:
  \begin{align}
     \Big\| \omega_{K_A S C F E \land F =\,(\succ,\succ)} - \chi_{K_A} \otimes \omega_{SCFE \land F =\,(\succ,\succ)} \Big\|_{\mathrm{tr}}
     \leq  \frac12 \sqrt{2^{- (m - k) \big( \log \frac{1}{\bar{c}} - h(\delta + \nu) \big) + r + t + \ell}} + 2 \eps(\nu) \,.
  \end{align}
\end{corollary}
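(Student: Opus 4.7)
The plan is a simple case split on whether the hypothesis of Proposition~\ref{pr:secure} holds for the given $\nu$. If $\eps(\nu)^2 < \Pr[F = (\succ,\succ)]_{\omega}$, then Proposition~\ref{pr:secure} applies directly and yields the bound $\tfrac12 2^{-g(\nu)/2} + 2\eps(\nu)$; rewriting $2^{-g(\nu)/2}$ as $\sqrt{2^{-g(\nu)}}$ and expanding the definition of $g(\nu) = (m-k)\big(\log\tfrac{1}{\bar{c}} - h(\delta+\nu)\big) + r + t + \ell$ recovers exactly the first term of the claimed inequality, so this regime is immediate.

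The remaining case is $\eps(\nu)^2 \geq \Pr[F = (\succ,\succ)]_{\omega}$. Here I would discard all structure of the two compared operators and simply bound the trace distance by the total mass they carry. Both $\omega_{K_A S C F E \land F = (\succ,\succ)}$ and $\chi_{K_A} \otimes \omega_{S C F E \land F = (\succ,\succ)}$ are positive operators with the same trace, equal to $\Pr[F = (\succ,\succ)]_{\omega}$. Hence the variational definition of the trace distance gives $\|\rho - \sigma\|_{\tr} = \sup_P \tr\{P(\rho - \sigma)\} \leq \sup_P \tr\{P \rho\} = \tr(\rho)$ for any positive $\sigma$, so the trace distance in question is at most $\Pr[F = (\succ,\succ)]_{\omega} \leq \eps(\nu)^2 \leq \eps(\nu) \leq 2\eps(\nu)$, where the penultimate inequality uses that $\eps(\nu) \in (0, 1]$ by its definition as an exponential of a non-positive exponent. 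This alone already fits under the advertised bound, the first (square-root) term being non-negative and therefore free.

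There is essentially no obstacle here: the corollary is a packaging step whose sole content is that the trivial bound $\Pr[F = (\succ,\succ)]_{\omega}$ in the small-probability regime fits under the same closed-form expression as Proposition~\ref{pr:secure} in the large-probability regime. The slack factor of $2$ multiplying $\eps(\nu)$ in the statement is chosen precisely so that this trivial bound absorbs into the $2\eps(\nu)$ term without any further calculation, which is why a clean unified expression is available for all $\nu \in (0, \tfrac12 - \delta]$.
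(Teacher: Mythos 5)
Your proof is correct and follows essentially the same route as the paper: apply Proposition~\ref{pr:secure} when $\eps(\nu)^2 < \Pr[F=(\succ,\succ)]$, and otherwise bound the trace distance of the two sub-normalized states by their common trace $\Pr[F=(\succ,\succ)] \leq \eps(\nu)^2 \leq \eps(\nu)$, which is absorbed by the $2\eps(\nu)$ term. The justification via the variational (projector) definition of the trace distance is exactly the argument the paper sketches.
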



\part{Prepare-and-measure protocol}
\label{part:pm}

\begin{table}[h!]
{\small
\begin{tabular}{ll}
\nc{$\mathcal{N}_{A\to B}$}{Quantum channel between Alice and Bob}
\nc{$\mathcal{P}_{\fail \to A |RS^{\Phi_{\! A}}}$}{Preparation map that returns a state in register $A$ depending on the settings $R, S^{\Phi_{\! A}}$}

\hline
\nc{$M$}{Number of states sent by Alice in the prepare-and-measure version}
\nc{$\Omega$}{Subset of $[M]$ for which Bob obtains a conclusive measurement result}  
\nc{$\Sigma$}{Subset of $m$ indices where Alice and Bob's settings agree and Bob obtained a conclusive outcome}  
\nc{ro}{Reordering map used in the sifting step.}
\hline

\nc{$R$}{Register for Alice's raw key in the prepare-and-measure protocol}
\nc{$U$}{Register for Bob's measurement results in the prepare-and-measure protocol}

\nc{$S^{\Phi_{\! A}}$}{Seed for the choice of Alice's measurement bases in the prepare-and-measure protocol}
\nc{$S^{\Phi_{\! B}}$}{Seed for the choice of Bob's measurement bases in the prepare-and-measure protocol}

\nc{$S$}{Register corresponding to all the seeds that Alice communicates to Bob after state distribution}
\nc{}{$S =(S^\Pi, S^\Xi, S^\Theta, S^{H_{\mathrm{pe}}},S^{H_{\mathrm{ec}}})$}

\nc{$F^{\mathrm{si}}$}{Flag for the sifting procedure in the prepare-and-measure protocol}
\nc{$F$}{Register corresponding to all the flags, $F = (F^{\mathrm{si}},F^{\mathrm{pe}},F^{\mathrm{ec}})$}
\nc{$C^\Omega, C^\Sigma$}{Transcripts of the registers $\Omega$ and $\Sigma$ sent during sifting}
\nc{$C$}{Register containing all the communication transcripts, $C = (C^\Omega, C^\Sigma,C^V, C^Z, C^T)$}
\end{tabular}
}
\caption{Additional nomenclature and notation used in Part~\ref{part:pm}. See also Table~\ref{tb:not-eb}}
\end{table}

\section{Formal description of the prepare-and-measure protocol}
\label{sec:model_real}
\label{sc:pm1}

Here we discuss a prepare-and-measure (PM) protocol for QKD, denoted \qkdpm{}, which is essentially equivalent to BB84~\cite{bb84}, and prove that its security follows from that of the entanglement-based protocol considered in Part \ref{part:eb}, provided that some \textit{additional assumptions} are made. 

Section~\ref{sc:modelpm} provides the details of the protocol described in Table \ref{tb:realistic}, for the steps where it differs from the entanglement-based protocol. We describe the additional assumptions on the preparation and measurement devices in \ref{sub:dev} and present a mathematical model of the protocol in \ref{sub:mod}.

\subsection{Additional assumptions on preparation and measurement devices}
\label{sub:dev} \label{sc:assumpm}

The physical equipment of Alice an Bob is modified compared to that of the entanglement-based protocol considered in Part \ref{part:eb}. Indeed, the main point of implementing a prepare-and-measure protocol is that it is no longer required for Alice and Bob to share an entangled state, a task that remains very challenging if the two parties are a few tens of kilometers apart, which is typical in realistic scenarios where one wants to distribute secret keys at the scale of a metropolitan area. In the prepare-and-measure setup Alice and Bob do not start with an entangled state but instead have access to a quantum channel from Alice to Bob.

The assumption on finite-dimensional quantum systems, sealed laboratories, random seeds and authenticated communication channel discussed in Section~\ref{sc:assumeb} still apply. The assumption of sealed laboratories in particular implies that the quantum channel between Alice and Bob models all quantum communication leaving Alice's lab. 
However, we will replace the assumption of commuting measurements, deterministic detection and the complementarity of Alice's measurements.

\paragraph{Alice's preparation:}

In every round, indexed by $i \in [M]$, Alice's preparation device takes two bits as input: $\phi \in \{0,1\}$ describing a basis choice and $x  \in \{0,1\}$ describing the bit value within each basis. It produces a quantum state $\rho_{A_i}^{\phi,x}$, ideally corresponding to one of the four BB84 states. The commuting measurement assumption for Alice is replaced with the requirement that these states do not depend on the preparations in previous or later rounds (which is already ensured by our notation).
Our next assumption is that the state $\rho_{A_i}^{\phi,x}$ does not leak any information about the basis choice $\phi$, i.e., we require that
\begin{align}
 \sum_{x \in \{0,1\}} \rho_{A_i}^{0,x} = \sum_{x \in \{0,1\}} \rho_{A_i}^{1,x} \,. \label{eq:pm-assum1} 
\end{align}
Moreover, instead of the complementarity assumption on the measurements, we require that the prepared states are sufficiently complementary. More precisely, let us define
\begin{align}
  c'\big(\{ \rho^{x} \}_x, \{ \sigma^y \}_y \big) := \max_{x,y} \left\| \sqrt{\rho^{x}} X^{-1} \sqrt{\sigma^{y}} \right\|_{\infty}^2 , \quad \textrm{for states satisfying} \quad \sum_x \rho^x = \sum_y \sigma^y := X \,.  \label{eq:cprimedef}
\end{align}
In case $X$ has not full support we take the generalized inverse (on its support) in the above definition. Our second assumption on Alice's preparation is that
\begin{align}
  c_i' := c'\Big( \big\{ \rho_{A_i}^{0,x} \big\}_x, \big\{ \rho_{A_i}^{1,y} \big\}_y \Big) \leq \bar{c}' \label{eq:pm-assum2}
\end{align}
for all $i \in [M]$ and some constant $\bar{c}' < 1$. The constant $\bar{c}'$ is closely related to the constant $\bar{c}$ that described the complementarity of Alice's measurement in the entanglement-based protocol, as we will see in Corollary~\ref{cor:virtual}.

In an ideal implementation of the BB84 protocol, the states $\rho^{\phi,r}$ would be single-qubit states given by 
\begin{align}
\rho^{0,0} = \proj{0}, \quad \rho^{0,1} = \proj{1}, \quad  \rho^{1,0} = \proj{+}, \quad \rho^{1,1} = \proj{-}. \label{eq:qubits}
\end{align}
These states obviously satisfy the assumption in~\eqref{eq:pm-assum1} and it is easy to verify that $\bar{c}' = \frac12$ is a valid bound.

We should note that our first assumption is rather strong and for instance does not allow us to assess the security of popular implementations of the BB84 protocol relying on a weak-coherent-state encoding. Since single-photon sources remain expensive and imperfect today, it is indeed tempting to encode each qubit with two polarization modes and replace single-photons by phase-randomized weak-coherent states\footnote{If the single-photon polarization qubit states are given by~\eqref{eq:qubits}, then the four encoded BB84 states are 
$\tau^{i,j} = e^{-\alpha^2} \bigoplus_{k=0}^\infty \frac{\alpha^{2k}}{k!} (\rho^{i,j})^{\otimes k}$ for $i,j \in \{0, 1\}$,
where $\alpha>0$ is the amplitude of the coherent states.}. 
For such an implementation, the four BB84 states become linearly independent, and Eq.~\eqref{eq:pm-assum1} cannot hold. It is well-known that such implementations are sensitive to ``photon-number-splitting'' attacks but that solutions exist to restore their security, for instance with the help of decoy states \cite{lo05}. While we believe that our framework could accommodate such modifications (see for instance~\cite{hasegawa07,lim14}), we do not address this issue here.

\paragraph{Bob's measurement:}
As for the simple protocol of Part \ref{part:eb}, we require that Bob's quantum system can be decomposed as $B \equiv B_{[M]} = B_1 B_2 \ldots B_M$. Bob's measurement device is similar to that of the simple protocol of Part \ref{part:eb}, but the measurement operators now need to be specified for indices in $[M]$ and allow for an additional outcome, `$\fail$', corresponding to an inconclusive result. Such an inconclusive result can for instance occur when no detector clicked (photon loss) or when more than 1 detector clicked (dark counts)\footnote{Indeed, in most experiments, the measurement device is usually implemented with the help of two single-photon detectors, and a conclusive measurement outcome, 0 or 1, will correspond to which detector clicked while inconclusive outcomes occur if none or both of the detectors clicked.}. 

For any $i \in [M]$, we model Bob's measurement on subsystem $B_i$ with setting $\phi \in \{0,1\}$ by a ternary generalized measurement $\{ M_{B_i}^{\phi,z}  \}_{z \in \{0,1,\fail\}}$ acting on $B_i$. The index $z$ ranges over the two conclusive outcomes, 0 and 1, of Bob's measurement as well as the inconclusive outcomes $\fail$.
We require that 
\begin{align}
\label{eq:pm-assum3}
(M_{B_i}^{0,\fail})^\dagger (M_{B_i}^{0,\fail}) =(M_{B_i}^{1,\fail})^\dagger (M_{B_i}^{1,\fail}), 
\end{align}
meaning that the element corresponding to an inconclusive result coincides for both measurements. 
As will be formalized in Lemma \ref{lm:sifting2}, this implies that Bob's measurement map can be interpreted as a two-step process first deciding whether the result is conclusive or not, and then, in the former case, proceeding with the ideal measurement considered in Part \ref{part:eb}.
While this assumption seems quite reasonable for photon detector working in the few photons regime, it usually fails to apply when avalanche photo diodes are accessed in the linear mode, and this was precisely the origin of the ``blinding attack'' of~\cite{lydersen10}.

In any realistic implementation, Alice and Bob would need to be synchronized so that Bob can keep track of which system he is currently measuring. This is especially relevant in high-loss regimes where Bob's detectors would not click most of the time. Such a synchronization procedure can be realized classically, provided that both players have access to an authenticated channel. For simplicity, we ignore this synchronization issue in our model.

\subsection{Protocol parameters and overview}
\label{sc:modelpm}

The protocol $\qkdpm$ is parametrized as $\qkdeb$, but with one extra parameter: 
\begin{itemize}
\item The number of individual states prepared and sent through the quantum channel by Alice, $M \in \mathbb{N}$. We require that $M \geq m$ and for optical implementations a typical value for $M$ is $\frac{2m}{\eta}$ where $\eta$ is the overall transmittance of the optical channel between Alice and Bob.
\end{itemize}

\begin{table}[t!]
\fbox{%
\begin{minipage}{\textwidth}
  
  { $(K_A, K_B, S, C, F) = \qkdpm \big( \cN_{A \to B} \big)$:}
       
  \begin{description}
  
    \item[Input:] Alice and Bob have access to a quantum channel $\cN_{A\to B}: A \to B$ where $A \equiv A_{[M]}$ and $B \equiv B_{[M]}$ are comprised of $M$ quantum systems.
    
       \item[Randomization:]
   Alice and Bob respectively choose two random strings $\Phi_{\! A}, \Phi_{\! B} \in \{0,1\}^M$. Alice also chooses a random string $R \in \{0,1\}^M$. These private seeds are denoted $S^{\Phi_{\! A}}, S^{\Phi_{\! B}}$ and $R$.
   Finally, similarly as in the simple protocol, Alice chooses a random subset $\Pi \in \Pi_{m,k}$, and random hash functions $H_{\mathrm{ec}} \in \cH_{\textrm{ec}}$ as well as $H_{\mathrm{pa}} \in \cH_{\textrm{pa}}$. These uniformly random seeds are denoted $S = (S^{\Pi}, S^{H_{\textrm{ec}}}, S^{H_{\textrm{pa}}})$.
   
    \item[State Preparation:] Alice prepares a quantum state $\rho_{A}^{R, \Phi_A}$, encoding the string $R$ in the measurement basis corresponding to $\phi_A$.
    
    \item[State Distribution:] Alice sends the state $\rho_{A}^{R, \Phi_A}$ through the quantum channel $\cN$ and Bob receives the output state $\rho_B^{R,\Phi_A} = \cN(\rho_{A}^{R, \Phi_A})$. 
    (In practice, Alice would send the systems one by one, and use the quantum channel $M$ times.)
        
    \item[Measurement:]
Bob measures the $M$ quantum systems with the setting $\Phi_{\! B}$, and stores his ternary measurement outcomes in a string $U \in \{0,1, \fail\}^M$, where $\fail$ denotes an inconclusive measurement result.  He also computes the set $\Omega \subseteq 2^{[M]}$ of indices corresponding to conclusive measurements
(In practice, Bob would start measuring the systems as soon as he receives them.)

    \item[Randomness distribution:]
Bob publicly announces both the value of $\Phi_{\! B}$ and $\Omega$. The corresponding transcripts are denoted by $C^{\Phi_{\! B}}$ and $C^{\Omega}$, respectively.
Alice sends the value of the seeds $S=(S^\Pi,  S^{H_{\mathrm{ec}}},S^{H_{\mathrm{pa}}})$ to Bob through the authenticated public channel.

    \item[Sifting:]
    If it exists, Alice publicly announces a set $\Sigma \subseteq \Omega$, with transcript $C^\Sigma$ of cardinality $m$, such that $\Phi_{\! A}$ and $\Phi_{\! B}$ coincide on $\Sigma$, and sets the flag $F^{\mathrm{si}} = \succ$. Otherwise, she sets $F^{\mathrm{si}} = \fail$ and they abort.
    The respective binary substrings $R'$ and $U'$ of $R$ and $U$ restricted to $\Sigma$ become the raw keys. 
     As in the idealized protocol, they are then reordered and denoted $(X, V)$ and $(Y, W)$ for Alice and Bob, respectively. Here $V, W$ are of length $k$ and correspond to the indices in $\Pi$, whereas $X, Y$ of length $n$ correspond to indices not in $\Pi$.

    \item[Parameter Estimation:]
    Alice sends $V$ to Bob, the transcript is denoted $C^V$. Bob compares $V$ and $W$. If the fraction of errors exceeds $\delta$, Bob sets the flag $F^{\mathrm{pe}} = $ `$\fail$' and they abort.  
    Otherwise he sets $F^{\mathrm{pe}} = $ `$\succ$' and they proceed.
 
   \item[Error Correction:]
  Alice sends the syndrome $Z = \mathrm{synd}(X)$ to Bob, with transcript $C^Z$. Bob computes $\hat{X} = \mathrm{corr}(Y, Z)$.
  
    Alice computes the hash $T = H_{\mathrm{ec}}(X)$ of length $t$ and sends it to Bob, with transcript $C^T$.
    Bob computes $H_{\mathrm{ec}}(\hat{X})$. If it differs from $T$, he sets the flag $F^{\mathrm{ec}} = $ `$\fail$' and they abort the protocol.  
    Otherwise he sets $F^{\mathrm{ec}} = $ `$\succ$' and they proceed.    
    
  \item[Privacy Amplification:] They compute keys $K_A = H_{\mathrm{pa}}(X)$ and $K_B = H_{\mathrm{pa}}(\hat{X})$ of length $\ell$.
  
  \item[Output:] The output of the protocol consists of the keys $K_A$ and $K_B$, the seeds $ (S^{\Phi_{\! B}}, S^{\Pi}, S^{H_{\textrm{ec}}}, S^{H_{\textrm{pa}}})$, the transcript $C = (C^\Omega, C^\Sigma, C^V, C^Z, C^H)$ and the flags $F = (F^{\mathrm{si}},F^{\mathrm{pe}}, F^{\mathrm{ec}})$.
  In case of abort, we assume that all registers are initialized to a predetermined value.
  
  \end{description}
\end{minipage}
}
  \caption{Realistic Prepare-and-Measure QKD Protocol $\qkdpm$. The precise mathematical model is described in Section~\ref{sec:model_real}. This protocol differs from the entanglement-based protocol in several points: in particular, the input now corresponds to the quantum channel $\cN$ between Alice and Bob.
  }
  \label{tb:realistic}
\end{table}

\subsection{Exact mathematical model of the protocol}
\label{sub:mod}

Here we describe in detail the mathematical model corresponding to the protocol in Table~\ref{tb:realistic}, for the steps where it differs from the simple protocol of Part \ref{part:eb}.

\paragraph*{Input:}

The realistic protocol $\qkdpm$ we consider is a prepare-and-measure protocol, and the role of the input is now played by an (arbitrary) quantum channel $\cN_{A\to B}$ between Alice and Bob. Here $A = A_{[M]}$ and $B = B_{[M]}$. This situation is depicted in Figure \ref{fig:PM-input}.

As before, we make the assumption that the input and output of the quantum channel are finite-dimensional. This arguably appears quite restrictive since any complete description of the optical channel would involve infinite-dimensional Fock spaces, with the idea that each of the $M$ systems prepared by Alice corresponds to two polarization modes for instance. However, we point out that we do not require any explicit upper bound on the dimension of the Hilbert spaces occurring in the protocol, and that any \emph{physical} state necessarily has a bounded energy, which means that it can be arbitrarily well approximated by a quantum state in a finite-dimensional Hilbert space of sufficiently large dimension.

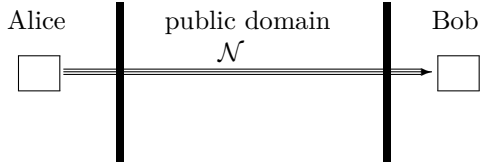
\begin{figure}[ht]
\centering
\begin{picture}(200, 60)
  \put(11, 50){Alice}
  \put(70, 50){public domain}
  \put(170, 50){Bob}
  
  \put(32, 34){\vector(1, 0){138}}
  \put(32, 33){\line(1, 0){134}}
  \put(32, 35){\line(1, 0){134}}

  \put(15, 30){\boxed{\phantom{X}}}
    \put(90, 38){$\mathcal{N}$}
  \put(172, 30){\boxed{\phantom{X}}}

  \linethickness{1mm}
  \put(50, 0){ \line(0, 1){60} }
  \put(150, 0){ \line(0, 1){60} }
\end{picture}
\caption{(Input.) Alice and Bob have access to a quantum channel: $\cN: A \to B$.}
\label{fig:PM-input}
\end{figure}

\paragraph*{Randomization:}

The random seeds are modeled similarly as for the idealized version of the protocol. Here, the seed $S^\Phi$ corresponding to identical measurement settings is not provided directly. Instead, Alice and Bob initially choose independently two strings $\Phi_{\! A}, \Phi_{\! B} \in \{0,1\}^M$, and it will later be the role of the sifting procedure to produce a set of identical measurement settings $\Phi$.
The random choice of the strings $\Phi_{\! A}, \Phi_{\! B}$ is modeled by two registers $S^{\Phi_{\! A}}, S^{\Phi_{\! B}}$ in the state
\begin{align}
\rho_{S^{\Phi_{\! A}}} \otimes \rho_{S^{\Phi_{\! B}}}  = \frac{1}{4^M} \sum_{\phi_A, \phi_B \in \{0,1\}^M} \proj{\phi_A}_{S^{\Phi_{\! A}}} \otimes \proj{\phi_B}_{S^{\Phi_{\! B}}},
\end{align}
where $\{|\phi_A\rangle \}, \{|\phi_B\rangle \}$ are orthonormal bases of $S^{\Phi_{\! A}}$ and $S^{\Phi_{\! B}}$, respectively.

Another difference with the simple protocol of Part \ref{part:eb} is that Alice also has access to a register $R$ that she will use to choose which state to prepare. This register is modeled similarly as the other seeds as a maximally mixed state:
\begin{align}
\rho_R = \frac{1}{2^M} \sum_{r \in \{0,1\}^M} \proj{r}_R,
\end{align}
where $\{|r\rangle\}$ is an orthonormal basis of $R$. 
The other random seeds $\rho_{S^\Pi}, \rho_{S^{H_{\mathrm{ec}}}}, \rho_{S^{H_{\mathrm{pa}}}}$ are identical to the idealized version. This situation is depicted in Figure \ref{fig:PM-rand}.

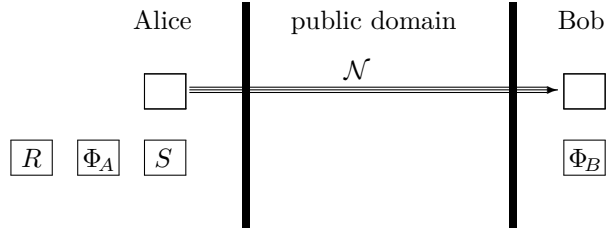
\begin{figure}[ht]
\centering
\begin{picture}(200, 90)
  \put(11, 75){Alice}
  \put(70, 75){public domain}
  \put(170, 75){Bob}
  
    \put(32, 52){\vector(1, 0){138}}
  \put(32, 51){\line(1, 0){134}}
  \put(32, 53){\line(1, 0){134}}

  \put(15, 48){\boxed{\phantom{X}}}
    \put(90, 56){$\mathcal{N}$}
  \put(172, 48){\boxed{\phantom{X}}}

  \put(-10, 23){\boxed{\phantom{X}}}
    \put(-8, 23){$\Phi_{\! A}$}

  \put(-35, 23){\boxed{\phantom{X}}}
    \put(-31, 23){$R$}

  \put(15, 48){\boxed{\phantom{X}}}
  \put(15, 23){\boxed{\phantom{X}}}
  \put(19, 23){$S$}
  \put(172, 48){\boxed{\phantom{X}}}
  \put(172, 23){\boxed{\phantom{X}}}
  \put(174, 23){$\Phi_{\! B}$}

  \linethickness{1mm}
  \put(50, 0){ \line(0, 1){85} }
  \put(150, 0){ \line(0, 1){85} }
\end{picture}
\caption{(Randomization.) Alice prepares a public seed $S$ that will later be communicated to Bob through an authenticated classical channel. Alice and Bob also prepare private seeds: $R, \Phi_{\! A}$ for Alice, $\Phi_{\! B}$ for Bob.}
\label{fig:PM-rand}
\end{figure}

Note that while in the simple protocol all the seeds are communicated publicly during a step of the protocol, it is crucially not the case here for the seed $R$, from which the final key could be immediately inferred. We will see that it is not necessary to communicate the seed $S^{\Phi_{\! A}}$ to Bob since the sifting procedure is performed by Alice. 

In a practical implementation, the various random seeds, except for $S^{\Phi_{\! B}}$ would be initially prepared by Alice, and only communicated to Bob when needed (except for $R$ and $S^{\Phi_{\! A}}$). In particular, one should wait until the state distribution is over before communicating the value of the chosen subset for parameter estimation or of the various hash functions. 

\paragraph*{State preparation:}

Alice prepares a quantum state on $M$ systems $A \equiv A_{[M]}$ using the map
\begin{align}
  \mathcal{P}_{\fail\to A|R S^{\Phi_{\! A}}}(\cdot) = 
      \sum_{r, \phi \in \{0,1\}^M}  \Big( \proj{r}_R \otimes \proj{\phi}_{S^{\Phi_{\! A}}} \Big) \cdot \Big( \proj{r}_R \otimes \proj{\phi}_{S^{\Phi_{\! A}}} \Big) \otimes \rho_{A}^{r,\phi}
\end{align}
where $\rho_{A}^{r, \phi} = \bigotimes_{i=1}^M \rho_{A_i}^{r_i, \phi_i}$. 
Applying this map to the seeds in registers $R$ and $S^{\Phi_{\! A}}$ results in the state
\begin{align}
\rho_{R S^{\Phi_{\! A}} A} = \frac{1}{4^M} \sum_{r, \phi \in \{0,1\}^M} \proj{r}_R \otimes \proj{\phi}_{S^{\Phi_{\! A}}} \otimes \rho_{A}^{r, \phi}.
\end{align}
This situation is depicted in Figure \ref{fig:PM-prep}.

\begin{figure}[ht]
\centering
\begin{picture}(200, 90)
  \put(11, 75){Alice}
  \put(70, 75){public domain}
  \put(170, 75){Bob}

    \put(32, 52){\vector(1, 0){138}}
  \put(32, 51){\line(1, 0){134}}
  \put(32, 53){\line(1, 0){134}}
  \put(15, 48){\boxed{\phantom{X}}}
    \put(90, 56){$\mathcal{N}$}
  \put(172, 48){\boxed{\phantom{X}}}
  \put(-10, 23){\boxed{\phantom{X}}}
    \put(-8, 23){$\Phi_{\! A}$}
  \put(-35, 48){\boxed{\phantom{X}}}
    \put(-31, 48){$R$}
  \put(-18, 52){\vector(1, 0){30}}
  \put(-8, 34){ \line(0, 1){18} }
  \put(-10, 56){$\mathcal{P}$}
  \put(15, 48){\boxed{\phantom{X}}}
  \put(19, 48){$A$}
  \put(15, 23){\boxed{\phantom{X}}}
  \put(19, 23){$S$}
  \put(172, 48){\boxed{\phantom{X}}}
  \put(172, 23){\boxed{\phantom{X}}}
  \put(174, 23){$\Phi_{\! B}$}
  \linethickness{1mm}
  \put(50, 0){ \line(0, 1){85} }
  \put(150, 0){ \line(0, 1){85} }
\end{picture}
\caption{(State Preparation.) Using her private seeds $R$ and $\Phi_{\! A}$, Alice prepares system $A$.}
\label{fig:PM-prep}
\end{figure}
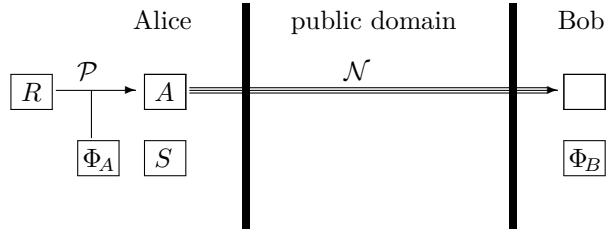

\paragraph*{State distribution:}

Alice sends her register $A$ to Bob through the quantum channel $\cN: A \to B$. 
The state shared by Alice and Bob is given by:
\begin{align}
\rho_{R S^{\Phi_{\! A}} B} &=  \cN_{A \to B}(\rho_{R S^{\Phi_{\! A}} A})\\
&=  \frac{1}{4^M} \sum_{r, \phi \in \{0,1\}^M} \proj{r}_R \otimes \proj{\phi}_{S^{\Phi_{\! A}}} \otimes \rho_{B}^{r, \phi},
\end{align}
where we defined $\rho_B^{r, \phi} = \cN\left(\rho_{A}^{r, \phi}\right)$.
This situation is depicted in Figure \ref{fig:PM-dist}.

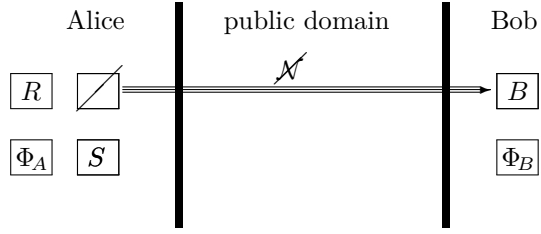
\begin{figure}[ht]
\centering
\begin{picture}(200, 90)
  \put(11, 75){Alice}
  \put(70, 75){public domain}
  \put(170, 75){Bob}
  
    \put(32, 52){\vector(1, 0){138}}
  \put(32, 51){\line(1, 0){134}}
  \put(32, 53){\line(1, 0){134}}
  \put(15, 48){\boxed{\phantom{X}}}
   \put(15,44){\line(1,1){17}}
    \put(90, 56){$\mathcal{N}$}
    \put(89,54){\line(1,1){12}}
  \put(172, 48){\boxed{\phantom{X}}}
  \put(-10, 23){\boxed{\phantom{X}}}
    \put(-8, 23){$\Phi_{\! A}$}
  \put(-10, 48){\boxed{\phantom{X}}}
    \put(-6, 48){$R$}
  \put(15, 48){\boxed{\phantom{X}}}
  \put(15, 23){\boxed{\phantom{X}}}
  \put(19, 23){$S$}

  \put(172, 48){\boxed{\phantom{X}}}
  \put(176, 48){$B$}
 \put(172, 23){\boxed{\phantom{X}}}
  \put(174, 23){$\Phi_{\! B}$}
    \put(15, 23){\boxed{\phantom{X}}}
  \put(19, 23){$S$}
  \linethickness{1mm}
  \put(50, 0){ \line(0, 1){85} }
  \put(150, 0){ \line(0, 1){85} }
\end{picture}
\caption{(State Distribution.) Alice sends the system $A$ through the quantum channel $\cN$ and Bob receives system $B$.}
\label{fig:PM-dist}
\end{figure}

\paragraph*{Measurement:}

\begin{figure}[ht]
\centering
\begin{picture}(200, 90)
  \put(11, 75){Alice}
  \put(70, 75){public domain}
  \put(170, 75){Bob}
  \put(15, 48){\boxed{\phantom{X}}}
   \put(15,44){\line(1,1){17}}
  \put(-10, 23){\boxed{\phantom{X}}}
    \put(-8, 23){$\Phi_{\! A}$}
  \put(-10, 48){\boxed{\phantom{X}}}
    \put(-6, 48){$R$}
  \put(15, 48){\boxed{\phantom{X}}}
  \put(15, 23){\boxed{\phantom{X}}}
  \put(19, 23){$S$}
   \put(15, 23){\boxed{\phantom{X}}}
  \put(19, 23){$S$}
  \put(172, 37){\boxed{\phantom{X}}}
  \put(176, 37){$\Omega$}
  \put(172, 13){\boxed{\phantom{X}}}
  \put(176, 13){$U$}
  \put(236, 27){\line(-1,0){35}}
  \put(201, 15){\line(0,1){24}}
 \put(201, 15){\vector(-1,0){11}}
 \put(201, 39){\vector(-1,0){11}}
  \put(236, 23){\boxed{\phantom{X}}}
  \put(239, 23){$B$}
  \put(234, 18){\line(1,1){17}}
  \put(207, 19){meas}
  \put(209, 48){\boxed{\phantom{X}}}
  \put(211, 48){$\Phi_{\! B}$}
 \put(217,45){\line(0, -1){18}}
  \linethickness{1mm}
  \put(50, 0){ \line(0, 1){85} }
  \put(150, 0){ \line(0, 1){85} }
\end{picture}
\caption{(Bob's Measurement.) Bob measures the quantum system $B$ using the measurement settings given by his private randomness $\Phi_{\! B}$ and obtains two classical strings stored in registers $\Omega$ and $U$.}
\label{fig:PM-meas}
\end{figure}
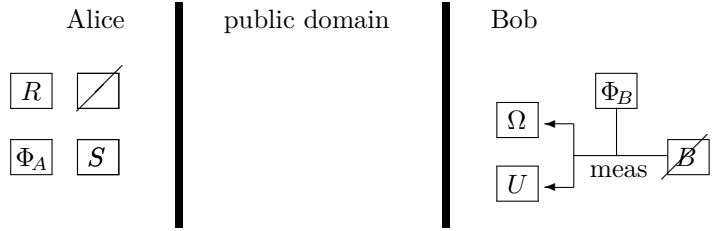

Bob measures each of his $M$ quantum systems in the basis corresponding to $\Phi_{\! B}$ and stores his measurement outcomes, either 0, 1, or $\fail$ in the case of inconclusive outcomes, in a classical register $U$ taking values in $\{0, 1, \fail\}^M$. The measurement map $\cM_{B\to U\Omega|S^{\Phi_{\! B}}}$ is defined as:
  \begin{align}
    \cM_{B\to U\Omega|S^{\Phi_{\! B}}}(\cdot) = \!\! \sum_{\phi \in \{0, 1\}^{M}} \sum_{ u \in \{0, 1, \fail\}^{M}} 
    \ket{u, \omega}_{TC^\Omega} 
      \Big( M_{B}^{\phi,u} \otimes  \proj{\phi}_{S^{\Phi_{\! B}}} \Big)\ \cdot\ \Big( M_{B}^{\phi,u} \otimes  \proj{\phi}_{S^{\Phi_{\! B}}} \Big)^{\dagger} \bra{u, \omega}_{UC^\Omega},
  \end{align}
  where $\omega=\omega(u)$ is the subset of $[M]$ where $u$ takes values in $\{0,1\}$, namely
  \begin{align}
  \omega(u) := \{ i \in [M] \: : \: u_i \neq \fail \}.
  \end{align}
The state of the total system after Bob's measurement is given by
\begin{align}
\sigma_{RUC^\Omega B S^{\Phi_{\! A}} S^{\Phi_{\! B}}} = \frac{1}{8^M}  \sum_{r, \phi_A, \phi_B \in \{0,1\}^M} \sum_{ u \in \{0, 1, \fail\}^{M}}  \proj{r,u,\omega, \phi_A, \phi_B}_{RUC^\Omega S^{\Phi_{\! A}}S^{\Phi_{\! B}}}\otimes M_{B}^{\phi_B,u} \rho_{B}^{r, \phi_A} \left(M_{B}^{\phi_B,u}\right)^\dagger,
\end{align}
and the situation is depicted in Figure \ref{fig:PM-meas}.

\paragraph*{Randomness distribution:}

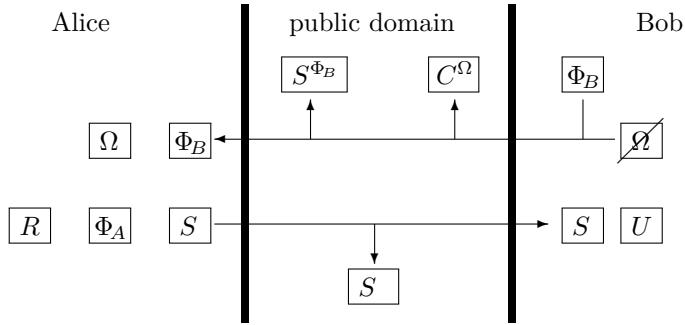
\begin{figure}[ht]
\centering
\begin{picture}(320, 120)
  \put(31, 110){Alice}
  \put(120, 110){public domain}
  \put(250, 110){Bob}
  \put(45, 65){\boxed{\phantom{X}}}
  \put(49, 65){$\Omega$}
  \put(75, 65){\boxed{\phantom{X}}}
  \put(77, 65){$\Phi_{\! B}$}
  \put(15, 33){\boxed{\phantom{X}}}
  \put(19, 33){$R$}
  \put(45, 33){\boxed{\phantom{X}}}
  \put(47, 33){$\Phi_{\! A}$}
  \put(75, 33){\boxed{\phantom{X}}}
  \put(79, 33){$S$}
  \put(222, 90){\boxed{\phantom{X}}}
  \put(224, 90){$\Phi_{\! B}$}
  \put(244, 65){\boxed{\phantom{X}}}
  \put(248, 65){$\Omega$}
      \put(243, 60){\line(1,1){17}}
  \put(222, 33){\boxed{\phantom{X}}}
  \put(226, 33){$S$}
  \put(244, 33){\boxed{\phantom{X}}}
  \put(248, 33){$U$}
  \put(152,37){\vector(0, -1){15}}
  \put(92, 37){\vector(1, 0){125}}
  \put(128,69){\vector(0, 1){15}}
  \put(182,69){\vector(0, 1){15}}
    \put(230,69){\line(0, 1){15}}
  \put(117, 90){\boxed{\phantom{Xx~}}}
  \put(121, 90){$S^{\Phi_{\!B}}$}
  \put(172, 90){\boxed{\phantom{X~}}}
  \put(175, 90){$C^\Omega$}
  \put(242, 69){\vector(-1, 0){150}}
  \put(142, 10){\boxed{\phantom{X~}}}
  \put(146, 10){$S$}
 \linethickness{1mm}
  \put(100, 0){ \line(0, 1){120} }
  \put(200, 0){ \line(0, 1){120} }
\end{picture}
\caption{(Randomness distribution.) Bob communicates the both the value of $\Phi_{\! B}$ and $\Omega$ to Alice with the authenticated classical channel. Alice publicly announces the value of the various seeds, $S =(S^\Pi,  S^{H_{\mathrm{ec}}},S^{H_{\mathrm{pa}}})$.
}
\label{fig:PM-rand-dist}
\end{figure}

Bob publicly announces the content of the register $S^{\Phi_{\! B}}$ together with the description $C^\Omega$ of the set $\omega$ of indices corresponding to conclusive measurement results. This situation is depicted in Figure~\ref{fig:PM-rand-dist}.
Alice publicly announces the value of the various seeds, $S =(S^\Pi, S^{H_{\mathrm{ec}}},S^{H_{\mathrm{pa}}})$.

\paragraph*{Sifting:}

 Alice applies the \textit{sifting map}, a classical map `sift' defined as follows
 \begin{align}
       \textrm{sift}: \left\{
       \begin{array}{ccc}
       \{0,1 \}^M \times  \{0,1 \}^M \times 2^{[M]} &  \to & \Pi_{M,m}  \times \{ \fail, \succ\}  \\
       (\phi_{\! A}, \phi_{\! B}, \omega ) & \mapsto & (\Sigma, F^{\mathrm{si}})
       \end{array}
       \right.
     \end{align}
  where $\Sigma$ is either the first subset of $\Omega$ of cardinality $m$ in the lexicographic order where $\phi_{\! A}$ and $\phi_{\! B}$ coincide, if such a set exists, or else it is set to a dummy value, for instance $[m]$. In the first case, the flag $F^{\mathrm{si}}$ is set to $\succ$, otherwise it is set to $\fail$ and the protocol aborts. The output of the sifting map immediately allows Alice and Bob to compute the value of the seed $S^{\Phi}$ which is simply the restriction of either $S^{\Phi_{\! A}}$ or $S^{\Phi_{\! B}}$ to the indices in~$\Sigma$.
  This classical map is lifted to give a CPTP map $\cE_{\mathrm{si}}= S^{\Phi_{\! A}} S^{\Phi_{\! B}} C^\Omega \to C^\Sigma C^\Omega F^{\mathrm{si}} S^{\Phi_{\! A}} S^{\Phi_{\! B}}$. This situation is depicted in Figure \ref{fig:PM-sift1}.
  
 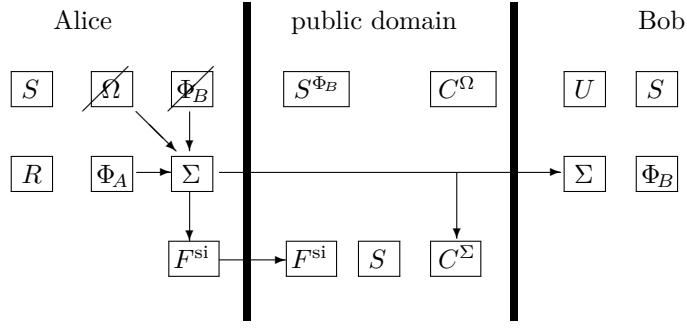
\begin{figure}[h]
\centering
\begin{picture}(320, 120)
  \put(31, 110){Alice}
  \put(120, 110){public domain}
  \put(250, 110){Bob}
  \put(15, 84){\boxed{\phantom{X}}}
  \put(19, 84){$S$}
  \put(45, 84){\boxed{\phantom{X}}}
  \put(49, 84){$\Omega$}
     \put(42, 79){\line(1,1){17}}
  \put(75, 84){\boxed{\phantom{X}}}
  \put(77, 84){$\Phi_{\! B}$}
 \put(74, 79){\line(1,1){17}}
  \put(15, 52){\boxed{\phantom{X}}}
  \put(19, 52){$R$}
  \put(45, 52){\boxed{\phantom{X}}}
  \put(47, 52){$\Phi_{\! A}$}
  \put(75, 52){\boxed{\phantom{X}}}
  \put(79, 52){$\Sigma$}
  \put(62, 56){\vector(1,0){13}}
  \put(62, 79){\vector(1,-1){15}}
  \put(82,79){\vector(0, -1){15}}
 \put(74, 20){\boxed{\phantom{X~}}}
  \put(76, 20){$F^{\mathrm{si}}$}
  
   \put(118, 20){\boxed{\phantom{X~}}}
  \put(120, 20){$F^{\mathrm{si}}$}
  \put(93,23){\vector(1,0){25}}
  
  \put(82,49){\vector(0,-1){19}}
  \put(222, 84){\boxed{\phantom{X}}}
  \put(226, 84){$U$}
  \put(222, 52){\boxed{\phantom{X}}}
  \put(226, 52){$\Sigma$}
  \put(249, 84){\boxed{\phantom{X}}}
  \put(253, 84){$S$}
  
    \put(145, 20){\boxed{\phantom{X}}}
  \put(149, 20){$S$}
  
  \put(249, 52){\boxed{\phantom{X}}}
  \put(251, 52){$\Phi_{\! B}$}

  \put(182,56){\vector(0, -1){25}}
  \put(93, 56){\vector(1, 0){128}}
  \put(172, 20){\boxed{\phantom{X~}}}
  \put(175, 20){$C^\Sigma$}
  \put(117, 84){\boxed{\phantom{Xx~}}}
  \put(121, 84){$S^{\Phi_{\!B}}$}
  \put(172, 84){\boxed{\phantom{Xx~}}}
  \put(175, 84){$C^{\Omega}$}
 \linethickness{1mm}
  \put(100, 0){ \line(0, 1){120} }
  \put(200, 0){ \line(0, 1){120} }
\end{picture}
\caption{(Sifting $1/2$.) Using her private randomness $\Phi_{\! A}$ together with the registers $\Omega$ and $\Phi_{\! B}$ sent by Bob, Alice computes the set $\Sigma$ when it exists and sets the value of the flag $F^{\mathrm{si}}$. Both values are then publicly announced.}
\label{fig:PM-sift1}
\end{figure}

We then define a CPTP map $\mathcal{E}_{\mathrm{di}}$ that discards the systems in $R$, $U$ and $\Phi_{\! A}$ which do not correspond to the subset $\Sigma$ and put the remaining systems in registers denoted $R'$, $U'$ and $\Phi$ of size $m$, respectively.
  \begin{align}
  \mathcal{E}_{\mathrm{di}} : C^\Sigma   R U  S^{\Phi_{\! A}} \to R' U' C^\Phi C^\Sigma .
  \end{align}    
 This situation is depicted in Figure \ref{fig:PM-sift2}.

\begin{figure}[h]
\centering
\begin{picture}(320, 120)
  \put(31, 110){Alice}
  \put(120, 110){public domain}
  \put(250, 110){Bob}

  \put(18, 43){\boxed{\phantom{X}}}
  \put(22, 43){$S$}

  \put(50, 43){\boxed{\phantom{X}}}
  \put(54, 43){$\Sigma$}
   \put(47, 38){\line(1,1){17}}
  \put(57,54){\line(0, 1){24}}

 \put(74, 43){\boxed{\phantom{X~}}}
  \put(76, 43){$F^{\mathrm{si}}$}
  \put(76, 38){\line(1,1){17}}

  \put(18, 20){\boxed{\phantom{X}}}
  \put(20, 20){$\Phi_{\! A}$}
   \put(15, 15){\line(1,1){17}}

    \put(35, 22){\vector(1,0){40}}
  \put(57,22){\line(0, 1){17}}

  \put(75, 20){\boxed{\phantom{X}}}
  \put(79, 20){$\Phi$}

  \put(222, 20){\boxed{\phantom{X}}}
  \put(226, 20){$\Phi$}

  \put(75, 75){\boxed{\phantom{X}}}
  \put(79, 75){$R$}
 \put(74, 70){\line(1,1){17}}

  \put(18, 75){\boxed{\phantom{X}}}
  \put(22, 75){$R'$}
  \put(75, 78){\vector(-1,0){40}}

  \put(222, 43){\boxed{\phantom{X}}}
  \put(226, 43){$S$}

  \put(222, 75){\boxed{\phantom{X}}}
  \put(226, 75){$U$}
  \put(221, 70){\line(1,1){17}}

  \put(254, 43){$\Sigma$}
  \put(250, 43){\boxed{\phantom{X}}}
   \put(247, 38){\line(1,1){17}}
  \put(282, 75){\boxed{\phantom{X}}}
  \put(286, 75){$U'$}

  \put(282, 20){\boxed{\phantom{X~}}}
  \put(286, 20){$\Phi_{\! B}$}
    \put(279, 15){\line(1,1){17}}

  \put(238, 78){\vector(1,0){43}}
  \put(256,54){\line(0, 1){24}}

  \put(256,22){\line(0, 1){17}}
  \put(281, 22){\vector(-1,0){43}}

 \put(118, 43){\boxed{\phantom{X~}}}
  \put(120, 43){$F^{\mathrm{si}}$}

  \put(172, 75){\boxed{\phantom{X~}}}
  \put(174, 75){$C^\Omega$}

  \put(146, 43){\boxed{\phantom{X}}}
  \put(150, 43){$S$}
  \put(146, 75){\boxed{\phantom{X~}}}
  \put(148, 75){$C^{\Sigma}$}
  \put(117, 75){\boxed{\phantom{Xx~}}}
  \put(121, 75){$S^{\Phi_{\! B}}$}
 \linethickness{1mm}
  \put(100, 0){ \line(0, 1){120} }
  \put(200, 0){ \line(0, 1){120} }
\end{picture}
\caption{(Sifting $2/2$.) Using $\Sigma$, Alice (resp.~Bob) discards the $M-m$ irrelevant systems of $R$ (resp.~$U$) and stores the $m$ remaining ones in $R'$ (resp.~$U'$).  Both Alice and Bob further compute $\Phi$ from $\Sigma$ and either $\Phi_{\!A}$ or $\Phi_{\! B}$.}
\label{fig:PM-sift2}
\end{figure}
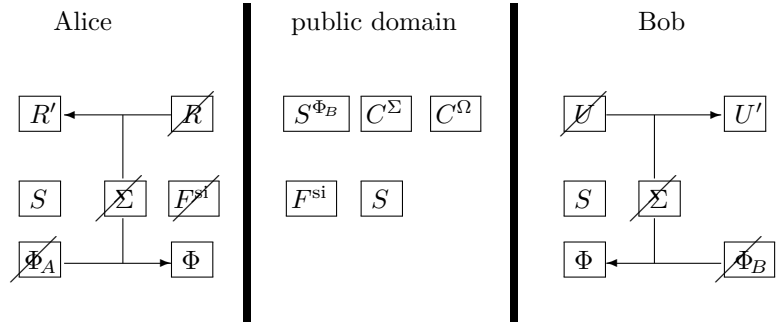

Finally, similarly as in the simple protocol of Part \ref{part:eb}, Alice and Bob use the content of register $\Pi$ to reorder their raw keys, which become $(V,X)$ for Alice and $(W,Y)$ for Bob. 
The situation here is similar to that obtained after measurement in the simple protocol (compare Figures \ref{fig:reorder} and \ref{fig:state_meas}), with the addition of the registers $C^\Sigma$, $C^\Omega$, $S^{\Phi_{\! B}}$ and $F^{\mathrm{si}}$ now available in the public domain.

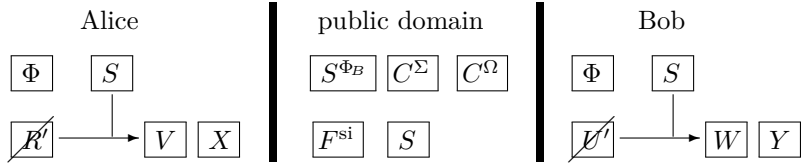
\begin{figure}[t]
\centering
\begin{picture}(300, 60)
  \put(31, 50){Alice}
  \put(120, 50){public domain}
  \put(240, 50){Bob}
  
  \put(35, 30){\boxed{\phantom{X}}}
  \put(39, 30){$S$}
  \put(5, 5){\boxed{\phantom{X}}}
  \put(9, 5){$R'$}
  \put(5, 30){\boxed{\phantom{X}}}
  \put(9, 30){$\Phi$}
  \put(43,25.5){\line(0, -1){16}}
  \put(4, 0){\line(1,1){17}}
  \put(23, 9){\vector(1, 0){30}}
  \put(55, 5){\boxed{\phantom{X}}}
  \put(59, 5){$V$}
  \put(75, 5){\boxed{\phantom{X}}}
  \put(79, 5){$X$}
  \put(245, 30){\boxed{\phantom{X}}}
  \put(249, 30){$S$}
  \put(215, 5){\boxed{\phantom{X}}}
  \put(219, 5){$U'$}
  \put(253,25.5){\line(0, -1){16}}
  \put(214, 0){\line(1,1){17}}
  \put(233, 9){\vector(1, 0){30}}
  \put(265, 5){\boxed{\phantom{X}}}
  \put(268, 5){$W$}
  \put(285, 5){\boxed{\phantom{X}}}
  \put(289, 5){$Y$}
 \put(215, 30){\boxed{\phantom{X}}}
  \put(219, 30){$\Phi$}
 \put(118, 5){\boxed{\phantom{X~}}}
  \put(120, 5){$F^{\mathrm{si}}$}

  \put(172, 30){\boxed{\phantom{X~}}}
  \put(174, 30){$C^\Omega$}

  \put(146, 5){\boxed{\phantom{X}}}
  \put(150, 5){$S$}
  \put(146, 30){\boxed{\phantom{X~}}}
  \put(148, 30){$C^{\Sigma}$}
  \put(117, 30){\boxed{\phantom{Xx~}}}
  \put(121, 30){$S^{\Phi_{\! B}}$}
  \linethickness{1mm}
  \put(100, 0){ \line(0, 1){60} }
  \put(200, 0){ \line(0, 1){60} }
\end{picture}
\caption{(Reordering.) Using the content of $S^{\Pi}$, Alice reorder their raw keys, $R'$ and $U'$, which become respectively $(X, V)$ and $(Y, W)$.}
\label{fig:reorder}
\end{figure}

\paragraph*{Remaining steps:}

The remaining steps are as in the entanglement-based QKD protocol presented in Part \ref{part:eb}.

\section{Results and Discussion}
\label{sec:results-pm}

The security proof should establish that for any input channel $\cN_{A\to B}$ given to $\qkdpm$, either the protocol outputs secret identical keys, or else it aborts. In the same spirit as the entanglement-based version of Part \ref{part:eb}, we define the security parameter
\begin{align}
  \Delta_{M,\mathrm{si},\mathrm{pe}, \mathrm{ec},\mathrm{pa}} := \sup_{\cN_{A\to BE}} 
   \Big\| \qkdpm(\cN_{A\to BE}) - \texttt{qkd\_ideal}_{M, k,n,\delta,\mathrm{ec},\mathrm{pa}}(\cN_{A\to BE})  \Big\|_{\mathrm{tr}}  , \label{eq:real-sec}
\end{align}
where again $\texttt{qkd\_ideal}_{M, k,n,\delta,\mathrm{si},\mathrm{ec},\mathrm{pa}}$ is defined analogously to the entanglement-based case and simply replaces the output of $\qkdpm(\cN_{A\to BE})$ with a perfect key in case the protocol does not abort. Here, the channels $\cN_{A\to BE}$ have an additional output that goes to an eavesdropper, and it again suffices to consider maps where $E$ is finite-dimensional.
Establishing security thus boils down to showing that this trace distance is small for all such channels.

Our strategy is to show that the realistic protocol is equivalent to applying the idealized QKD protocol on a virtual quantum state $\rho_{AB}$ independent from the uniformly distributed random seed $S^{\Phi}$ for the measurement settings. If this holds, then the security proof of Part \ref{part:eb} for the simple protocol applies, and establishes the security of the prepare-and-measure protocol. For this, we need to make explicit assumptions about (i) the state preparation on Alice's side to make sure that no basis information is leaked and (ii) the measurement device on Bob's side to ensure that the invalid measurement results do not depend on the measurement basis.

Under the assumptions in Section \ref{sc:assumpm}, we show that the prepare-and-measure QKD protocol is secure. 
\begin{theorem}
\label{thm:real}
Let $m, \mathrm{pe}, \mathrm{ec}, \mathrm{pa}$ be such that the protocol \qkdeb{} in Section~\ref{sc:protocol} is $\eps$-secure with device parameter $\bar{c} = \bar{c}'$. Then $\qkdpm$ is also $\eps$-secure. 
\end{theorem}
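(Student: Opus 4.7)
The plan is to exhibit a virtual entanglement-based protocol whose output distribution coincides with that of $\qkdpm$ applied to the given channel, and which falls under the scope of the security analysis of Part~\ref{part:eb} with device parameter equal to $\bar{c}'$. The reduction combines the standard source-replacement trick with a careful treatment of the sifting step, and each of the three device assumptions of Section~\ref{sub:dev} plays a specific role.

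\emph{Source replacement.} For each $i \in [M]$, assumption~\eqref{eq:pm-assum1} guarantees that the common marginal $X_i := \sum_x \rho_{A_i}^{0,x} = \sum_x \rho_{A_i}^{1,x}$ is a well-defined state on $A_i$. Let $\ket{\psi_i}_{A_i' A_i}$ be any purification of $X_i$. A standard construction (taking $M_{A_i'}^{\phi,x}$ essentially proportional to the transpose of $\sqrt{\rho_{A_i}^{\phi,x}} X_i^{-1/2}$ in the Schmidt basis of $\ket{\psi_i}$, with a generalized inverse if $X_i$ is rank-deficient) produces generalized measurements $\{M_{A_i'}^{\phi,x}\}_{x\in\{0,1\}}$ on the purifying system such that, for each setting $\phi$, applying $M_{A_i'}^{\phi,x}$ to $\ket{\psi_i}$ returns outcome $x$ with probability $\tr(\rho_{A_i}^{\phi,x})$ and collapses $A_i$ to the correctly normalized conditional state. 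A direct calculation matching the definitions~\eqref{eq:cdef} and~\eqref{eq:cprimedef} then yields $c_i = c_i'$, so the virtual device parameter $\bar{c}(m,n)$ inherits the bound $\bar{c}'$ of~\eqref{eq:pm-assum2}.

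\emph{Basis-independent sifting.} Assumption~\eqref{eq:pm-assum3} allows one to factor Bob's ternary measurement as a basis-independent CPTP map $\mathcal{E}_{B_i}^{\mathrm{test}}$ producing a single classical bit $\Omega_i$ that indicates whether the outcome is conclusive, followed, on the conclusive branch, by a two-outcome measurement of the type used in the entanglement-based protocol. Since the detection test does not depend on $\Phi_B$, and the marginal on $A$ of Alice's quantum output does not depend on $\Phi_A$ (by~\eqref{eq:pm-assum1} applied to the reduced state), the set $\Omega$ is independent of both basis strings. Consequently, conditioned on $F^{\mathrm{si}} = \succ$ and on a particular sifting set $\Sigma$, the restricted bases $\Phi_A|_\Sigma$ and $\Phi_B|_\Sigma$ coincide and are uniformly distributed on $\{0,1\}^m$, and the virtual quantum state on the sifted purifying systems together with Bob's post-test registers is independent of this common basis.

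\emph{Assembly and main obstacle.} Putting the pieces together, the joint distribution of all output registers of $\qkdpm(\cN_{A\to BE})$ coincides with the following procedure: prepare the virtual entangled states $\ket{\psi_i}_{A_i' A_i}$, apply $\cN$ to $A$, perform the basis-independent detection test on $B$ to obtain $\Omega$ and the sifting set $\Sigma$, and then execute \qkdeb{} on the $m$ sifted virtual subsystems with device parameter $\bar{c}'$. The extra classical registers $(C^\Omega, C^\Sigma, S^{\Phi_{\! B}}, F^{\mathrm{si}})$ generated before sifting can be absorbed into the eavesdropper's side information without enlarging the trace distance to the ideal protocol. Invoking the hypothesis that \qkdeb{} is $\eps$-secure with $\bar{c} = \bar{c}'$ then yields $\Delta_{M,\mathrm{si},\mathrm{pe},\mathrm{ec},\mathrm{pa}} \leq \eps$. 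The main subtlety, and the reason for each of the three device assumptions, is to prevent the sifting step from creating spurious correlations between the virtual state on $\Sigma$ and the agreed basis $\Phi$: without~\eqref{eq:pm-assum1} the quantum marginal would leak $\Phi_A$, and without~\eqref{eq:pm-assum3} the set $\Omega$ could depend on $\Phi_B$, either of which would undermine the direct reduction to the entanglement-based analysis.
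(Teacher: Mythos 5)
Your proposal is correct and follows essentially the same route as the paper: the source-replacement construction is the paper's Lemma~\ref{lm:virtual} and Corollary~\ref{cor:virtual}, the factorization of Bob's measurement into a basis-independent conclusiveness test followed by the ideal binary measurement is Lemma~\ref{lm:sifting2}, the independence of the sifted basis seed from the post-sifting state is Lemma~\ref{lm:sifting}, and the final assembly (absorbing $C^\Omega, C^\Sigma, S^{\Phi_{\!B}}, F^{\mathrm{si}}$ into the eavesdropper's system and invoking the security of \qkdeb{}) matches the paper's reduction via the auxiliary protocol \qkdmod. The only blemishes are cosmetic: the outcome probability should be $p_x^{\phi}$ rather than $\tr(\rho_{A_i}^{\phi,x})$ for normalized states, and the commutation of Alice's deferred virtual measurement with Bob's measurement and the sifting/discard maps is used implicitly rather than spelled out.
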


As will be shown in Section~\ref{sec:proofpm}, the security of the prepare-and-measure protocol is a consequence of that of the simple protocol studied in Part \ref{part:eb}, provided that some additional assumptions are made. 

When assessing the performance of the protocol, however, two modifications appear. First, the device parameter ($\bar{c}'$ instead of $\bar{c}$) needs to be defined differently since it is no longer a function of the measurement device of Alice, but rather of her preparation device. In an ideal implementation, it is still expected to be equal to $\frac{1}{2}$, as was discussed in Section \ref{sc:assumpm}. 
The more important difference is due to the sifting procedure. Indeed, the definition of the \emph{secret key rate} should now be modified to mean the ration between the key length $\ell$ and the number $M$ of individual states prepared and sent by Alice (instead of the number $m$ in the simple entanglement-based protocol). The means that the secret key rate achieved with the prepare-and-measure protocol is given by
\begin{align}
\frac{\ell}{M} = \frac{m}{M} \cdot \frac{\ell}{m}.
\end{align}
The sifting procedure that we have considered here (and described in Section \ref{sub:mod}) is not optimized to maximize the secret key rate (or equivalently the ratio $\frac{m}{M}$), but rather to simplify the analysis as much as possible. Better sifting procedures are discussed in \cite{pfister15} and could involve not fixing the value of $M$ in advance for instance. 

A typical experiment would be characterized by a given overall transmittance $\eta$ of the optical channel, meaning that approximately $\eta M$ photons will be detected by Bob, or in other words, that the expected value of $|\Omega|$ is $\eta M$.  Given that Alice and Bob's measurement bases will coincide on expectation $50\%$ of the time, we therefore expect that 
\begin{align}
\frac{m}{M} \approx \frac{\eta}{2}
\end{align}
holds asymptotically. 

In particular, the secret key rate of the prepare-and-measure protocol is then expected to be equal to $\frac{\eta}{2}$ times the secret key rate of the simple protocol displayed in Figure~\ref{fig:keylength}.


\section{Security reduction}
\label{sec:proofpm}

In Sections~\ref{sec:ass1} and~\ref{sec:ass2}, we discuss some implications of the device assumptions made in Section~\ref{sc:assumpm}. 
The security reduction to the simple protocol will be addressed in Section \ref{sc:pm2}.

\subsection{Preparation: Assumptions on Alice's device}
\label{sec:ass1}

Consider the four states $\{ \rho_{A_i}^{x,\phi} \}_{x,\phi \in \{0,1\} }$ created by Alice in round $i$ of the protocol for some $i \in [M]$. Since these states adhere to the assumptions stated in~\eqref{eq:pm-assum1} and~\eqref{eq:pm-assum2}, the following lemma is applicable:
\begin{lemma}
\label{lm:virtual}
  Let $\{ \rho_A^{\phi,x} \}_{\phi,x} \subset \cS(A)$ where $x$ and $\phi$ are taken from discrete sets. Moreover, let $\{ p_{x}^{\phi} \}_x$ be a probability distribution for each $\phi$ such that \begin{align}
    \sum_{x} p_{x}^\phi \rho_A^{\phi,x} = \sum_{x} p_x^{\phi'} \rho_A^{\phi',x} \quad \textrm{for all $\phi$ and $\phi'$} .
  \end{align}
  Then there exists a state $\tau_{AA'} \in \cS(AA')$ where $A' \equiv A$ and a generalized measurement $\{ M_{A'}^{\phi,x} \}_{x}$ on $A'$ for each $\phi$ such that
  \begin{align}
    p^{\phi}_x \rho_{A}^{\phi,x} = \tr_{A'}\Big[ \tau_{AA'} \, \id_A \otimes \big( M_{A'}^{\phi,x} \big)^{\dag}  M_{A'}^{\phi,x} \Big] \quad \textrm{and} \quad
    c\Big( \big\{ M_{A'}^{\phi,x} \big\}_x, \big\{ M_{A'}^{\phi',x} \big\}_x \Big) = c'\Big( \{ \rho_A^{\phi,x} \}_x, \{ \rho_A^{\phi',x} \}_{x} \Big) .  \label{eq:conditions}
  \end{align}
\end{lemma}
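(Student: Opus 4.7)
The plan is a standard source-replacement argument. Set $X := \sum_x p_x^\phi \rho_A^{\phi,x}$, which is $\phi$-independent by assumption, and purify it: fix an orthonormal basis $\{\ket{y}\}$ of $A$, let $A' \cong A$ be an identical copy, and take $\tau_{AA'} := \proj{\psi}$ with
\[ \ket{\psi}_{AA'} := \bigl(\sqrt{X} \otimes \id_{A'}\bigr) \sum_y \ket{y}_A \ket{y}_{A'}. \]
A direct calculation yields the basic identity $\tr_{A'}\bigl[\proj{\psi}(\id_A \otimes O_{A'})\bigr] = \sqrt{X}\, O^T \sqrt{X}$ for every operator $O$ on $A'$, where $O^T$ denotes the transpose of $O$ in the chosen basis. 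This will be the main computational tool.

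For each $\phi$ and $x$, I take as Kraus operator on $A'$
\[ M_{A'}^{\phi,x} := \bigl( X^{-1/2}\, \sqrt{p_x^\phi\, \rho_A^{\phi,x}}\,\bigr)^T, \]
with $X^{-1/2}$ the generalized inverse on $\mathrm{supp}(X)$ (outside the support the $M^{\phi,x}$ vanish, and the measurement can be completed trivially). Then $(M^{\phi,x})^\dagger M^{\phi,x} = (X^{-1/2}\, p_x^\phi \rho_A^{\phi,x}\, X^{-1/2})^T$, and summing over $x$ and using $\sum_x p_x^\phi \rho_A^{\phi,x} = X$ shows that $\sum_x (M^{\phi,x})^\dagger M^{\phi,x}$ is the identity on $\mathrm{supp}(X)$, so $\{M^{\phi,x}\}_x$ is a valid generalized measurement on $A'$ for each $\phi$. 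Feeding this $(M^{\phi,x})^\dagger M^{\phi,x}$ into the trace identity above immediately gives the first equation of~\eqref{eq:conditions}: $\sqrt{X}\cdot X^{-1/2} p_x^\phi \rho_A^{\phi,x} X^{-1/2} \cdot \sqrt{X} = p_x^\phi \rho_A^{\phi,x}$.

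For the overlap, using that transposition and adjunction are isometries for the operator norm, together with the identity $(P^T)(Q^T)^\dagger = (Q^\dagger P)^T$, one obtains
\[ \bigl\|M^{\phi,x}(M^{\phi',y})^\dagger\bigr\|_\infty^2 = \bigl\|\sqrt{p_x^\phi \rho_A^{\phi,x}}\, X^{-1}\, \sqrt{p_y^{\phi'} \rho_A^{\phi',y}}\bigr\|_\infty^2. \]
Maximizing over $x,y$ gives exactly $c'(\{\rho_A^{\phi,x}\}_x, \{\rho_A^{\phi',x}\}_x)$ once the arguments of $c'$ are read as the subnormalized ensemble elements $p_x^\phi \rho_A^{\phi,x}$ summing to $X$ (the only reading consistent with the constraint $\sum_x \rho^x = X$ in~\eqref{eq:cprimedef}). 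The main obstacle is the careful bookkeeping of transposes arising from the basis-dependent purification: each must either be absorbed by positive-semidefiniteness (first condition) or cancelled against an adjoint (overlap); everything else is direct algebra of positive operators and their generalized inverses.
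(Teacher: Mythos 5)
Your proposal is correct and follows essentially the same route as the paper's proof: purify the ($\phi$-independent) average state $X=\tau_A$ via $(\sqrt{X}\otimes\id)\sum_y\ket{y}\ket{y}$ and take the ``pretty good''-type Kraus operators $M_{A'}^{\phi,x}=\sqrt{p_x^\phi}\,(X^{-1/2}\sqrt{\rho_A^{\phi,x}})^T$, which is exactly the paper's $\sqrt{p_x^\phi}((\rho_A^{\phi,x})^{1/2})^T(\tau_A^{-1/2})^T$. Your treatment is in fact slightly more careful than the paper's on the generalized inverse/support issue and on the (sub)normalization convention in the definition of $c'$, but the construction and all three verifications coincide.
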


\begin{proof}
  We will explicitly construct the state and measurement as follows. First, let us introduce $\tau_A := \sum_{x} p_x^{\phi} \rho_A^{\phi,x}$ and choose $\tau_{AA'}$ as its purification on $A'$. Now we choose 
  \begin{align}
    M_{A'}^{\phi,x} := \sqrt{p_{x}^{\phi}} \left( \big( \rho_A^{\phi,x} \big)^{\frac{1}{2}} \right)^T \left( \tau_{A}^{-\frac{1}{2}} \right)^T 
  \end{align}
  where the transpose is taken with regards to the Schmidt basis of $\tau_{AA'}$. Let us first verify that this constitutes a generalized measurement. Indeed, for every $\phi$, we find
  \begin{align}
    \sum_x \big( M_{A'}^{\phi,x} \big)^{\dag} M_{A'}^{\phi,x} = \sum_x p_x^{\phi} \left( \tau_{A}^{-\frac{1}{2}} \right)^T  \left( \rho_A^{\phi,x} \right)^{T} \left( \tau_{A}^{-\frac{1}{2}} \right)^T = \id_{A'} \,.
  \end{align}
  Let us now verify the conditions in~\eqref{eq:conditions}. Since $\tau_{AA'} = | \tau_{AA'} \rangle\langle \tau_{AA'} |$ purifies $\tau_A$, we find
  \begin{align}
     \id_A \otimes M_{A'}^{\phi,x} \, \big| \tau_{AA'} \big\rangle = \sqrt{p_{x}^{\phi}}\ \id_A \otimes \left( \big( \rho_A^{\phi,x} \big)^{\frac{1}{2}} \right)^T \left( \tau_{A}^{-\frac{1}{2}} \right)^T  \big| \tau_{AA'} \big\rangle = \sqrt{p_{x}^{\phi}}\  \big| \rho_{AA'}^{\phi,x} \big\rangle \,,
  \end{align}
  where $\rho_{AA'}^{\phi,x} = | \rho_{AA'}^{\phi,x} \rangle\!\langle \rho_{AA'}^{\phi,x} |$ purifies $\rho_{A}^{\phi,x}$. The first equality readily follows. The second equality can be confirmed by consulting the definitions of $c$ and $c'$ in~\eqref{eq:cdef} and~\eqref{eq:cprimedef}, respectively.
\end{proof}

These assumptions on Alice's preparation allow us to replace the state preparation by a measurement performed on a virtual extension of the average prepared state.

\begin{corollary}
\label{cor:virtual}
If the two assumptions in~\eqref{eq:pm-assum1} and~\eqref{eq:pm-assum2} on Alice's preparation device hold, then
for each $i \in [M]$ there exists a state $\rho_{A_iA_i'}$ where $A_i' \equiv A_i$ and generalized measurements on $A_i'$ of the form prescribed in Section~\ref{sec:assumpt} such that $c_i \leq c_i'$, and, in particular, $\bar{c} \leq \bar{c}'$. Combining all these measurements into a map $\mathcal{M}_{A'\to R|S^{\Phi_{\! A}}}$ acting on $A' \equiv A_{[M]}'$, we further have
\begin{align}
\rho_{A R \Phi_A} = \mathcal{M}_{A'\to R|S^{\Phi_{\! A}}}\left(\rho_{AA'} \otimes \rho_{\Phi_{\! A}} \right) .
\end{align}
\end{corollary}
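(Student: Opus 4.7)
The natural approach is to invoke Lemma~\ref{lm:virtual} separately for each round and then tensor the resulting objects. Fix $i \in [M]$ and apply the lemma to the four states $\{\rho_{A_i}^{\phi,x}\}_{\phi,x \in \{0,1\}}$ with uniform probabilities $p_x^\phi = \tfrac12$. Assumption~\eqref{eq:pm-assum1} says exactly that $\sum_x \tfrac12\,\rho_{A_i}^{\phi,x}$ does not depend on $\phi$, which is the lemma's hypothesis. It thus produces an ancilla $A_i' \equiv A_i$, a state $\rho_{A_iA_i'} \in \cS(A_iA_i')$, and for each $\phi \in \{0,1\}$ a binary generalized measurement $\{M_{A_i'}^{\phi,x}\}_{x \in \{0,1\}}$ on $A_i'$ (of the form prescribed in Section~\ref{sec:assumpt}) such that
\begin{align}
\tfrac12\, \rho_{A_i}^{\phi,x} = \tr_{A_i'}\Big[\rho_{A_iA_i'}\big(\id_{A_i} \otimes (M_{A_i'}^{\phi,x})^{\dag}M_{A_i'}^{\phi,x}\big)\Big]
\quad \textrm{and} \quad
c\Big(\{M_{A_i'}^{0,x}\}_x, \{M_{A_i'}^{1,x}\}_x\Big) = c_i'.
\end{align}
Declaring these constructed operators to be the ``trusted measurements'' of the virtual entanglement-based protocol gives $c_i = c_i' \leq \bar{c}'$ for every $i$, so $\bar{c} \leq \bar{c}'$ follows immediately from~\eqref{eq:cbar}.

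For the tensor-product statement, set $\rho_{AA'} := \bigotimes_{i=1}^M \rho_{A_iA_i'}$ and $M_{A'}^{\phi,r} := \bigotimes_{i=1}^M M_{A_i'}^{\phi_i,r_i}$, and define $\mathcal{M}_{A' \to R|S^{\Phi_{\!A}}}$ to be the map that reads the basis string $\phi$ off $S^{\Phi_{\!A}}$, applies $M_{A'}^{\phi,r}$ on $A'$, and records the outcome in $R$, in direct analogy with the measurement maps of Section~\ref{sec:model}. Expanding $\rho_{S^{\Phi_{\!A}}} = 2^{-M}\sum_\phi \proj{\phi}_{S^{\Phi_{\!A}}}$ and using the per-round identity above, one obtains
\begin{align}
\mathcal{M}_{A' \to R|S^{\Phi_{\!A}}}\bigl(\rho_{AA'} \otimes \rho_{S^{\Phi_{\!A}}}\bigr)
= \frac{1}{4^M}\sum_{\phi,r \in \{0,1\}^M}\proj{\phi,r}_{S^{\Phi_{\!A}}R} \otimes \bigotimes_{i=1}^M \rho_{A_i}^{\phi_i,r_i},
\end{align}
which is precisely the state $\rho_{ARS^{\Phi_{\!A}}}$ produced by applying the preparation map $\mathcal{P}_{\fail \to A|RS^{\Phi_{\!A}}}$ to the uniform seeds, as required.

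The construction is essentially bookkeeping once Lemma~\ref{lm:virtual} is in place; the only step that needs care is tracking the factors of two, where the uniform weight $2^{-M}$ on $S^{\Phi_{\!A}}$ combines with one factor of $\tfrac12$ per round coming from the lemma to reproduce the $4^{-M}$ weight that appears in the prepared state. No additional idea beyond the round-by-round reduction is needed.
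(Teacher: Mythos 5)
Your proposal is correct and matches the paper's intended argument: the corollary is stated in the paper without an explicit proof precisely because it is the round-by-round application of Lemma~\ref{lm:virtual} with $p_x^\phi=\tfrac12$ followed by tensoring, which is exactly what you do. The bookkeeping of the factors of two (one $\tfrac12$ per round from the lemma combining with the uniform seed weight $2^{-M}$ to give $4^{-M}$) is handled correctly.
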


\subsection{Assumption on Bob's measurement:}
\label{sec:ass2}

The objective of the assumption made on Bob's measurement is to show that when the sifting procedure succeeds, the resulting register $S^\Phi$ is independent of the state shared by Alice and Bob, similarly as in the entanglement-based protocol of Part \ref{part:eb}.

\begin{lemma}
\label{lm:sifting2}
If Bob's measurement satisfies \eqref{eq:pm-assum3}, that is, $(M_{B_i}^{0,\fail} )^\dagger (M_{B_i}^{0,\fail} )= (M_{B_i}^{1,\fail} )^\dagger (M_{B_i}^{1,\fail} )$ for all $i \in [M]$, then the measurement map $\cM_{B\to U \Omega |S^{\Phi_{\! B}}}$ can be decomposed as
\begin{align}
\label{eq:pm-assum4} 
\cM_{B\to U\Omega|S^{\Phi_{\! B}}} = \cM_{B\to U |S^{\Phi_{\! B}} \Omega} \circ \cM_{B\to B \Omega} \,.
\end{align}
\end{lemma}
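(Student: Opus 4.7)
The plan is to decompose Bob's measurement channel subsystem by subsystem. Since $\cM_{B\to U\Omega|S^{\Phi_{\! B}}}$ acts independently on each $B_i$ with global Kraus operators $M_B^{\phi,u} = \bigotimes_i M_{B_i}^{\phi_i, u_i}$, it suffices to establish the decomposition for a single index $i$ and then tensor; the register $S^{\Phi_{\! B}}$ is treated as a control throughout.

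Fix $i \in [M]$. Assumption~\eqref{eq:pm-assum3} says the POVM element $E_i^\fail := (M_{B_i}^{0,\fail})^\dagger M_{B_i}^{0,\fail}$ for the inconclusive outcome does not depend on the basis. Setting $E_i^{\mathrm{conc}} := 1 - E_i^\fail$, the completeness of the measurement forces $E_i^{\mathrm{conc}} = \sum_{u \in \{0,1\}} (M_{B_i}^{\phi,u})^\dagger M_{B_i}^{\phi,u}$ for either basis choice $\phi$. I declare the first step $\cM_{B_i \to B_i \Omega_i}$ to be the basis-independent channel with the two Kraus operators $|\fail\rangle_{\Omega_i} \otimes \sqrt{E_i^\fail}$ and $|\mathrm{conc}\rangle_{\Omega_i} \otimes \sqrt{E_i^{\mathrm{conc}}}$.

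For the second step I use the polar decomposition: write $M_{B_i}^{\phi,\fail} = V_i^{\phi,\fail} \sqrt{E_i^\fail}$ for a partial isometry $V_i^{\phi,\fail}$, and for $u \in \{0,1\}$ define $L_i^{\phi,u} := M_{B_i}^{\phi,u} (\sqrt{E_i^{\mathrm{conc}}})^{+}$ using the Moore--Penrose pseudoinverse. The second step $\cM_{B_i \to U_i | S^{\Phi_{\! B}} \Omega_i}$ is then controlled on $\Omega_i$ and $\Phi_{\! B}$: on the branch $\Omega_i = \fail$ with basis $\phi$, apply $V_i^{\phi,\fail}$ to $B_i$ and write $\fail$ into $U_i$; on the branch $\Omega_i = \mathrm{conc}$ with basis $\phi$, perform the refined measurement with Kraus operators $\{ |u\rangle_{U_i} \otimes L_i^{\phi,u}\}_{u \in \{0,1\}}$. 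Completeness on the relevant support follows from $\sum_u (L_i^{\phi,u})^\dagger L_i^{\phi,u}$ equalling the projector onto the support of $E_i^{\mathrm{conc}}$, and the channel extends arbitrarily off support since the output of the first step never lies there.

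Equality of the two channels then reduces to a direct Kraus-operator computation: $V_i^{\phi,\fail} \sqrt{E_i^\fail} = M_{B_i}^{\phi,\fail}$ holds by construction, and $L_i^{\phi,u} \sqrt{E_i^{\mathrm{conc}}} = M_{B_i}^{\phi,u}$ on the support of $E_i^{\mathrm{conc}}$, which contains the range of $M_{B_i}^{\phi,u}$. Taking tensor products over $i \in [M]$ completes the proof. The main conceptual subtlety is that the post-measurement state on $B_i$ in the failure branch of the original map is $M_{B_i}^{\phi,\fail} \rho (M_{B_i}^{\phi,\fail})^\dagger$, which depends on $\phi$, whereas the basis-independent first step produces only $\sqrt{E_i^\fail}\, \rho \sqrt{E_i^\fail}$; inserting the basis-dependent partial isometry $V_i^{\phi,\fail}$ in the second step is precisely what repairs this mismatch, which is what makes hypothesis~\eqref{eq:pm-assum3} essential.
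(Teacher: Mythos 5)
Your proof is correct and follows the same two-step decomposition as the paper: a basis-independent coarse-graining into conclusive/inconclusive outcomes with Kraus operators $\sqrt{E_i^{\fail}}$ and $\sqrt{E_i^{\mathrm{conc}}}$, followed by a refined, basis-controlled measurement with operators $M_{B_i}^{\phi,u}\big(\sqrt{E_i^{\mathrm{conc}}}\big)^{+}$ on the conclusive branch. The one place you genuinely depart from the paper is the inconclusive branch of the second step: the paper simply puts the identity there, whereas you insert the partial isometry $V_i^{\phi,\fail}$ from the polar decomposition $M_{B_i}^{\phi,\fail}=V_i^{\phi,\fail}\sqrt{E_i^{\fail}}$. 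Your version is the more careful one: with the identity, the composed channel reproduces the original Kraus operator $M_{B_i}^{\phi,\fail}$ only up to a basis-dependent partial isometry acting on the post-measurement state of the failed subsystem, so the operator identity \eqref{eq:pm-assum4} holds exactly only under your construction (or under the stronger hypothesis $M_{B_i}^{0,\fail}=M_{B_i}^{1,\fail}$, which \eqref{eq:pm-assum3} does not imply). The discrepancy is harmless for the security reduction, since the failed subsystems are discarded at sifting and the two channels agree after tracing them out, but your observation that \eqref{eq:pm-assum3} fixes only the POVM element and not the Kraus operator is exactly the right point to flag. One minor wording slip: in verifying $L_i^{\phi,u}\sqrt{E_i^{\mathrm{conc}}}=M_{B_i}^{\phi,u}$ you invoke the \emph{range} of $M_{B_i}^{\phi,u}$; what is actually needed is that the support of $(M_{B_i}^{\phi,u})^{\dagger}M_{B_i}^{\phi,u}$ lies inside the support of $E_i^{\mathrm{conc}}$, which follows from $(M_{B_i}^{\phi,u})^{\dagger}M_{B_i}^{\phi,u}\leq E_i^{\mathrm{conc}}$.
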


\begin{proof}
Define for each $i \in [M]$ operators $M_{B_i}^{\succ}$ and $M_{B_i}^{\fail}$ satisfying 
\begin{align}
(M_{B_i}^{\succ})^\dagger (M_{B_i}^{\succ})&=\sum_{z\in \{0,1\}} (M_{B_i}^{0,z})^\dagger (M_{B_i}^{0,z})= \sum_{z\in \{0,1\}} (M_{B_i}^{1,z})^\dagger (M_{B_i}^{1,z})\\
(M_{B_i}^{\fail})^\dagger (M_{B_i}^{\fail}) &= (M_{B_i}^{0,\fail} )^\dagger (M_{B_i}^{0,\fail} )= (M_{B_i}^{1,\fail} )^\dagger (M_{B_i}^{1,\fail} ).
\end{align}
Note in particular that $(M_{B_i}^{\succ})^\dagger (M_{B_i}^{\succ}) +  (M_{B_i}^{\fail})^\dagger (M_{B_i}^{\fail}) =\id_{B_i}$.
The generalized measurement $\cM_{B\to B \Omega}$ is then described by
\begin{align}
\cM_{B\to B \Omega}\: : \: \rho_B \mapsto \rho_{B\Omega} = \sum_{c \in \{\succ, \fail\}^M } \proj{c}_\Omega \otimes M_B^c \rho_B (M_B^c)^{\dagger},
\end{align}
with $M_B^c := \bigotimes_{i=1}^M M_{B_i}^{c_i}$.
Up to relabeling, the string $c$ contained in register $\Omega$ describes the subset $\omega$ describing the indices for which the measurement was conclusive. Indeed $\omega(c) = \{ i\in [M] \: : \: c_i \ne \fail\} = \{i \in [M] \: : \: c_i = \succ\}$. It will be convenient in the following to write $M_B^\omega$ instead of $M_B^c$, and put the value $\omega = \omega(c)$ in register $\Omega$.
Let us now define the second measurement, $\cM_{B\to U |S^{\Phi_{\! B}} \Omega}$, characterized by operators $\{M_{B_i}^{u, \phi, c}\}_{u\in \{0,1, \fail\}}$ for each $i \in [M]$. 
In the case where the register $\Omega_i$ is set to $\fail$ (or equivalently that the set $\omega$ does not contain index $i$), we choose
\begin{align}
M_{B_i}^{\fail, \phi, \fail} = \id_{B_i}, \quad M_{B_i}^{0, \phi, \fail} = 0, \quad M_{B_i}^{1, \phi, \fail} = 0
\end{align}  
which means that the measurement is essentially trivial and simply reveals the content of register $\Omega_i$. 
To address the case where the register is set to $\succ$, indicating that a conclusive outcome should be obtained, we choose operators $M_{B_i}^{0, \phi, \succ}$ and $M_{B_i}^{1, \phi, \succ}$ given by  
\begin{align}
M_{B_i}^{0, \phi, \succ} = M_{B_i}^{0, \phi} (M_{B_i}^{ \succ})^{-1/2}, \quad M_{B_i}^{1, \phi, \succ} = M_{B_i}^{1, \phi} (M_{B_i}^{ \succ})^{-1/2}.
\end{align}
It is immediate to check that these define valid generalized measurements and that Eq.~\eqref{eq:pm-assum4} is satisfied. 
\end{proof}

\begin{lemma}
\label{lm:sifting}
If the assumption of Eq.~\eqref{eq:pm-assum3} on Bob's measurement holds, then for any state $\rho_{A'BE}$, define
\begin{align}
\rho_{A''B'  C^\Sigma C^\Omega S^{\Phi} E F^{\mathrm{si}}} = \mathcal{E}_{\mathrm{di}} \circ \mathcal{E}_{\mathrm{si}} \circ \cM_{B\to B\Omega} (\rho_{A'B E} \otimes \rho_{S^{\Phi_{\! A}}} \otimes \rho_{S^{\Phi_{\! B}}}).
\end{align}
Then, the state conditioned on the sifting procedure passing satisfies:
\begin{align}
\rho_{A''B' S^\Phi C^\Sigma C^\Omega E | F^{\mathrm{si}}= \succ} = \rho_{A'B C^\Sigma C^\Omega E | F^{\mathrm{si}}= \succ} \otimes \rho_{S^{\Phi}},
\end{align}
where $\rho_{S^{\Phi}} = \frac{1}{2^m} \sum_{\phi \in \{0,1\}^m} \proj{\phi}_{\Phi }$.
\end{lemma}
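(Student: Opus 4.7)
The strategy is to track the seed registers through the three maps and use a simple XOR-symmetry on the sifted positions to isolate the uniform randomness in $S^{\Phi}$.

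First, Lemma~\ref{lm:sifting2} tells us that $\cM_{B\to B\Omega}$ is implemented by Kraus operators $M_{B_i}^{c_i}$ indexed by $c_i \in \{\succ,\fail\}$ that do not depend on any basis setting, so this map acts as the identity on both seed registers. Hence the state after Bob's measurement has the product form
\begin{align*}
\tau \;:=\; \cM_{B\to B\Omega}(\rho_{A'BE}) \otimes \rho_{S^{\Phi_{\!A}}} \otimes \rho_{S^{\Phi_{\!B}}},
\end{align*}
with both seeds maximally mixed and uncorrelated from the post-measurement state on $A'B\Omega E$. Next, the sifting map $\mathcal{E}_{\mathrm{si}}$ is a purely classical function: for each value $\omega$ of $\Omega$ and each pair $(\phi_A,\phi_B)$ it deterministically writes $C^\Sigma$ and the flag $F^{\mathrm{si}}$, and leaves $A'$, $B$, $E$ alone.

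Fix any $\omega$ and any $\sigma\subseteq\omega$ with $|\sigma|=m$, and let
\begin{align*}
N(\omega,\sigma) \;:=\; \Big\{(\phi_A,\phi_B)\in\{0,1\}^{2M} : \phi_A|_\sigma = \phi_B|_\sigma \text{ and } \phi_A|_{\sigma'}\neq\phi_B|_{\sigma'}\text{ for every }\sigma'\prec\sigma,\,\sigma'\subseteq\omega,\,|\sigma'|=m\Big\}
\end{align*}
be the set of seed pairs that produce exactly this $\sigma$ together with $F^{\mathrm{si}}=\succ$. The key observation is that for every $\psi\in\{0,1\}^m$, letting $\bar\psi\in\{0,1\}^M$ be the extension of $\psi$ by zeros outside $\sigma$, the shift $(\phi_A,\phi_B)\mapsto(\phi_A\oplus\bar\psi,\,\phi_B\oplus\bar\psi)$ is an involution on $N(\omega,\sigma)$: on the positions in $\sigma$ both seeds are XORed by the same bits so the agreement pattern is preserved, and on any $\sigma'\ne\sigma$ inside $\omega$ the pointwise difference $\phi_A\oplus\phi_B$ is unchanged, so the strict inequality on $\sigma'$ is preserved as well. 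Consequently the conditional distribution of the common value $\phi_A|_\sigma=\phi_B|_\sigma$ on $N(\omega,\sigma)$ is uniform over $\{0,1\}^m$.

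Putting this together, the sub-normalized branch of $\mathcal{E}_{\mathrm{si}}(\tau)$ selected by the event $\{F^{\mathrm{si}}=\succ,\,\Omega=\omega,\,\Sigma=\sigma\}$ therefore takes the form
\begin{align*}
\frac{1}{4^M}\sum_{(\phi_A,\phi_B)\in N(\omega,\sigma)} \proj{\phi_A}_{S^{\Phi_{\!A}}}\otimes\proj{\phi_B}_{S^{\Phi_{\!B}}} \otimes \tilde\rho^{\omega}_{A'BE},
\end{align*}
and the involution above shows that, on this block, the common value $\phi_A|_\sigma=\phi_B|_\sigma$ is uniform and in product with $\tilde\rho^{\omega}_{A'BE}$ and with the (deterministic) values of $\omega$ and $\sigma$. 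The discard map $\mathcal{E}_{\mathrm{di}}$ then copies this common value into $S^{\Phi}$ and traces out the remaining seed content together with the subsystems of $A'$ and $B$ outside $\sigma$, so $S^{\Phi}$ emerges maximally mixed and in tensor product with the remaining registers. Summing over $\omega$ and $\sigma$ preserves this product structure and gives the claim. The main thing to verify carefully, and essentially the only nontrivial point, is the XOR-symmetry: one must check that the lexicographic rule used inside $\mathcal{E}_{\mathrm{si}}$ to single out $\sigma$ does not bias the conditional distribution of $\phi_A|_\sigma$, which is exactly what the invariance of the inequality constraints under the shift $(\phi_A,\phi_B)\mapsto(\phi_A\oplus\bar\psi,\phi_B\oplus\bar\psi)$ ensures.
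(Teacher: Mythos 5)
Your proposal is correct and follows essentially the same route as the paper: the core ingredient in both is the XOR-invariance of the sifting map, which the paper states as $\mathrm{sift}(\phi_A + \theta, \phi_B+\theta, \omega) = \mathrm{sift}(\phi_A, \phi_B, \omega)$ and which your involution on $N(\omega,\sigma)$ instantiates with $\theta=\bar\psi$ supported on $\sigma$. Your block-by-block bookkeeping over $(\omega,\sigma)$ just makes explicit the uniformity claim that the paper's proof asserts more tersely.
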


\begin{proof}
Since the measurement map $\cM_{B\to B\Omega}$ only acts on register $B$, independently of the value $\Phi_{\! B}$, we have
\begin{align}
\cM_{B\to B \Omega} (\rho_{A'BE} \otimes \rho_{S^{\Phi_{\! A}}} \otimes \rho_{S^{\Phi_{\! B}}}) =
       \rho_{A'B \Omega E} \otimes \rho_{S^{\Phi_{\! A}}} \otimes \rho_{S^{\Phi_{\! B}}},
\end{align}
where the state $\rho_{A' B   E\Omega} = \cM_{B\to B\Omega} (\rho_{A' B E})$ is a classical-quantum state:
\begin{align}
\rho_{A' BE \Omega } &=\sum_{\omega \in 2^{[M]}} \proj{\omega}_\Omega \otimes  (\id_{A'} \otimes M_B^\omega \otimes \id_E)\rho_{A'BE} (\id_{A'} \otimes (M_B^\omega)^{\dagger} \otimes \id_E).
\end{align}
It is straightforward to check that the classical map `sift' has the following property: for all strings $\phi_A, \phi_B, \theta \in \{0,1\}^M$ and any subset $\omega \subseteq [M]$, if the sifting succeeds, then 
\begin{align}
\mathrm{sift}(\phi_A + \theta, \phi_B+\theta, \omega)  =\mathrm{sift}(\phi_A , \phi_B, \omega).  
\end{align}
Indeed, this is true since the map `sift' only examines whether Alice and Bob's measurement bases coincide or not, and not their actual value. In particular, if $\Phi_{\! A}$ and $\Phi_{\! B}$ are uniformly distributed, then $\Phi$, the restriction of $\Phi_{\! A}$ to the subset $\Sigma$ returned by the sifting map when it succeeds, will also be uniformly distributed over the set of strings of length $m$.

Finally, the discarding map $\mathcal{E}_{\mathrm{di}}$ examines register $S^{\Phi_{\! A}}$ and puts its content, restricted to the subset determined by the sifting map, into register $S^{\Phi}$, and traces over all the systems that do not belong to that subset. The above property of the sifting map ensures that the value of $\Phi$ does not depend on $\Omega$.

This establishes that whenever the sifting test passes, the output state takes a tensor product form: $\rho_{A''B' S^\Phi C^\Sigma C^\Omega E | F^{\mathrm{si}}= \succ} = \rho_{A''B' C^\Sigma C^\Omega E| F^{\mathrm{si}}= \succ} \otimes \rho_{S^{\Phi}}$.
\end{proof}

This lemma shows in particular that it is legitimate to consider $S^\Phi$ as a uniform seed, and not as a transcript, hence its notation $S^\Phi$ instead of $C^\Phi$.

\subsection{Security: reduction to the entanglement-based protocol}
\label{sc:pm2}

We are now ready to prove Theorem~\ref{thm:real}.

\begin{proof}
It is sufficient to consider the case where the sifting procedure succeeds, since otherwise the protocol aborts and its output is trivially secret. For this reason, let us define a slight variant $\qkdmod$ of the entanglement-based protocol of Part \ref{part:eb} which differs by taking an additional input register $F^{\mathrm{si}} \in \{ \succ, \fail\}$. The variant starts by examining the content of this registers, and either aborts if the flag is set to $\fail$, or proceeds with the protocol $\qkdeb$ if the flag is set to $\succ$. A second difference between $\qkdeb$ and $\qkdmod$ is that the randomness for the measurement basis choice is explicitly given as an input. 
In particular, for any state $\rho_{A'BE}$, it holds that:
\begin{align}
\qkdmod (\rho_{A'BE} \otimes \rho_{S^{\Phi}} \otimes \proj{\succ}_{F^{\mathrm{si}}}) =    \qkdeb(\rho_{A'BE}) \,.
\end{align}
From it definition, it is immediate that if $\qkdeb$ is $\eps$-secure, then so is $\qkdmod$. Indeed, the only quantitative difference between the two protocol is that the latter one is less robust since it will not output nontrivial keys as soon as the sifting flag is set to $\fail$.

Our goal is therefore to show that in that case, for any input channel $\cN_{A \to BE}$, there exists a state $\rho_{A''B'E'F^{\mathrm{si}}}$ where $A''$ and $B'$ consist of $m$ systems such that 
\begin{align}
\qkdpm(\cN_{A\to BE}) = \qkdmod (\rho_{A''B'E'F^{\mathrm{si}}} \otimes \rho_{S^{\Phi}}).
\end{align}
Let us therefore consider the application of the prepare-and-measure protocol $\qkdpm$ to an arbitrary quantum channel $\cN_{A\to BE}$. According to the description of the protocol, the classical-quantum state shared by Alice, Bob and Eve after the distribution step is given by some $\rho_{RBE S^{\Phi_{\! A}} S^{\Phi_{\! B}}}$.
The assumption made on Alice's preparation shows, as stated in Corollary \ref{cor:virtual}, show that there exist a state $\tau_{AA'}$ and a measurement map  $\mathcal{M}_{A'\to R|S^{\Phi_{\! A}}}$ such that
\begin{align}
\rho_{RBE S^{\Phi_{\! A}} S^{\Phi_{\! B}}} &=\cN_{A \to BE}(\rho_{RAS^{\Phi_{\! A}} S^{\Phi_{\! B}}}) \\
&= \left(\cN_{A \to BE} \circ  \mathcal{M}_{A'\to R|S^{\Phi_{\! A}}}\right) \left(\tau_{AA'} \otimes \rho_{S^{\Phi_{\! A}}} \otimes \rho_{S^{\Phi_{\! B}}} \right) \\
&=\mathcal{M}_{A'\to R|S^{\Phi_{\! A}}} \left(\rho_{A'BE} \otimes \rho_{\Phi_{\! A}} \otimes \rho_{\Phi_{\! B}}  \right),
\end{align}
where we defined $\rho_{A'BE} = \cN_{A\to BE} (\tau_{AA'})$.
The last equality follows from the fact that the maps $\cN_{A\to BE}$ and $\mathcal{M}_{A'\to R|S^{\Phi_{\! A}}}$ trivially commute since they act on distinct systems.
After applying the measurement map $\cM_{B \to B \Omega}$ promised by Lemma \ref{lm:sifting2}, followed by the sifting and \textrm{discard} maps, we obtain
\begin{align}
\rho_{R'B'ES^\Phi S^{\Phi_{\! B}} C^\Sigma  C^\Omega F^{\mathrm{si}}} &= \cE_{\mathrm{di}} \circ \cE_{\mathrm{si}} \circ \cM_{B \to B \Omega} \left(\rho_{RBE S^{\Phi_{\! A}} S^{\Phi_{\! B}}} \right) \\
&=  \cE_{\mathrm{di}} \circ \cE_{\mathrm{si}} \circ  \cM_{B \to B \Omega} \circ \mathcal{M}_{A'\to R|S^{\Phi_{\! A}}} \left(\rho_{A'BE S^{\Phi_{\! A}} S^{\Phi_{\! B}}} \right),
\end{align}
where $R'$ and $B'$ are now restricted to the $m$ indices corresponding to the set $\Sigma$ provided by the sifting map.
Indeed, recall that the \textrm{discard} map $\cE_{\mathrm{di}}$ replaces the $M$-system registers $R$ and $B$ by $m$-system registers $R'$ and $B'$ obtained by tracing over the systems not corresponding to $\Sigma$. 

Since the measurement map $\mathcal{M}_{A'\to R|S^{\Phi_{\! A}}}$ of Alice's system $A'$ commutes with $\cE_{\mathrm{si}} \circ  \cM_{B \to B \Omega}$, we obtain:
\begin{align}
\rho_{R'B'ES^\Phi S^{\Phi_{\! B}}C^\Sigma  C^\Omega F^{\mathrm{si}}} &=  \cE_{\mathrm{di}}\circ \mathcal{M}_{A'\to R|S^{\Phi_{\! A}}} \circ \cE_{\mathrm{si}} \circ  \cM_{B \to B \Omega}  \left(\rho_{A'BE S^{\Phi_{\! A}} S^{\Phi_{\! B}}} \right), \label{eqn:149}
\end{align}

In the mathematical description of the protocol given in Section \ref{sub:mod}, the \textrm{discard} map was applied to registers $R, B$, but since it commutes with measurement maps on either Alice's or Bob's system, the map can be just as well applied to registers $A'$ and $U$, with outputs denoted $A''$ and $U'$, respectively.
We deduce that we can replace the map $\cE_{\mathrm{di}}\circ \mathcal{M}_{A'\to R|S^{\Phi_{\! A}}}$ by $\mathcal{M}_{A''\to R'|S^{\Phi}} \circ \cE_{\mathrm{di}}$ in \eqref{eqn:149}.
This yields:
\begin{align}
\rho_{R'B'ES^\Phi S^{\Phi_{\! B}} C^\Sigma  C^\Omega F^{\mathrm{si}}} &= \mathcal{M}_{A''\to R'|S^{\Phi}} \circ \cE_{\mathrm{di}}\circ  \cE_{\mathrm{si}} \circ  \cM_{B \to B \Omega}  \left(\rho_{A'BE S^{\Phi_{\! A}} S^{\Phi_{\! B}}} \right)\\
&=  \mathcal{M}_{A''\to R'|S^{\Phi}} \left(\rho_{A''B'  C^\Sigma C^\Omega S^{\Phi} S^{\Phi_{\! B}} E F^{\mathrm{si}}}\right).
\end{align}
Lemma \ref{lm:sifting} now shows that
\begin{align}
\rho_{R'B'ES^\Phi S^{\Phi_{\! B}} C^\Sigma C^\Omega | F^{\mathrm{si}}=\succ} &= \mathcal{M}_{A''\to R'|S^{\Phi}} \left(\rho_{A''B'  E C^\Sigma C^\Omega |   F^{\mathrm{si}} = \succ}\right) \otimes \rho_{S^{\Phi}}.
\end{align}

Let us finally collect $E' = ES^{\Phi_{\!B}}C^{\Sigma} C^{\Omega}$. 
If the sifting test passes, then
\begin{align}
\qkdmod\left(\rho_{A''B' S^\Phi E' | F^{\mathrm{si}}= \succ} \otimes \rho_{S^\Phi} \otimes \proj{\succ}_{F^{\mathrm{si}}} \right) 
&= \qkdeb(\rho_{A''B'E'}),
\end{align}
which concludes the proof.
\end{proof}


\section{Conclusion}

We provide a self-contained security proof of QKD detailing all the steps of the protocol and explicitly spelling out all the required assumptions for the security proof to go through. 
For simplicity, we focussed on a variant of the entanglement-based BBM92 protocol as well as the BB84 protocol and showed that practical secret key rates can be achieved, even for moderately large block size. These results, however, come at the price of several assumptions which are sometimes challenging to enforce in practice. This should not come as a surprise since many simplified implementations are known to be vulnerable to quantum hacking, illustrating that there exist necessary trade-offs between ease of implementation and security guarantees.

We believe that there is room for improvement for these trade-offs and that further collaboration between theory and experiment will be essential for achieving this objective. In this context, it is crucial to model the protocols as thoroughly as possible in order to understand what level of security can be obtained, and under which assumptions. Finally, we would like to encourage more research into derving tight finite resource trade-offs for quantum key distribution protocols beyond BB84 and BBM92. This will likely require techniques beyond the entropic uncertainty relation presented here. An example of interest is the 6-state protocol~\cite{bruss98} where entropic uncertainty relations do not currently yield optimal secret key rates (even asymptotically) and approaches based on a more complete tomography of the channel lead to large penalities for finite keys.

\emph{Note added.} After completion of this work a novel and intriguing proof technique (based on R\'enyi entropy accumulation) has been proposed by \citet{dupuis16}. This technique does not yet seem to yield tradeoffs between security and protocol parameters that match those found in~\cite{tomamichellim11} and~\cite{hayashi11}, on which we improve upon here. However, the technique is more versatile and in particular allows to show security of device-independent protocols as demonstrated by~\citet{rotem16}. Device-independent protocols have the advantage that fewer assumptions on the physical devices used in the protocol are required, but are beyond the scope of this work.

\paragraph*{Acknowledgements.} We thank Philippe Grangier, Christopher Portmann and Charles Lim Ci Wen for helpful comments. MT is funded by an ARC Discovery Early Career Researcher Award (DECRA)
fellowship and acknowledges support from the ARC Centre of Excellence for
Engineered Quantum Systems (EQUS).




\appendix

\section{Proof of entropic uncertainty relation in Proposition~\ref{pr:ur}}
\label{app:ur}

Let us restate the desired inequality for the convenience of the reader.

\newtheorem*{prop5}{Proposition \ref{pr:ur} (restated)}

\begin{prop5}
  Let $\tau_{APRS} \in \cSsub(APRS)$ be an arbitrary sub-normalized state with $P$ a classical register, and set $t := \tr\{\tau_{APRS}\}$. Furthermore, let $\eps \in [0, \sqrt{t})$ and let $q$ be a bijective function on $P$ that is a symmetry of $\tau_{APRS}$ in the sense that $\tau_{ARS, P=p} = \tau_{ARS, P=q(p)}$ for all $p \in P$. Then, we have
  \begin{align}
    H_{\min}^{\eps}(X| P R)_{\sigma} + H_{\max}^{\eps}(X| P S)_{\sigma} \geq \log \frac{1}{c_q}
    , \qquad \textrm{where}  \label{eq:ucr}
  \end{align}
  where $c_q = \max_{p \in P} \max_{x,z \in X} \big\| {F_{A}^{q(p),x}} \big( {F_{A}^{p,z}} \big)^{\dag} \big\|_{\infty}^2$.
  Here, $\sigma_{XPRS} = \cM_{A\to X|P}(\tau_{A P RS})$ for the map 
  \begin{align}
     \cM_{A\to X|P}\big[\cdot\big] &=  \tr_A \Bigg\{ \sum_{p \in P} \sum_{x \in X} 
    \ket{x}_{X} 
      \Big( \proj{p}_{P} \otimes F_{A}^{p,x} \Big)\ \cdot\ \Big( \proj{p}_{P} \otimes F_{A}^{p,x} \Big)^{\dagger} \bra{x}_{X} \Bigg\} \,. \label{eq:urth2}
  \end{align}
  and any set (indexed by $p \in P$) of generalized measurements $\{ F_A^{p,x} \}_{x \in X}$.
\end{prop5}

\begin{proof}
  The condition on $\eps$ ensures that all smooth entropies are well-defined.
  We first introduce the Stinespring dilation isometry of the measurement map~$\cM_{A \to X|P}$. This is
  the isometry $V : A \to A X X'|P$ given by
  \begin{align}
    V := \sum_{p \in P} \sum_{ x \in X} 
    \ket{x}_{X} \otimes \ket{x}_{X'} \otimes \proj{p}_{P} \otimes F_{A}^{p,x}. \label{eq:v01}
  \end{align}
  Now note that the measured state $\sigma_{XPRS}$ in~\eqref{eq:urth2} has a natural purification in 
  \begin{align}
    \sigma_{P P'AXX'RSD} &= V \big( \tau_{P P'ARSD } \big) V^{\dag} \,, \quad \textrm{where} \\
    \ket{\tau_{P P'ARSD }} &= \sum_{p \in P} \sqrt{\Pr[P = p]_{\tau}}\, \ket{p}_P \otimes \ket{p}_{P'} \otimes \ket{\tau_{ARSD|P=p}} , 
    \label{eq:v02}
  \end{align}
  where $P'$ is isomorphic to $P$ and $\ket{\tau_{ARSD|P=p}}$ is any purification of $\tau_{ARS|P=p}$ on a sufficiently large auxiliary system $D$.
  (The choice $|D| = |A||R||S|$ ensures that all purifications can be accommodated.) This now allows us to rephrase our target inequality. Using the duality relation for smooth min- and max-entropy in~\eqref{eq:dual} together with the fact that $H_{\min}^{\eps}(X | P R)_{\sigma}=H_{\min}^{\eps}(X | P' R)_{\sigma}$ since $P'$ is a copy of $P$, we find that~\eqref{eq:ucr} is equivalent to 
    \begin{align}
           H_{\min}^{\eps}(X | P R)_{\sigma} \geq H_{\min}^{\eps}(X| PAX'RD)_{\sigma} + \log \frac{1}{c_q} \,.
    \end{align}
  Moreover, the data-processing inequality for the smooth min-entropy in~\eqref{eq:dpi} applied for the map $\tr_D$ yields
  $H_{\min}^{\eps}(X| PAX'RD)_{\sigma} \leq H_{\min}^{\eps}(X| PAX'R)_{\sigma}$
  and thus it in fact suffices to show that\footnote{In its cryptographic application, another intuitive way to justify that we can remove the purifying system $D$ is that, without loss of generality, we may assume that the eavesdropper already holds a purification of the state shared by the honest parties and $D$ is thus trivial. Luckily this physical intuition is corroborated by a mathematical argument\,---\,the data-processing inequality.}
  \begin{align}
    H_{\min}^{\eps}(X | P R)_{\sigma} \geq H_{\min}^{\eps}(X| PAX'R)_{\sigma} + \log \frac{1}{c_q} \,. \label{eq:url1}
  \end{align}
  The remainder of the proof will be concerned with showing the inequality in~\eqref{eq:url1}.

  For this purpose, let us consider the following unitary rotation:
  \begin{align}
    Q_{P} = \sum_{p \in P} \big| q^{-1}(p) \big\rangle\!\big\langle p\big|_{P} \,.
  \end{align}
  that exchanges $p$ with its conjugate, $q(p)$. It clearly acts as a permutation when acting on the classical register $P$ and furthermore we have $Q_{P} (\tau_{APRS}) Q_P^{\dag} = \tau_{APRS}$
   due to the symmetry condition that we imposed on $q$ and $\tau_{PARS}$ in the statement of the proposition.
  Based on this we define the isometry
  \begin{align}
      \bar{V} := Q_{P} V Q_{P}^{\dag} = \sum_{p \in P} \sum_{ x \in X} 
    \ket{x}_{X} \otimes \ket{x}_{X'} \otimes \proj{p}_{P} \otimes F_{A}^{q(p),x}, 
  \end{align}
  that corresponds to a measurement in the basis determined by $q(p)$ instead of $p$. We find
  \begin{align}
     \bar{V} {V}^{\dag} \sigma_{AXX'PRS} V \bar{V}^{\dag} 
     &= Q_{P} V Q_{P}^{\dag} ( \tau_{APRS}  )  Q_{P} V^{\dag} Q_{P}^{\dag} = Q_{P} \sigma_{AXX'PRS} Q_{P}^{\dag} \,, \label{eq:ur10}
  \end{align}
which shows that the trace non-increasing map $\bar{V} {V}^{\dag} (\cdot) {V} \bar{V}^{\dag}$ coherently undoes the measurement in the basis determined by $p$ and then instead measures in the basis determined by $q(p)$.
  
  Now we have the tools at hand to prove the inequality in~\eqref{eq:url1}. By the definition of the smooth min-entropy, $H_{\min}^{\eps}(X|P A X' R)_{\sigma}$, there exists a state $\omega_{PAX'R} \in \cS(PAX'R)$ and a sub-normalized state $\tilde{\sigma}_{PAXX'R} \in \cSsub(PAXX'R)$ that is $\eps$-close to $\sigma_{PAXX'R}$ in the sense that $P({\sigma}_{PAXX'R}, \tilde{\sigma}_{PAXX'R}) \leq \eps$ such that the following inequality holds:
  \begin{align}
    \tilde{\sigma}_{PAXX'R} \leq 2^{-\lambda}\, \id_X \otimes \omega_{PAX'R} \quad \textrm{with} \quad \lambda := H_{\min}^{\eps}(X|P A X' R)_{\sigma} \,. \label{eq:ur15}
  \end{align}
  Next we consider the CP trace non-increasing map
  \begin{align}
    \mathcal{F}_{XX'A \to X| P}[\,\cdot\,] = \sum_{p \in P} \tr_{AX'} \Big( Q_{P}^{\dag} \bar{V} {V}^{\dag} \proj{p}_{P} \,\cdot\, \proj{p}_{P} V \bar{V}^{\dag} Q_{P} \Big).
  \end{align}
  From~\eqref{eq:ur10} we learn that $\mathcal{F}\big[\sigma_{PAXX'R}\big] = \sigma_{PXR}$.
   Thus, using the fact that the purified distance contracts~\eqref{eq:pd-dpi} when we apply $\mathcal{F}$, we find that the state $\hat{\sigma}_{PXR} := \mathcal{F}\big[ \tilde{\sigma}_{PAXX'R} \big]$
  satisfies 
  \begin{align}
  P(\hat{\sigma}_{PXR}, \sigma_{PXR}) \leq P(\tilde{\sigma}_{PAXX'R}, \sigma_{PAXX'R} ) \leq \eps . \label{eq:pdbound}
  \end{align}
  Furthermore, applying $\mathcal{F}$ on both sides of~\eqref{eq:ur15} yields
  \begin{align}
    \hat{\sigma}_{PXR} \leq 2^{-\lambda} \mathcal{F}\big[\id_X \otimes \omega_{PAX'R} \big] = 2^{-\lambda} \tr_{X'A} \Big( Q_{P}^{\dag} \bar{V} {V}^{\dag} \big( \id_X \otimes \hat{\omega}_{PAX'R} \big) V \bar{V}^{\dag} Q_{P} \Big) , \label{eq:op-ineq1}
  \end{align}
  where $\hat{\omega}_{PAX'R} = \sum_{p \in P} \proj{p}_{P} \otimes \hat{\omega}_{AX'R}^{p}$ with $\hat{\omega}_{AX'R}^{p} = \bra{p} \omega_{PAX'R} \ket{p}_{P}$.

  Let us now simplify the right-hand side of this inequality, hoping to capture the incompatibility of the measurements in the basis $p$ versus the basis $q(p)$. First, we note that 
  \begin{align}
    Q_{P}^{\dag} \bar{V} {V}^{\dag} = \sum_{p \in P} \sum_{x, z \in X} |z \rangle\!\langle x|_X \otimes |z \rangle\!\langle x|_{X'} \otimes \big|q(p) \big\rangle\!\big\langle p\big|_{P} \otimes F_{A}^{q(p),z} \big( F_{A}^{p,x} \big)^{\dag} \,.
  \end{align}
  and, hence, we can write
  \begin{align}
    &\tr_{X'A} \Big( Q_{P}^{\dag} \bar{V} {V}^{\dag} \big( \id_X \otimes \hat{\omega}_{PAX'R} \big) V \bar{V}^{\dag} Q_{P} \Big) \nonumber\\
    &\quad = \sum_{p \in P} \sum_{x,z \in X} \proj{x}_{X} \otimes \proj{q(p)}_{P} \otimes \bra{z} \tr_A \Big( F_{A}^{q(p),z} \big( F_{A}^{p,x} \big)^{\dag}  \hat{\omega}_{AX'R}^{p} F_{A}^{p,x} \big( F_{A}^{q(p),z} \big)^{\dag} \Big)   \ket{z}_{X'} \\
    &\quad\leq \sum_{p \in P} \sum_{x,z \in X} \proj{x}_{X} \otimes \proj{q(p)}_{P} \otimes \Big\| F_{A}^{p,x} \big( F_{A}^{q(p),z} \big)^{\dag}  F_{A}^{q(p),z} \big( F_{A}^{p,x} \big)^{\dag} \Big\|_{\infty}\,\bra{z} \tr_A \big(  \hat{\omega}_{AX'R}^{p}   \big)   \ket{z}_{X'} \label{eq:ur22} \\
    &\quad= \sum_{p \in P} \sum_{x,z \in X} \proj{x}_{X} \otimes \proj{q(p)}_{P} \otimes \Big\| F_{A}^{q(p),x} \big( F_{A}^{p,z} \big)^{\dag} \Big\|_{\infty}^2 \,\bra{z} \hat{\omega}_{X'R}^{p} \ket{z}_{X'} \label{eq:ur23} \\
    &\quad\leq \max_{p \in P} \max_{x, z \in X} \Big\| F_{A}^{q(p),x} \big( F_{A}^{p,z} \big)^{\dag} \Big\|_{\infty}^2 \cdot \sum_{p \in P} \sum_{x,z \in X} \proj{x}_{X} \otimes \proj{p}_{P} \otimes \bra{z} \hat{\omega}_{X'R}^{p} \ket{z}_{X'} \\
    &\quad= c_q \cdot \sum_{p \in P} \id_{X} \otimes \proj{p}_{P} \otimes \hat{\omega}_R^{{p}} \,. \label{eq:ur25}
  \end{align}
  To establish~\eqref{eq:ur22} and~\eqref{eq:ur23} we used the fact that $L^{\dag} L \leq \| L^{\dag} L \|_{\infty}\, \id = \| L \|_{\infty}^2\, \id$ for every linear operator $L$ by definition of the operator norm. The final equality~\eqref{eq:ur25} follows from the definition of $c_q$.

  Combining this bound with~\eqref{eq:op-ineq1} yields
  \begin{align}
    \hat{\sigma}_{PXR} \leq 2^{-\lambda} c_q \cdot \id_X \otimes \sum_{p \in P} \proj{p}_{P} \otimes \hat{\omega}_R^{p} \,.
  \end{align}
  Since $\sum_{p \in P} \tr(\hat{\omega}_R^{p}) = 1$ by construction and $P(\hat{\sigma}_{PXR}, \sigma_{PXR}) \leq \eps$ due to~\eqref{eq:pdbound}, the definition of the smooth entropy implies that
  \begin{align}
     H_{\min}^{\eps}(X|PR)_{\sigma} \geq \lambda - \log c_q = H_{\min}^{\eps}(X|PAX'R)_{\sigma} - \log c_q \,,
  \end{align}
  concluding the proof.
\end{proof}


\section{Proof of Leftover Hashing Lemma in Proposition~\ref{pr:leftover}}
\label{app:leftover}

\newtheorem*{prop10}{Proposition \ref{pr:leftover} (restated)}

Our proof of the leftover hashing lemma is based on the following result due to Renner~\cite[Corollary~5.5.2]{renner05}:

\begin{lemma}
\label{lm:leftover}
Let $\sigma_{XD} \in \cSsub(XD)$ be a classical on $X$ and let $\cH$ be a universal$_2$ family of hash functions from $\mathcal{X} = \{0,1\}^n$ to $\mathcal{K} = \{0,1\}^\ell$. Moreover, let $\rho_{S^{H}} = \frac{1}{|\cH|} \sum_{h \in \cH} \proj{h}_{S^H}$ be fully mixed. Then,
\begin{align}
\big\|\omega_{KS^HD} - \chi_K \otimes \omega_{S^HD} \big\|_{\tr} \leq 
\frac12 \sqrt{\tr\{ \sigma_{XD} \}} \, 2^{-\frac{1}{2}\left( H_{\min}(X|D)_{\sigma} - \ell \right)}
\end{align}
where $\chi_K = \frac{1}{2^\ell} \id_{K}$ is the fully mixed state and $\omega_{KS^HD} = \tr_{X}\big( \mathcal{E}_{f}(\sigma_{XD} \otimes \rho_{S^{H}}) \big)$ for the function $f: (x, h) \mapsto h(x)$ that acts on the registers $X$ and $S^{H}$.
\end{lemma}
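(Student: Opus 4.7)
The proof will follow the standard three-step strategy: reduce the trace norm to a suitably weighted Hilbert--Schmidt norm via Hölder's inequality, expand the resulting expected squared norm using the universal$_2$ property to eliminate cross terms, and bound the surviving diagonal contribution by the min-entropy. Since $S^H$ is a classical register diagonal in the $\{\ket{h}\}$ basis, I first observe that
\begin{align}
\|\omega_{KS^HD} - \chi_K \otimes \omega_{S^HD}\|_1 = \mathbb{E}_h\|\omega^h_{KD} - \chi_K \otimes \sigma_D\|_1,
\end{align}
where $\omega^h_{KD} = \sum_x \proj{h(x)}_K \otimes \sigma^x_D$ and $\sigma_{XD} = \sum_x \proj{x}_X \otimes \sigma^x_D$. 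Fixing a state $\tau_D \in \cS(D)$ that attains $\lambda := H_{\min}(X|D)_\sigma$, so that $\sigma^x_D \leq 2^{-\lambda}\tau_D$ for each $x$, and setting $\mu := \chi_K \otimes \tau_D$ (with $\tr\mu = 1$), a three-factor Hölder inequality $\|\mu^{1/4} B \mu^{1/4}\|_1 \leq \|\mu^{1/4}\|_4 \|B\|_2 \|\mu^{1/4}\|_4 = \sqrt{\tr\mu}\,\|B\|_2$ applied to $B = \mu^{-1/4}A\mu^{-1/4}$, followed by Jensen's inequality for $t \mapsto t^2$, yields
\begin{align}
\big(\mathbb{E}_h\|\omega^h_{KD} - \chi_K \otimes \sigma_D\|_1\big)^2 \leq 2^\ell\, \mathbb{E}_h\|Y_h\|_2^2,
\end{align}
where $Y_h := (\id_K \otimes \tau_D^{-1/4})(\omega^h_{KD} - \chi_K \otimes \sigma_D)(\id_K \otimes \tau_D^{-1/4})$.

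The second step is to evaluate $\mathbb{E}_h\|Y_h\|_2^2$ explicitly. Writing $\tilde{\sigma}^x := \tau_D^{-1/4}\sigma^x_D\tau_D^{-1/4}$ and expanding $\tr(Y_h^2)$, the cross terms between $\omega^h_{KD}$ and $\chi_K \otimes \sigma_D$ collapse to a single $-2^{-\ell}\tr(\tilde\sigma_D^2)$ contribution, giving
\begin{align}
\tr(Y_h^2) = \sum_{x,x'} \delta_{h(x),h(x')}\tr(\tilde\sigma^x \tilde\sigma^{x'}) - 2^{-\ell}\tr(\tilde\sigma_D^2).
\end{align}
The universal$_2$ property supplies $\mathbb{E}_h[\delta_{h(x),h(x')}] = 1$ for $x = x'$ and $2^{-\ell}$ otherwise. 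Substituting and using $\sum_{x\neq x'}\tr(\tilde\sigma^x\tilde\sigma^{x'}) = \tr(\tilde\sigma_D^2) - \sum_x\tr((\tilde\sigma^x)^2)$, the two $\tr(\tilde\sigma_D^2)$ contributions cancel to leave the clean expression
\begin{align}
\mathbb{E}_h\|Y_h\|_2^2 = (1-2^{-\ell})\sum_x\tr\big((\tilde\sigma^x)^2\big) \leq \sum_x\tr\big((\tilde\sigma^x)^2\big).
\end{align}

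The third step connects this back to the min-entropy. From $\sigma^x_D \leq 2^{-\lambda}\tau_D$, sandwiching by $\tau_D^{-1/2}$ gives $\tau_D^{-1/2}\sigma^x_D\tau_D^{-1/2} \leq 2^{-\lambda}\id$ on the support of $\tau_D$. Applying the elementary inequality $\tr(PR) \leq \|P\|_\infty\tr(R)$ for positive semidefinite $P, R$ with $P = \tau_D^{-1/2}\sigma^x_D\tau_D^{-1/2}$ and $R = \sigma^x_D$, I obtain
\begin{align}
\tr\big((\tilde\sigma^x)^2\big) = \tr\big(\tau_D^{-1/2}\sigma^x_D\tau_D^{-1/2}\sigma^x_D\big) \leq 2^{-\lambda}\tr(\sigma^x_D).
\end{align}
Summing over $x$ yields $\sum_x\tr((\tilde\sigma^x)^2) \leq 2^{-\lambda}\tr(\sigma_{XD})$. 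Combining all three steps and using $\|\cdot\|_{\tr} = \tfrac12\|\cdot\|_1$ produces the claimed bound. The main obstacle I anticipate is the bookkeeping of the cross-term cancellation in the expansion of $\tr(Y_h^2)$; once that is verified, the weighted norm estimate and the min-entropy bound are both routine.
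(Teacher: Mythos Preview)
Your proof is correct and follows essentially the same three-step structure as the paper's: block-diagonal decomposition over $h$, H\"older plus Jensen to pass to a $\tau_D$-weighted $2$-norm, expansion using the universal$_2$ property to obtain $(1-2^{-\ell})\sum_x \tr((\tilde\sigma^x)^2)$, and the min-entropy bound. The only differences are cosmetic: you use the symmetric sandwich $\tau_D^{-1/4}(\cdot)\tau_D^{-1/4}$ via a three-factor H\"older while the paper uses the one-sided weighting $\tau_D^{-1/2}(\cdot)$, and in the last step the paper invokes operator anti-monotonicity of the inverse to get $\id_X\otimes\tau_D^{-1}\le 2^{-\lambda}\sigma_{XD}^{-1}$ whereas you use the equivalent componentwise estimate $\|\tau_D^{-1/2}\sigma_D^x\tau_D^{-1/2}\|_\infty\le 2^{-\lambda}$.
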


We provide a short proof for the convenience of the reader (see~\cite[Section~7.3.2]{mybook}).

\begin{proof}
   First, by definition of the min-entropy, there exists a state $\tau_D \in \cS(D)$ such that $\sigma_{XD} \leq 2^{-H_{\min}(X|D)_{\sigma}} \id_X \otimes \tau_D$.
   Next, note that by definition of the trace distance, we have
   \begin{align}
     \big\|\omega_{KS^HD} - \chi_K \otimes \omega_{S^HD} \big\|_{\tr} = \sum_{h \in \cH} \frac{1}{2|\cH|} \big\|\omega_{KD|S^H=h} - \chi_K \otimes \omega_{D|S^H=h} \big\|_1 \, , \label{eq:firstone}
   \end{align}
   where $\| \cdot \|_1$ denotes the Schatten 1-norm. Moreover, by construction of $\omega_{KS^HD}$ it is evident that $\omega_{D|S^H=h} = \omega_D = \sigma_D$ for all $h \in \cH$.
   Due to H\"older's inequality for Schatten norms~\cite[Corollary~IV.2.6]{bhatia97}, we have
   \begin{align}
     \big\|\omega_{KD|S^H=h} - \chi_K \otimes \sigma_D \big\|_1^2 
     &\leq \big\| \id_K \otimes \tau_D^{\frac12} \big\|_2^2 \big\| \id_K \otimes \tau_D^{-\frac12}  \big( \omega_{KD|S^H=h} - \chi_K \otimes \sigma_D \big) \big\|_2^2 \\
     &= 2^{\ell} \tr\{ \tau_D \}  \tr\Big\{ \id_K \otimes \tau_D^{-1}  \big( \omega_{KD|S^H=h} - \chi_K \otimes \sigma_D \big)^2 \Big\} \\
     &= 2^{\ell} \tr\big\{ \big( \id_K \otimes \tau_D^{-1} \big)  \omega_{KD|S^H=h}^2 \big\} - \tr\big\{ \tau_D^{-1} \sigma_D^2 \big\},
   \end{align}
   where $\tau_D$ is inverted on its support. We took advantage of the fact that $\chi_K = \frac{1}{2^\ell} \id_{K}$ is proportional to the identity to simplify the above expression. Combining this with~\eqref{eq:firstone}, Jensen's inequality thus ensures that
   \begin{align}
     \big\|\omega_{KS^HD} - \chi_K \otimes \omega_{S^HD} \big\|_{\tr} \leq \frac12 \sqrt{ \sum_{h \in \cH} \frac{2^{\ell}}{|\cH|} \tr\Big\{ \big( \id_K \otimes \tau_D^{-1} \big) \omega_{KD|S^H=h}^2 \Big\} - \tr\big\{ \tau_D^{-1} \sigma_D^2 \big\} }. \label{eq:secondone}
   \end{align}
   
   Next, observe that $\omega_{KD|S^H=h} = \sum_{k \in \{0,1\}^{\ell}} \sum_{x \in \{0,1\}^n, h(x) = k} |k\rangle\!\langle k| \otimes \sigma_{D \land X=x}$ by construction.
   We then use the universal$_2$ property of $\cH$ which implies that $\frac{1}{|\cH|} \sum_{h \in \cH} 1\{ h(x) = h(y)\} = 2^{-\ell}$ when $x \neq y$. This yields 
   \begin{align}
      \sum_{h \in \cH} \frac{1}{|\cH|} \tr\big\{ \id_K \otimes \tau_D^{-1}  \omega_{KD|S^H=h}^2 \big\} 
      &= \sum_{x,y \in \{0,1\}^n} \frac{1}{|\cH|} \sum_{h \in \cH} 1 \{ h(x) = h(y) \} \tr\big\{ \tau_D^{-1} \sigma_{D \land X = x} \sigma_{D \land X = y} \big\} \\
      &=  \sum_{x \in \{0,1\}^n} \tr\big\{ \tau_D^{-1} \sigma_{D \land X = x}^2 \big\}\, + \!\! \sum_{x,y \in \{0,1\}^n \atop x \neq y} \!\! 2^{-\ell} \tr\big\{ \tau_D^{-1} \sigma_{D \land X = x} \sigma_{D \land X = y} \big\} \\
      &= \big( 1 - 2^{-\ell} \big) \tr \big\{ \big( \id_X \otimes \tau_D^{-1} \big) \sigma_{XD}^2 \big\} + 2^{-\ell} \tr\{ \tau_D^{-1} \sigma_D^2 \} \,.
   \end{align}
   Bounding $1 - 2^{\ell} \leq 1$ and plugging this into~\eqref{eq:secondone}, we find
   \begin{align}
     \big\|\omega_{KS^HD} - \chi_K \otimes \omega_{S^HD} \big\|_{\tr} \leq \frac12 \sqrt{ 2^{\ell} \tr \big\{ \big( \id_X \otimes \tau_D^{-1} \big) \sigma_{XD}^2 \big\} }.  \label{eq:thirdone}
   \end{align}
   Finally, due to the operator anti-monotonicity of the inverse and the definition of $\tau_D$, we have $\id_X \otimes \tau_D^{-1} \leq 2^{-H_{\min}(X|D)} \sigma_{XD}^{-1}$. Combined with~\eqref{eq:thirdone} this yields the desired result.
\end{proof}

Let us restate the desired inequality for the convenience of the reader.

\begin{prop10}
Let $\sigma_{XD} \in \cSsub(XD)$ be a classical on $X$ and let $\cH$ be a universal$_2$ family of hash functions from $\mathcal{X} = \{0,1\}^n$ to $\mathcal{K} = \{0,1\}^\ell$. Moreover, let $\rho_{S^{H}} = \frac{1}{|\cH|} \sum_{h \in \cH} \proj{h}_{S^H}$ be fully mixed. Then,
\begin{align}
\big\|\omega_{KS^HD} - \chi_K \otimes \omega_{S^HD} \big\|_{\tr} \leq \frac12 2^{-\frac{1}{2}\left( H_{\min}^{\eps}(X|D)_{\sigma} - \ell \right)} + 2 \eps
\end{align}
where $\chi_K$ and $\omega_{KS^HD}$ are define as in Lemma~\ref{lm:leftover}.
\end{prop10}%

\begin{proof}
  Let $\tilde{\sigma}_{XD} \in \cSsub(XD)$ be a sub-normalized state such that $H_{\min}^{\eps}(X|D)_{\sigma} = H_{\min}(X|D)_{\tilde{\sigma}}$ and $P(\tilde{\sigma}_{XD}, \sigma_{XD}) \leq \eps$. Without loss of generality we can assume that $\tilde{\sigma}_{XD}$ is classical on $X$.
  Now, the Lemma~\ref{lm:leftover} yields the inequality
  \begin{align}
    \|\tilde{\omega}_{KS^HD} - \chi_K \otimes \tilde{\omega}_{S^HD}\|_{\tr} \leq \frac12 \sqrt{\tr\{\tilde{\sigma}_{XD}\}} \cdot 2^{-\frac{1}{2}\left( H_{\min}(X|D)_{\tilde{\sigma}} - \ell \right)} \leq \frac12 \cdot 2^{-\frac{1}{2}\left( H_{\min}^{\eps}(X|D)_{\sigma} - \ell \right)}\, ,
  \end{align}
  where we constructed $\tilde{\omega}_{KS^HD} = \tr_{X}\big( \mathcal{E}_{f}(\tilde{\sigma}_{XD} \otimes \rho_{S^{H}}) \big)$ and bounded $\tr\{\tilde{\sigma}_{XD}\} \leq 1$ to arrive at the second inequality. Using the monotonicity of the purified distance under CPTP maps we conclude that $P(\tilde{\omega}_{S^HD}, \omega_{S^HD} ) \leq P(\tilde{\omega}_{KS^HD}, \omega_{KS^HD} ) \leq \eps$.
  Finally, exploiting the triangle inequality for the trace norm we find
  \begin{align}
    \|{\omega}_{KS^HD} - \chi_K \otimes {\omega}_{S^HD}\|_{\tr} \leq 2 \eps + \|\tilde{\omega}_{KS^HD} - \chi_K \otimes \tilde{\omega}_{S^HD}\|_{\tr} \,.
  \end{align}
\end{proof}

\end{document}